\newcommand{\arxiv}[1]{\href{http://arxiv.org/abs/#1}{\texttt{arXiv:#1}}}
\newtheorem{lemma}{Lemma}
\newtheorem{theorem}{Theorem}
\newtheorem{proposition}{Proposition}
\newtheorem{corollary}{Corollary}
\newtheorem{remark}{Remark}
\newcommand{\R}{\mathbb{R}}
\newcommand{\Z}{\mathbb{Z}}
\newcommand{\expec}{\mathbb{E}}
\newcommand{\prob}{\mathbb{P}}
\newcommand{\rmd}{\mathrm{d}}
\newcommand{\qseq}{\mathbf{q}}
\newcommand{\map}{\mathfrak{m}}
\newcommand{\emap}{\mathfrak{e}}
\newcommand{\umap}{\mathfrak{u}}
\newcommand{\bmap}{\mathfrak{b}}
\newcommand{\rootedge}{e_{\mathrm{r}}}
\newcommand{\rootface}{f_{\mathrm{r}}}
\newcommand{\hopconst}{\mathcal{H}}
\newcommand{\maps}{\mathcal{M}}
\newcommand{\bmaps}[1]{\mathcal{M}^{(#1)}}
\newcommand{\edges}{\mathcal{E}}
\newcommand{\vertices}{\mathcal{V}}
\newcommand{\faces}{\mathcal{F}}
\newcommand{\lenconst}{\mathcal{L}}
\newcommand{\zp}{z^+}
\newcommand{\zd}{z^\diamond}
\newcommand{\compactfrac}[2]{{\textstyle\frac{#1}{#2}}}
\begin{document}

\title{The peeling process of infinite Boltzmann planar maps}

\author{Timothy Budd}
\dedicatory{
Niels Bohr Institute, University of Copenhagen\\
Blegdamsvej 17, 2100 Copenhagen \O , Denmark.\\
\vspace{3mm}
email: {\tt budd@nbi.dk}
}

\begin{abstract}
\noindent
We start by studying a peeling process on finite random planar maps with faces of arbitrary degrees determined by a general weight sequence, which satisfies an admissibility criterion.
The corresponding perimeter process is identified as a biased random walk, in terms of which the admissibility criterion has a very simple interpretation.
The finite random planar maps under consideration were recently proved to possess a well-defined local limit known as the infinite Boltzmann planar map (IBPM).
Inspired by recent work of Curien and Le Gall, we show that the peeling process on the IBPM can be obtained from the peeling process of finite random maps by conditioning the perimeter process to stay positive.
The simplicity of the resulting description of the peeling process allows us to obtain the scaling limit of the associated perimeter and volume process for arbitrary regular critical weight sequences.

\end{abstract}
\maketitle

\section{Introduction}

Discrete random surfaces encoded by maps have been subject of intensive research for a long time, in the mathematical literature mainly because of their rich combinatorial and geometric structure, and in the physics literature because of their relevance to two-dimensional quantum gravity, string theory, and Feynman diagrams.
Many exact enumerative properties of random (planar) maps, starting with the seminal work of Tutte in the sixties \cite{tutte_census_1963}, were known before methods were discovered to study their intrinsic geometric properties.
Perhaps the first significant progress in this direction was made in the early nineties, when Watabiki introduced a \emph{peeling procedure} of a random triangulated surface \cite{watabiki_construction_1995}.
This procedure shortly after lead to the first determination of a two-point function of large random triangulations \cite{ambjorn_scaling_1995}, which captures the probability distribution of the geodesic distance between two random vertices.\footnote{The derivation of the two-point function in \cite{ambjorn_scaling_1995} has always been regarded as heuristic, leading to an approximate probability distribution of the graph distance between random vertices that only becomes exact in the scaling limit. Only recently in \cite{ambjorn_multi-point_2014} it was recognized that the expressions in \cite{ambjorn_scaling_1995} are already exact in the discrete setting, but they do not correspond to the graph distance of the triangulation but to a first-passage time on the dual cubic map, which agree up to rescaling in the scaling limit.}
This two-point function was shown to possess a scaling limit if the distances were rescaled by the fourth root of the volume, leading to the (now well-established) conjecture that large random surfaces have a Hausdorff dimension of four.

In the late nineties Schaeffer \cite{schaeffer_conjugaison_1998}, inspired by earlier work by Cori and Vauquelin \cite{cori_planar_1981}, discovered a bijection between quadrangulations and labeled trees that was tailored to the study of graph distances.
Soon this bijection and its many generalizations, particularly the Bouttier-Di~Francesco-Guitter bijection \cite{bouttier_planar_2004}, became the tool of choice for studying geometry of random surfaces, culminating in the proof of the convergence in the Gromov-Hausdorff sense of many classes of large random planar maps to a universal random continuum geometry on the sphere, known as the \emph{Brownian map}.
Nevertheless, the peeling procedure in various forms received renewed interest in the mathematical literature, starting with its formalization in the setting of the uniform infinite planar triangulation (UIPT, \cite{angel_uniform_2003}) by Angel in  \cite{angel_growth_2002}.
It was recognized that peeling could not only be used to study geometry, but that due to its important Markov properties it could be tailored towards studying many other aspects of random maps, including various forms of percolation \cite{angel_growth_2002,angel_percolations_2013,menard_percolation_2014,ambjorn_multi-point_2014,richier_universal_2014,bjornberg_site_2015}, random walks \cite{benjamini_simple_2013}, and some aspects of their conformal structure \cite{curien_glimpse_2014}.

In a recent paper \cite{curien_scaling_2014} Curien and Le Gall revisited the geometric study of the peeling process on the UIPT (and also on the uniform infinite planar quadrangulation (UIPQ)), which was initiated by Angel in \cite{angel_growth_2002}.
They found that the perimeter process, which keeps track of the length of the (simple) boundary of the explored region during a peeling process of the UIPT, takes the form of a random walk conditioned to stay positive.
The unconditioned random walk is easily seen to scale to a index-3/2 stable process with only negative jumps.
The perimeter process can be obtained from this random walk by a simple Doob transform (or $h$-transform), and it follows from an invariance principle \cite{caravenna_invariance_2008} that it scales to the aforementioned stable process conditioned to stay positive.
The convergence is then extended to include the volume process, which keeps track of the volume of the explored region. 

The goal of this paper is to extend these results to the more general setting of $\qseq$-Boltzmann planar maps associated to arbitrary weight sequences $\qseq = (q_1,q_2,\ldots)$.
A (finite) $\qseq$-Boltzmann planar map is a random planar map with arbitrary face degrees, where the probability of picking a particular map is proportional to the product of over all faces $f$ of the weight $q_{\deg(f)}$ depending on the degree $\deg(f)$ of $f$.
Whenever this procedure leads to a well-defined probability distribution on (finite) planar maps, the sequence $\qseq$ is said to be \emph{admissible}. 
We will explicitly derive the law of the perimeter process for (pointed) $\qseq$-Boltzmann planar maps associated to such admissible weight sequences.
Like in the case of the UIPT, the perimeter process is realized as a Doob transform of a random walk which depends on $\qseq$.
Explicit conditions for a weight sequences to be admissible were worked out by Miermont in \cite{miermont_invariance_2006}, using the Bouttier-Di~Francesco-Guitter bijection \cite{bouttier_planar_2004}.
As we will see these quite non-trivial conditions can be recast into very simple conditions on the law of the random walk associated to the perimeter process.
In fact, in some sense the $\qseq$-Boltzmann planar map is more naturally characterized by this law than by the weight sequence itself.
In particular, the critical weight sequences $\qseq$, which are basically sequences that are on the border of being admissible and which are the appropriate ones to study large map limits, are related to random walks that do not drift.

It has recently been shown in \cite{stephenson_local_2014} for general critical weight sequence $\qseq$, and earlier in the more restricted setting of bipartite Boltzmann planar maps in \cite{bjornberg_recurrence_2014}, that $\qseq$-Boltzmann planar maps possess a well-defined local limit, known as the infinite $\qseq$-Boltzmann planar map ($\qseq$-IBPM), as the conditioned number of vertices is taken to infinity.
We will show that our peeling process naturally extends to the setting of the $\qseq$-IBPM, and that the associated perimeter process, like in the case of the UIPT, is obtained by conditioning the aforementioned random walk to stay positive.
Based on the supposed universality of the Brownian map, and its local limit as the Brownian plane, one expects the scaling limit of the perimeter process to agree with that of the UIPT.
Indeed, this will turn out to be the case for a large class of critical weight sequences $\qseq$, namely the \emph{regular critical} ones, and we will obtain explicit expressions for the scaling constants involved.

Interestingly, our description of the peeling process also covers $\qseq$-IBPMs associated to \emph{non-regular critical} weight sequences $\qseq$ which are heavy-tailed.
The related finite $\qseq$-Boltzmann planar maps were studied by Le Gall and Miermont in \cite{le_gall_scaling_2011}, and the metric spaces associated to the graph distances were shown to converge in the Gromov-Hausdorff sense to random continuum geometries distinct from the Brownian map.
These continuum geometries are characterized by holes corresponding to macroscopic faces in the planar maps and lead to a Hausdorff dimension strictly smaller than four.
We will not study the scaling limits of the non-regular $\qseq$-IBPMs in any detail, but we do notice that the associated perimeter processes seem to converge to stable processes (conditioned to stay positive) with an index smaller than $3/2$.
A natural next step, which we leave to future work, is to consider the peeling process in the setting of random planar maps coupled to $O(n)$ models which are known to be closely related to heavy-tailed $\qseq$-Boltzmann planar maps 
\cite{le_gall_scaling_2011,borot_recursive_2012,borot_more_2012}.

The simple characterization of the perimeter (and volume) process of the $\qseq$-IBPM opens up the possibility to study various aspects of its geometry, and percolation imposed on it.
Indeed, the results presented in this paper are independent of the chosen peel algorithm, meaning roughly that it does not matter in what direction or what order one chooses to explore the $\qseq$-IBPM using the peeling procedure.
By choosing appropriate peeling algorithms one can for instance discover the $\qseq$-IBPM by balls\footnote{Or, rather, the hulls of the balls, which are obtained from the balls by including the finite components of their complements, since during a peeling process the explored region is always kept simply-connected.} of increasing geodesic radius with respect to different metrics associated to the planar map, including graph distance, first-passage time and dual graph distance.
The known perimeter and volume processes then give information about the perimeter and volume of these geodesic balls.
Indeed, this was done for the UIPT in \cite{curien_scaling_2014} in the case of the graph distance and first-passage time (see also \cite{ambjorn_multi-point_2014} for the latter), and convergence was found to similar processes defined directly in terms of the Brownian plane \cite{curien_hull_2014}. 
We postpone similar applications of the peeling process in the setting of the $\qseq$-IBPM to a future paper, in which in particular we will use the technology in this paper to derive simple formulas for the relative scaling constants associated to the various metrics.

We should mention that the results in this paper should not be viewed as direct generalizations of the results by Curien and Le Gall.
First of all, the Boltzmann planar maps considered here will always be of \emph{type 1} meaning that they are allowed to contain multiple edges and loops.
More importantly, the \emph{lazy} peeling process considered here is slightly different from the (\emph{simple}) peeling process considered in \cite{curien_scaling_2014}.
The difference is that we do not require the \emph{frontier}, i.e. boundary of the explored region in the planar map, to be a simple closed curve, as it is in the simple peeling process.
We call this peeling process \emph{lazy} because the determination of the adjacency of a face incident to the frontier is postponed until it is needed in order to perform further exploration.
This means that it is possible, contrary to the simple peeling process, that an attempted peeling step does not lead to a newly explored face.
It turns out that this lazy peeling process is much simpler than the simple peeling process for general weight sequences, and corresponds in fact to a generalization of the original peeling process of Watabiki \cite{watabiki_construction_1995}.
As we will see the simplicity manifests itself in the existence of a martingale for the aforementioned random walk associated to the (lazy) perimeter process that is nearly universal, meaning that it depends on the weight sequence only through a single parameter $r\in(-1,1]$ (which moreover is always $r=1$ in the bipartite case).
It is exactly this martingale that can be used to bias the random walk to stay positive.

The paper is organized as follows. 
In section \ref{sec:mapsandpeeling} we introduce the lazy peeling process on finite pointed planar maps and on infinite maps, and in section \ref{sec:boltzmannmaps} we determine its law when applied to pointed $\qseq$-Boltzmann planar maps.
Some of the proofs depend explicitly on Miermont's conditions in \cite{miermont_invariance_2006} for the admissibility of weight sequences.
Since they involve different notation and appeal to the Bouttier-Di~Francesco-Guitter bijection \cite{bouttier_planar_2004}, which is not of central importance to the peeling process, we have decided to gather those proofs in appendix \ref{sec:app}.
In section \ref{sec:volume} we take a brief look at the volume process, while in section \ref{sec:ibpm} we derive and prove the law of the perimeter process of the $\qseq$-IBPM.
In sections \ref{sec:scaleperim} and \ref{sec:scalevol} we derive the scaling limits of the perimeter and volume processes on the $\qseq$-IBPM for regular critical $\qseq$, thereby reproducing the results of \cite{curien_scaling_2014}, Section 3, in the setting of the $\qseq$-IBPM.
We finish with some examples in Section \ref{sec:examples}, showing that the formulation in terms of random walks leads to an efficient way to calculate explicit scaling constants. 

\section{Planar maps and the lazy peeling process}\label{sec:mapsandpeeling}

A planar map $\map$ is a connected graph that is properly embedded in the sphere, i.e. without edge crossings, and two such maps are considered identical when they are related by a homeomorphism of the sphere. 
The sets of vertices, (unoriented) edges, and faces of $\map$ are denoted by $\vertices(\map)$, $\edges(\map)$, and $\faces(\map)$, respectively.
In addition we will denote by $\vec{\edges}(\map)$ the set of oriented edges of $\map$. 
Unless indicated otherwise we will always consider \emph{rooted} planar maps, which are planar maps with a distinguished oriented edge $\rootedge\in\vec{\edges}(\map)$, the \emph{root edge}.
The face on the left-hand side of the root edge $\rootedge$ is called the \emph{root face} $\rootface\in\faces(\map)$.
We denote the set of all (finite) rooted planar maps by $\maps$, while $\bmaps{k}$ denotes the rooted planar maps with root face degree $\deg(\rootface)=k$.
In addition we often consider \emph{pointed} planar maps $(\map,v)\in\maps_\bullet$, which are rooted planar maps with a distinguished vertex $v\in\vertices(\map)$.

Given a planar map $\map\in\maps$, let a \emph{closed path} $F = (e_i)_i$ be a finite sequence of oriented edges $e_i\in\vec{\edges}(\map)$ such that for each pair $(e_i,e_{i+1})$ of (cyclically) consecutive edges $e_{i+1}$ starts at the endpoint of $e_i$.
We say $F$ is \emph{non-intersecting} if $F$ contains no duplicates and whenever it visits a single vertex $v\in\vertices(\map)$ multiple times the sequence of edges of $e$ starting or ending at $v$ should be ordered counterclockwise around $v$ in the embedding.
A non-intersecting closed path $F$ naturally partitions the faces $\faces(\map)$ into those \emph{inside} and \emph{outside} $F$, where we say a face $f$ is \emph{inside} $F$ when $F$ winds around $f$ in counterclockwise direction, and \emph{outside} $F$ otherwise.

\begin{figure}[t]
\begin{center}
\includegraphics[width=.8\linewidth]{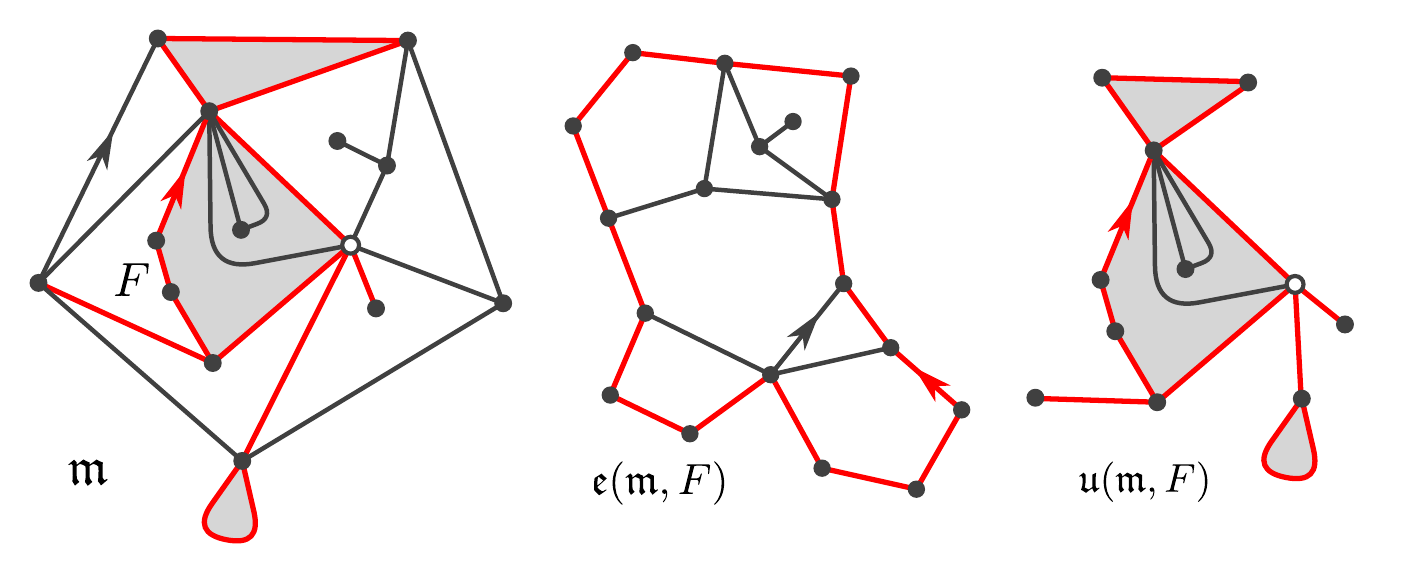}
\end{center}
\caption{An example of a pointed, rooted planar map $\map$ with a frontier $F$. The edges appearing in the frontier are colored red, and the first edge is indicated by an arrow. The second and third figure show the corresponding explored map $\emap(\map,F)$ and unexplored map $\umap(\map,F)$. } %
\label{fig:frontier}
\end{figure}

A non-intersecting path $F$ on a pointed planar map $(\map,v)\in\maps_\bullet$ is said to be a \emph{frontier} when the following conditions are satisfied: the root face $\rootface$ is inside $F$, the vertex $v$ is either on $F$ or incident to a face outside $F$, and any two faces inside $F$ are connected by a path in the dual map that does not cross $F$.
See figure \ref{fig:frontier} for an example.
The \emph{explored map} $\emap(\map,F)\in\maps$ is a rooted planar map with a distinguished oriented edge $e$ obtained from $\map$ by cutting it open along $F$ and replacing all faces outside $F$ by a single face, called the \emph{outer face}, and $e$ is taken to be the oriented edge corresponding to the first entry in the frontier.
Two faces inside $F$ that share an edge belonging to $F$ are not adjacent to each other along this edge in $\emap(\map,F)$.
Similarly, if the $F$ visits a vertex multiple times it gives rise to a corresponding number of distinct vertices on the boundary of the outer face.
This means that $\emap(\map,F)\in\maps$ is not quite a submap of $\map$ in the usual sense.
The \emph{unexplored map} $\umap(\map,F)\in\maps_\bullet$ is the submap of $\map$ obtained from $\map$ by removing all vertices and edges that are not on $F$ and are not incident to a face outside $F$.
We take $\umap(\map,F)$ to be rooted at the first oriented edge of the frontier.
Notice that both the outer face of the explored map $\emap(\map,F)$ and the root face of the unexplored map $\umap(\map,F)$ have degree equal to the length $|F|$ of the frontier.
The former is necessarily a simple face, i.e. all its corners correspond to distinct vertices, while the latter is in general not.
By convention we allow $F=\emptyset$, in which case we set $\emap(\map,\emptyset)=\map$ and $\umap(\map,\emptyset)=\{v\}$, i.e. the planar map consisting of just the vertex $v$.

\begin{figure}[t]
\begin{center}
\includegraphics[width=.8\linewidth]{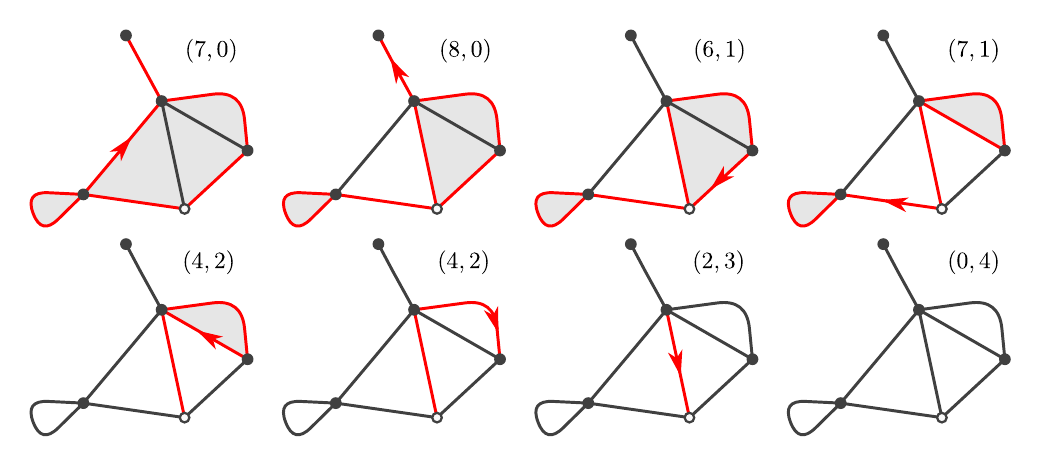}
\end{center}
\caption{An example of a lazy peeling process on a planar map with root face degree 7. The marked vertex is shown as an open dot, while the red edges represent the frontier. The peel edges, which are indicated by the red arrows, are chosen arbitrarily at each step. The numbers in parentheses correspond to the perimeter and volume $(l_i,V_i)_{i\geq 0}$.} %
\label{fig:diskpeeling}
\end{figure}

A \emph{lazy peeling process} on a pointed planar map $(\map,v)\in\maps_\bullet$ is a sequence $(F_i)_{i\geq0}$ of frontiers on $\map$ satisfying the following property.
For each $i\geq 0$, if $F_i=\emptyset$ then $F_{i+1}=\emptyset$, and otherwise the frontier $F_{i+1}$ is obtained from $F_{i}$ by \emph{peeling} an oriented edge $e\in F_i$, which we call the \emph{peel edge}.
We distinguish two types of peeling depending on whether the face on the right-hand side of $e$ is inside or outside $F_i$.
If it is inside, $e$ is incident on both sides to the root face in the unexplored map $\umap(\map,F_i)$ and therefore its removal from $\umap(\map,F_i)$ leads to two disconnected components, one of which contains the marked vertex $v$.
Then $F_{i+1}$ is the (possibly empty) subsequence of $F_i$ of edges that are contained in the latter component.
If the face $f$ on the right-hand side of $e$ is outside $F_i$, $F_{i+1}$ is obtained from $F_i$ by replacing its entry $e$ by the sequence of edges of $f$ in counterclockwise order starting at the starting point of $e$ and ending  at the endpoint of $e$.
We refer to the two types of peeling steps as \emph{pruning the frontier} and \emph{exploring a new face}, respectively.

It is easy to see that the number of edges $|\edges(\umap(\map,F_i))|$ in the unexplored map strictly decreases at each step as long as $F_i \neq \emptyset$.
Therefore, after a finite number of steps the frontier will vanish, meaning that the map $\map$ is fully explored.
Notice also that the lazy peeling process $(F_i)_{i\geq0}$ is completely fixed by the initial frontier $F_0$, which is usually taken to be the contour of the root face $\rootface$ starting at the root edge $\rootedge$, together with an algorithm to select a peel edge from the frontier $F_{i}$ at each step $i$.

Two integer-valued processes associated to the lazy peeling process $(F_i)_{i\geq 0}$ will play an important role in the rest of this paper.
The first is the \emph{perimeter process} $(l_i)_{i\geq0}$ which is simply defined as the length $l_i = |F_i|$ of the frontier.
The second is the \emph{volume process} $(V_i)_{i\geq 0}$ which counts the number of fully explored vertices after $i$ steps, i.e. $V_i = |\vertices(\emap(\map,F_i))| - |F_i|$ is the number of vertices in the explored map which are not incident to the outer face.
If $F_{i+1}$ is obtained from $F_i$ by exploring a new face of degree $k$, then $l_{i+1} = l_i + k - 2 \geq l_i-1$ and $V_{i+1}=V_i$.
If on the other hand $F_{i+1}$ is obtained by pruning $F_i$ then $l_{i+1} \leq l_i - 2$ and $V_{i+1} = V_i + v$ where $v\geq 1$ is the number of vertices in the ``discarded'' component of the unexplored map $\umap(\map,F_i)$ after removal of the peel edge.

The notion of lazy peeling can be easily extended to the case of an \emph{infinite rooted planar map} $\map_\infty\in\maps_{\infty}$.
We define the latter to be an embedding of an infinite, locally finite graph, i.e. a graph for which all vertices have finite degree, in $\R^2$ such that its complement is a disjoint union of simply connected, bounded domains.
Moreover we need to assume that $\map_\infty$ is \emph{one-ended}, meaning that the complement in $\map_\infty$ of each finite subset of vertices contains a unique infinite connected component, and that all faces have finite degree.

\begin{figure}[t]
\begin{center}
\includegraphics[width=.8\linewidth]{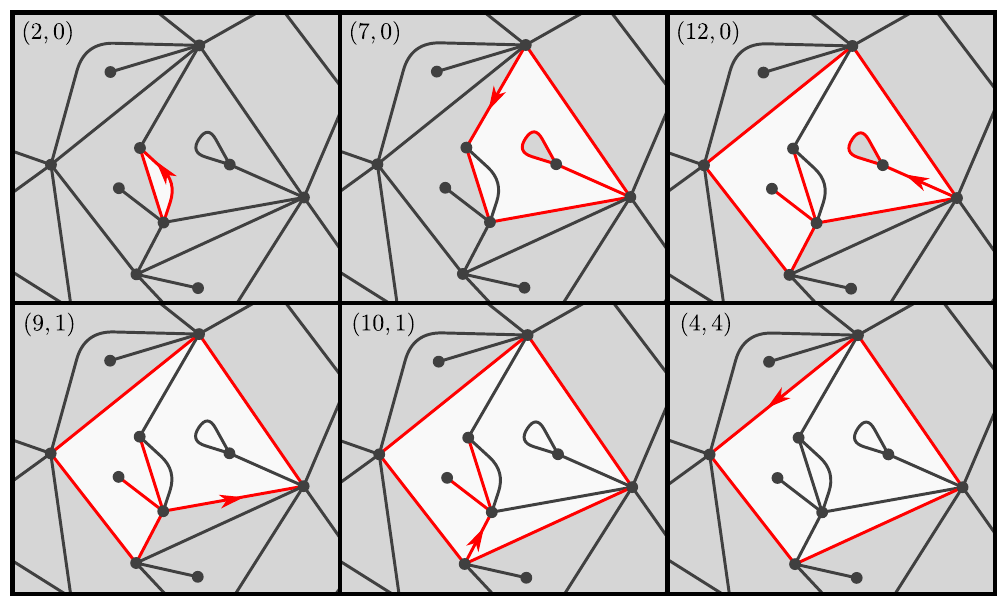}
\end{center}
\caption{A lazy peeling process of an infinite planar map. The numbers in parentheses again correspond to the perimeter and volume process $(l_i,V_i)_{i\geq 0}$. } %
\label{fig:infinitemap}
\end{figure}

We can define a frontier $F$ on $\map_\infty$ in the same way as before, except that we replace the condition that the marked vertex is outside $F$ by the condition that there are finitely many faces inside $F$.
The explored map $\emap(\map_\infty,F)$ is still a rooted planar map with a distinguished outer face of degree $|F|$, while the unexplored map $\umap(\map_\infty,F)$ is now an infinite rooted planar map with root face degree $|F|$.
The lazy peeling process is again defined as a sequence of frontiers $(F_i)_{i\geq 0}$ where the frontier $F_{i+1}$ is obtained from $F_i$ by peeling as before.
The only difference is that, when the frontier is pruned, the appropriate disconnected component of $\umap(\map,F_i)$ (after removal of the peel edge) is not selected to contain the marked vertex but to be infinite.
Notice that in particular the frontier can never become empty, i.e. $l_i = |F_i| \geq 1$ for all $i\geq 0$. 
See figure \ref{fig:infinitemap} for an example.

One should keep in mind that it is possible, with particular peeling algorithms, that a peeling process of an infinite planar map does not explore the full map.
For the random maps discussed below we do expect that the full map is almost surely explored independently of the peeling algorithm. 
However, proving this in general goes beyond the scope of this paper (see \cite{curien_scaling_2014}, Corollary 7, for a proof in the case of the ``simple'' peeling process on the UIPT).

\section{Boltzmann planar maps}\label{sec:boltzmannmaps}

We would like to apply the peeling process introduced in the previous section to a random pointed planar map $\map\in\maps_\bullet^{(l)}$ with fixed root face degree $l$.
A quite general probability distribution on $\maps_\bullet^{(l)}$ (and on $\maps^{(l)}$) is obtained by assigning a weight $w_\qseq(\map)$ to $\map$ given by
\begin{equation}\label{eq:weight}
w_\qseq(\map) := \prod_{f\in\faces(\map)\setminus\{\rootface\}} q_{\deg(f)},
\end{equation}
where the product is over all non-root faces and $\qseq=(q_1,q_2,\ldots)$ is a \emph{weight sequence} of non-negative real numbers.
To avoid degenerate situations we will require that at least one of the $q_k$'s with $k\geq 3$ is positive.
We define the \emph{disk function} $W^{(l)}(\qseq)$ and \emph{pointed disk function} $W_\bullet^{(l)}(\qseq)$ as
\begin{equation}\label{eq:diskfunctions}
W^{(l)}(\qseq) := \sum_{\map\in\bmaps{l}} w_\qseq(\map),\quad\quad W_\bullet^{(l)}(\qseq) := \sum_{\map\in\bmaps{l}_\bullet} w_\qseq(\map),
\end{equation}
where by convention we include the single vertex map in $\maps^{(0)}$ and $\maps_\bullet^{(0)}$ such that $W^{(0)}(\qseq) = W_\bullet^{(0)}(\qseq) = 1$.
Notice that in general $W^{(l)}(\qseq) \leq W_\bullet^{(l)}(\qseq)$. 

In order for the weights $w_\qseq$ to give rise to a probability distributions on $\maps^{(0)}$ and $\maps_\bullet^{(0)}$ we need $W_\bullet^{(l)}(\qseq)<\infty$, which puts a non-trivial restriction on the weight sequence $\qseq$.
If $q_k=0$ for all odd $k$ the pointed disk function $W_\bullet^{(l)}(\qseq)$ has support on bipartite planar maps and we call $\qseq$ \emph{bipartite}, and \emph{non-bipartite} otherwise.
It is not hard to see that the only way $W_\bullet^{(l)}(\qseq)$ can vanish is when $\qseq$ is bipartite and $l$ is odd.
As we will see below, apart from this degenerate situation where $W_\bullet^{(l)}(\qseq)=0$, the condition that $W_\bullet^{(l)}(\qseq) < \infty$ only depends on $\qseq$ and not on $l$.
Therefore, it suffices to consider the case $l=2$, and following \cite{miermont_invariance_2006} we will call $\qseq$ \emph{admissible} if $W_\bullet^{(2)}(\qseq) < \infty$.
To see why the case $l=2$ is special, notice that there is a natural bijection between pointed planar maps with root face degree $2$ (and more than one face) and pointed rooted planar map with arbitrary root face degree, corresponding to the operation of gluing the edges of the root face.
Hence we may identify
\begin{equation}\label{eq:partitionfunction}
W^{(2)}_\bullet(\qseq) - 1 = \sum_{\map\in\maps_\bullet}\prod_{f\in\faces(\map)}q_{\deg(f)}, 
\end{equation}
which is the \emph{partition function} studied in \cite{miermont_invariance_2006}.

In the following we will keep the dependence on $\qseq$ implicit whenever the considered weight sequence is clear from the context and simply write $W_\bullet^{(l)}=W_\bullet^{(l)}(\qseq)$.
We will also denote by $W(z)$ and $W_\bullet(z)$ the generating functions
\begin{equation}
W(z):= \sum_{l=0}^{\infty} W^{(l)} z^{-l-1}, \quad\quad W_\bullet(z):= \sum_{l=0}^{\infty} W_\bullet^{(l)} z^{-l-1},
\end{equation} 
which, as we will see below, converge for $|z|$ large enough.

\subsection{Lazy peeling of Boltzmann planar maps}
Let $(\map,v)\in\maps_\bullet^{(l)}$ be a pointed Boltzmann planar map and let $F_0$ be the frontier given by the contour of the root face, which has length $|F_0|=l$.
Recall that a lazy peeling process $(F_i)_{i\geq 0}$ is characterized by a \emph{peel algorithm} that selects a peel edge at each step $i$.
The algorithm can be deterministic or probabilistic, but in either case we restrict the choice of peel edge at step $i$, i.e. the step $F_i \to F_{i+1}$, only to depend on the explored map $\emap(\map,F_i)$.
In this case the lazy peeling satisfies the following important Markov property. 

\begin{proposition}
For any $i\geq 0$, conditional on the explored map $\emap(\map,F_i)$ after $i$ steps in the lazy peeling process with a peel algorithm as above, the law of the unexplored map $\umap(\map,F_i)\in\maps_\bullet^{(l_i)}$ depends only on the length $l_i=|F_i|$ of the frontier, and its distribution is given by 
\begin{equation}
\maps_\bullet^{(l_i)}\to\R : \map' \to \frac{w_\qseq(\map')}{W_\bullet^{(l_i)}}.
\end{equation}
\end{proposition}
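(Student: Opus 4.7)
The plan is to prove the claim by induction on $i$. The base case $i = 0$ is immediate: $F_0$ is the deterministic contour of the root face, $\umap(\map, F_0)$ is identified with $\map \in \maps_\bullet^{(l)}$ (with its root face re-interpreted as that of $\umap$), and the conclusion is then just the definition of the pointed $\qseq$-Boltzmann law.

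For the inductive step, I would condition on $\emap(\map, F_i) = M_{\text{expl}}$ so that, by the induction hypothesis, $\umap(\map, F_i)$ is Boltzmann on $\maps_\bullet^{(l_i)}$ and the peel edge $e$ is determined by $M_{\text{expl}}$ alone. The crux of the argument is a multiplicative factorization of $w_\qseq$ under a single peeling step. If the step explores a new face $f$ of degree $k$ across $e$, then $\umap(\map, F_i)$ is obtained from $\umap(\map, F_{i+1})$ by gluing a fresh $k$-gon along $e$, so
\[
w_\qseq(\umap(\map, F_i)) \;=\; q_k\, w_\qseq(\umap(\map, F_{i+1})).
\]
If the step instead prunes the frontier, then $e$ is a bridge in $\umap(\map, F_i)$ whose removal splits the map into the marked component $\umap(\map, F_{i+1}) \in \maps_\bullet^{(l_{i+1})}$ and a discarded unpointed component $C \in \maps^{(l_i - l_{i+1} - 2)}$, and
\[
w_\qseq(\umap(\map, F_i)) \;=\; w_\qseq(C)\, w_\qseq(\umap(\map, F_{i+1})).
\]
In either case the extra factor ($q_k$ or $w_\qseq(C)$) depends only on data already contained in $\emap(\map, F_{i+1})$. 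Applying Bayes to the induction hypothesis, this factor cancels between numerator and denominator of $\prob(\umap(\map, F_{i+1}) = \map' \mid \emap(\map, F_{i+1}) = M'_{\text{expl}})$, leaving a probability proportional to $w_\qseq(\map')$ on $\maps_\bullet^{(l_{i+1})}$, whose normalizing constant must be $W_\bullet^{(l_{i+1})}$, closing the induction.

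The main obstacle is the bookkeeping in the pruning case: correctly identifying which component of $\umap(\map, F_i) \setminus \{e\}$ contains the marked vertex, checking that the discarded piece $C$ is a genuine rooted planar map of root-face degree $l_i - l_{i+1} - 2$ (obtained by splitting the single root face of $\umap(\map, F_i)$ along the bridge $e$, so that neither of the resulting two root faces gets counted in $w_\qseq$), and verifying the multiplicativity of $w_\qseq$ across this gluing. The exploration case is essentially automatic once the single face weight $q_k$ is isolated. Admissibility of $\qseq$ enters only implicitly, through the finiteness of the pointed disk functions $W_\bullet^{(l)}$, without which the Boltzmann measures themselves would be ill-defined.
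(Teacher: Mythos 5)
Your inductive argument is correct, but it takes a genuinely different route from the paper's. The paper's proof is a direct, one-shot bijection: for a fixed explored map $\emap$ with outer face degree $l$, the map $(\map,F)\mapsto\umap(\map,F)$ is a bijection from $\{(\map,F):\emap(\map,F)=\emap\}$ onto $\maps_\bullet^{(l)}$ (the inverse is gluing $\map'$ into the outer face of $\emap$), and the weight factors as $w_\qseq(\map)=w_0\,w_\qseq(\umap(\map,F))$ with $w_0$ depending only on $\emap$; the conditional Boltzmann law then drops out in one line. Your proof instead builds the same conclusion step by step, tracking the multiplicative factorization of $w_\qseq$ across a single peel and cancelling the extra factor ($q_k$ in the exploration case, $w_\qseq(C)$ in the pruning case) via Bayes. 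Both hinge on the same fact — multiplicativity of $w_\qseq$ under gluing — and both correctly use the assumption that the peel edge is $\emap(\map,F_i)$-measurable. The trade-off: the paper's bijection is shorter and works uniformly for any frontier produced by such an algorithm in one stroke, whereas your induction makes the single-step transition kernel explicit, which is essentially the content of the display equations (\ref{eq:posprob}) and (\ref{eq:negprob}) that the paper derives separately right after. Your bookkeeping in the pruning case (that $e$ is a bridge in $\umap(\map,F_i)$, that the discarded component is an unpointed rooted map of root-face degree $l_i-l_{i+1}-2$, and that the root face of $\umap(\map,F_i)$ splits so neither resulting root face is counted by $w_\qseq$) is all accurate; the only small caveat worth stating explicitly is that $\emap(\map,F_{i+1})$ together with the (deterministic or externally randomized, map-independent) algorithm determines $\emap(\map,F_i)$, which is what lets you apply the induction hypothesis inside the Bayes computation.
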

\begin{proof}
For a fixed rooted planar map $\emap$ with a distinguished, simple outer face of degree $l$ the mapping $\umap : (\map,F) \to \umap(\map,F)$ determines a bijection
\begin{equation*}
\{(\map,F) : \map\in\maps_\bullet\text{, }F\text{ frontier, }\emap(\map,F) = \emap\} \to \maps_\bullet^{(l)}.
\end{equation*}
To see that this is a bijection, notice that its inverse is given by taking any pointed map $\map'\in \maps_\bullet^{(l)}$ and gluing its root face to the outer face of $\emap$ to obtain a pointed planar map with a frontier.
Notice that $w_\qseq(\map) = w_0 w_\qseq(\umap(\map,F))$, where $w_0$ is a constant independent of $\umap(\map,F)$.
Hence, if $\map$ is a $\qseq$-Boltzmann planar map and $F=F_i$ the frontier after $i$ steps in a peel algorithm, then the probability that $\umap(\map,F_i)=\map'$ for $\map'\in \maps_\bullet^{(l_i)}$ is $w_\qseq(\map')/W_\bullet^{(l_i)}$.
\end{proof}

In the light of this Markov property, it is often convenient to think of the peeling process not as a sequence of frontiers $(F_i)_{i\geq 0}$ on a random map $\map$, but as a sequence of explored maps $(\emap_i)_{i\geq 0}:=(\emap(\map,F_i))_{i\geq0}$.
We can easily determine the law of $\emap_{i+1}$ given $\emap_i$ and a peel edge $e$.
Recall that two types of peeling steps can occur: either a new face is explored or the frontier is pruned (see figure \ref{fig:peelstep}).
In the first case $\emap_{i+1}$ is obtained from $\emap_i$ by gluing a new face of degree $k$ to $e$, and $l_{i+1} = l_i + k - 2$.
It follows from the law of the unexplored map that this occurs with probability  
\begin{equation}\label{eq:posprob}
\prob(l_{i+1}-l_i=k-2 \geq -1 | \emap_i ) = q_k \frac{W_\bullet^{(l_i+k-2)}}{W_\bullet^{(l_i)}}.
\end{equation}
In the second case $\emap_{i+1}$ is obtained from $\emap_i$ by gluing $e$ to another edge $e'$ in the frontier, thereby splitting the outer face into two faces.
One of these two faces, say with degree $l$, is then \emph{filled in} with a (unpointed) Boltzmann planar map with the appropriate root face degree $l$, and $l_{i+1}=l_i - l - 2$.
By examining the Boltzmann weights involved one finds that this occurs with probability
\begin{equation}\label{eq:negprob}
\prob(l_{i+1}-l_i=-l-2\leq -2 | \emap_i ) = 2 W^{(l)}\frac{W_\bullet^{(l_i-l-2)}}{W_\bullet^{(l_i)}}.
\end{equation}

\begin{figure}[t]
\begin{center}
\includegraphics[width=.85\linewidth]{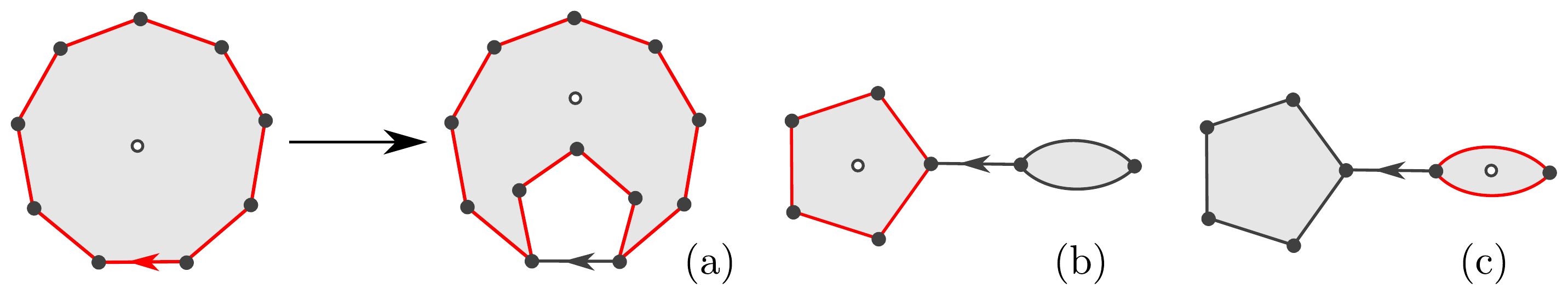}
\end{center}
\caption{Schematic depiction of three different outcomes of a (lazy) peeling step of a pointed planar map: either a new face is explored (a), or the frontier is pruned with the marked vertex to the left (b) or to the right (c) of the peel edge. The shaded faces with a dot represent the outer faces of the explored maps, while undotted shaded faces are to be filled in with (unpointed) Boltzmann planar maps with the appropriate root face degree. }%
\label{fig:peelstep}
\end{figure}

The fact that the probabilities (\ref{eq:posprob}) and (\ref{eq:negprob}) add up to one is equivalent to the \emph{loop equation}
\begin{equation}\label{eq:pointedloopeq}
W_{\bullet}^{(l)} = \sum_{k=1}^{\infty} q_k W_{\bullet}^{(l+k-2)}+2\sum_{l'=0}^{l-2}W^{(l')}W_{\bullet}^{(l-l'-2)},
\end{equation}
satisfied by the pointed disk function.
Similarly, by examining the face on the right-hand side of an unpointed planar map with root face degree $l$, one finds that the unpointed disk function satisfies a similar and well-known loop equation
\begin{equation}\label{eq:loopeq}
W^{(l)} =  \sum_{k=1}^{\infty} q_k W^{(l+k-2)} + \sum_{l'=0}^{l-2}W^{(l')}W^{(l-l'-2)}.
\end{equation}

It turns out that the pointed disk function $W_\bullet(z)$ has a universal form, that we will use as a starting point for our investigation.

\begin{proposition}\label{thm:pointeddisk}
Given an admissible weight sequence $\qseq=(q_1,q_2,\ldots)$, there exist real numbers $c_+>2$ and $-c_+\leq c_-<c_+$ such that the pointed disk function is finite for $z>c_+$ and is given by
\begin{equation}\label{eq:pointeddisk}
W_{\bullet}(z) = \frac{1}{\sqrt{(z-c_+)(z-c_-)}}.
\end{equation}
Moreover, $c_-=-c_+$ if and only if $\qseq$ is bipartite.
\end{proposition}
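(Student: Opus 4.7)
The plan is to derive (\ref{eq:pointeddisk}) by translating the two loop equations (\ref{eq:loopeq}) and (\ref{eq:pointedloopeq}) into functional equations for the generating series $W(z)$ and $W_\bullet(z)$, and then extracting the one-cut square-root structure of $W_\bullet$ using admissibility of $\qseq$.

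First I would multiply the unpointed loop equation (\ref{eq:loopeq}) by $z^{-l-1}$ and sum over $l \geq 1$. Standard bookkeeping -- substitute $m = l+k-2$ in the face summation and recognise $\sum_{l'}W^{(l')}W^{(l-l'-2)}$ as $z^{-1}W(z)^2$ -- produces the Tutte-type quadratic
\[
W(z)^2 - (z-\tilde Q(z))\,W(z) + \bigl(1 - z\,P(z)\bigr) = 0,
\]
where $\tilde Q(z) := \sum_{k\geq 1} q_k z^{k-1}$ and $P(z)$ is a polynomial whose coefficients are finite linear combinations of the $W^{(l')}$ coming from small-$l$ boundary terms (when $\qseq$ has infinite support, truncate and pass to the limit using admissibility to control tails). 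Fixing the branch so that $W(z)\sim z^{-1}$ at infinity gives $2W(z) = (z-\tilde Q(z)) - \sqrt{\Delta(z)}$, with discriminant $\Delta(z) := (z-\tilde Q(z))^2 - 4(1-zP(z))$. Carrying out the analogous manipulation on (\ref{eq:pointedloopeq}), in which $W_\bullet$ appears linearly and the convolution now contributes $z^{-1}W(z)W_\bullet(z)$, yields
\[
\bigl(z - \tilde Q(z) - 2W(z)\bigr)\,W_\bullet(z) = 1 - z\,P_\bullet(z)
\]
for a polynomial $P_\bullet$ analogous to $P$. The coefficient on the left-hand side is precisely $\sqrt{\Delta(z)}$ by the expression for $W(z)$ above, so
\[
W_\bullet(z) = \frac{1 - z\,P_\bullet(z)}{\sqrt{\Delta(z)}}.
\]

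To reach (\ref{eq:pointeddisk}) it remains to establish the one-cut identity $\Delta(z) = \bigl(1-zP_\bullet(z)\bigr)^2(z-c_+)(z-c_-)$ for real numbers $c_-\leq c_+$. This is the heart of the argument and the main obstacle: it is exactly where the admissibility condition $W_\bullet^{(2)}<\infty$ enters in an essential way. The cleanest rigorous route in this generality is Miermont's combinatorial analysis in \cite{miermont_invariance_2006} via the Bouttier-Di~Francesco-Guitter bijection, which produces explicit parametric expressions for $c_\pm$ and for $1-zP_\bullet(z)$ in terms of $\qseq$. Following the author's own strategy for statements of this kind, I would defer this step to the appendix and extract from it the existence of $c_\pm$ together with the finiteness of $W_\bullet(z)$ for $z>c_+$.

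The sign and size constraints are then obtained from the coefficient expansion
\[
W_\bullet^{(l)} = \frac{1}{4^l}\sum_{i+j=l}\binom{2i}{i}\binom{2j}{j}\,c_+^i\,c_-^j,
\]
read off (\ref{eq:pointeddisk}) at $z=\infty$: non-negativity of $W_\bullet^{(1)} = (c_++c_-)/2$ forces $c_-\geq -c_+$, and in the bipartite case the vanishing of every odd $W_\bullet^{(l)}$ is equivalent to $c_- = -c_+$, whereas any non-bipartite $\qseq$ produces some strictly positive odd coefficient and hence $c_->-c_+$. The strict lower bound $c_+>2$ encodes the non-triviality hypothesis that at least one $q_k>0$ with $k\geq 3$, and is again read directly from the parametric expression for $c_+$ furnished by Miermont's analysis.
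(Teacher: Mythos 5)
Your reduction to the Tutte quadratic is sound as far as it goes, but the proposal does not actually prove the statement: you correctly isolate the one-cut identity $\Delta(z)=\bigl(1-zP_\bullet(z)\bigr)^2(z-c_+)(z-c_-)$ as the heart of the matter, and then defer it to Miermont's analysis without exhibiting the connection. That connection is not automatic. Miermont's Proposition 1 produces a fixed point $(\zp,\zd)$ of the equations $f^\bullet(\zp,\zd)=1-1/\zp$, $f^\diamond(\zp,\zd)=\zd$, together with a spectral-radius condition; it does not, on its face, say anything about the discriminant of a Tutte quadratic having exactly two simple real zeros with the remaining factor a perfect square. Moreover, the reality of $c_\pm$ cannot be read off from the non-negativity of the coefficients $W_\bullet^{(l)}$: a complex-conjugate pair $c_\pm$ would produce real coefficients just as well, so your inequality argument only works after reality is already established elsewhere. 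In short, the proposal reduces the proposition to a claim of comparable difficulty and then stops.

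The paper's actual proof avoids the loop equations entirely. It applies the Bouttier--Di~Francesco--Guitter bijection directly to the pointed maps: decomposing $\maps_\bullet$ according to the label change along the root edge gives generating functions $\zp=Z_\bullet^++1$ and $\zd=\sqrt{Z_\bullet^0}$, and a short combinatorial count of distance-change sequences around the root face yields $W_\bullet^{(l)}=\sum_{k\leq l/2}\frac{l!}{(k!)^2(l-2k)!}(\zp)^k(\zd)^{l-2k}$, which sums to $W_\bullet(z)=1/\sqrt{(z-\zd)^2-4\zp}$. This gives $c_\pm=\zd\pm2\sqrt{\zp}$ as manifestly real numbers, and the bounds $c_+>2$, $-c_+\leq c_-<c_+$ and the bipartite characterisation $c_-=-c_+\iff\zd=0$ fall out immediately from $\zp>1$ and $\zd\geq0$. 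The paper in fact notes that your loop-equation route is possible ``in principle'' but deliberately sidesteps it because the BDG route gives the one-cut form for free. Your coefficient-extraction argument for $-c_+\leq c_-$ and for the bipartite equivalence is correct once reality of $c_\pm$ is granted, and it is a legitimate alternative to the paper's way of reading the bounds off $\zd,\zp$; but the central existence and one-cut step needs to be carried out, not cited.
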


This result can in principle be derived directly from the loop equations (\ref{eq:pointedloopeq}) and (\ref{eq:loopeq}), using the so-called \emph{one-cut assumption}, see e.g. \cite{borot_more_2012} for a discussion in the case of general admissible weight sequences.
However, there is another particularly simple route towards (\ref{eq:pointeddisk}) using the Bouttier-Di~Francesco-Guitter bijection \cite{bouttier_planar_2004} between pointed planar maps and labeled mobiles.
The identity (\ref{eq:pointeddisk}) is well-known, appearing for instance in a slightly different form in \cite{bouttier_planar_2011}.
For completeness we include a proof in the appendix, Section \ref{sec:proofpointeddisk}, using very similar arguments as the ones in \cite{bouttier_planar_2004,bouttier_planar_2011}.
Notice in particular that Proposition \ref{thm:pointeddisk} implies that $W_\bullet^{(l)}<\infty$ for all $l\geq 0$ when $\qseq$ is admissible, as promised.   

The form of the unpointed disk function $W(z)$ is not universal, but can be easily derived from the loop equation (\ref{eq:pointedloopeq}).
To see this let us introduce the generating functions
\begin{align}
U'(z) &:= \sum_{k=1}^{\infty} q_k z^{k-1} \label{eq:potential} \\
M(z) &:= -1+\sum_{k=2}^{\infty}\sum_{l=0}^{k-2}q_k  W_{\bullet}^{(k-l-2)}z^{l}.\label{eq:mfunction}
\end{align}
From Proposition \ref{thm:pointeddisk} it follows that asymptotically 
\begin{equation}\label{eq:wdotasymp}
\lim_{l\to\infty} W_{\bullet}^{(l+k)}/W_{\bullet}^{(l)} = c_+^k.
\end{equation}
Therefore (\ref{eq:pointedloopeq}) implies that
\begin{equation*}
U'(c_+) = \sum_{k=1}^{\infty} q_k \lim_{l\to\infty} \frac{W_{\bullet}^{(l+k-2)}}{W_{\bullet}^{(l)}} \leq \lim_{l\to\infty} \sum_{k=1}^{\infty} q_k \frac{W_{\bullet}^{(l+k-2)}}{W_{\bullet}^{(l)}} \leq 1,
\end{equation*}
meaning that $U'(z)$ has a radius of convergence $R\geq c_+$.
In terms of generating functions the loop equation (\ref{eq:pointedloopeq}) reads
\begin{equation*}
(z-U'(z)-2W(z))W_{\bullet}(z)+M(z)=0\quad(c_+\leq |z|\leq R), 
\end{equation*}
which should be understood as an equation for Laurent series on the annulus $c_+\leq |z|\leq R$.
In particular this implies that $M(z)$ also has a radius of convergence at least equal to $c_+$.
In combination with Proposition \ref{thm:pointeddisk} this leads to the familiar \emph{one-cut} form of the disk function,
\begin{align}
W(z) &= \frac{1}{2}\left(z-U'(z)+M(z)W_{\bullet}(z)^{-1}\right) \nonumber\\
&= \frac{1}{2}\left(z-U'(z)+M(z)\sqrt{(z-c_+)(z-c_-)}\right),\label{eq:usualdiskfun}
\end{align}
which is finite for $|z| \geq c_+$.
One can easily check that (\ref{eq:usualdiskfun}) is indeed (a generating function for) a solution to the loop equation (\ref{eq:loopeq}).

\subsection{A new admissibility criterion}

It follows from the law of the peeling process, characterized by the probabilities (\ref{eq:posprob}) and (\ref{eq:negprob}), that the perimeter process $(l_i)_{i\geq 0}$ is a Markov process with step probabilities
\begin{equation}\label{eq:lengthprob}
\prob(l_{i+1}=l+k|l_i=l) = 
 \begin{cases}
   q_{k+2} \frac{W_{\bullet}^{(l+k)}}{W_{\bullet}^{(l)}} & \text{for } k \geq -1   \\
   2W^{(-k-2)} \frac{W_{\bullet}^{(l+k)}}{W_{\bullet}^{(l)}} & \text{for } -l \leq k < -1\\
   0 & \text{for } k < -l
 \end{cases},
\end{equation}
for $l>0$, while we use the convention $\prob(l_{i+1}=k|l_i=0)=\delta_{k,0}$.
The asymptotics (\ref{eq:wdotasymp}) imply that the large-$l$ limit of (\ref{eq:lengthprob}) corresponds to a random walk $(X_i)_{i\geq0}$ with step probabilities $\nu(k)$ given by
\begin{equation}\label{eq:nudef}
\nu(k) := \lim_{l\to\infty} \prob(l_{i+1}=l+k|l_i=l) = \begin{cases}
   q_{k+2} c_+^{k} & \text{if } k \geq -1 \\
   2W^{(-k-2)} c_+^{k} & \text{if } k \leq -2
 \end{cases}.
\end{equation}
It follows from Proposition \ref{thm:pointeddisk} that $c_+^{-l} W_\bullet^{(l)}$ only depends on $l$ and the ratio $r:=-c_-/c_+ \in(-1,1]$ and is given by
\begin{equation}\label{eq:h0fundef}
h_r^{(0)}(l) := [y^{-l-1}]\frac{1}{\sqrt{y-1}\sqrt{y+r}}=\left(\frac{-r}{4}\right)^{l} \binom{2l}{l}\, _2F_1\left(\frac{1}{2},-l;\frac{1}{2}-l;-\frac{1}{r}\right),
\end{equation}
where $_2F_1$ is the hypergeometric function defined as $_2F_1(a,b;c;z)=\sum_{n=0}^\infty \frac{(a)_n(b)_n}{(c)_n}\frac{z^n}{n!}$ in terms of the rising Pochhammer symbol $(a)_n := a (a+1)\ldots (a+n-1)$.
In general the values $h_r^{(0)}(l)$ for fixed $l\geq 0$ are polynomials in $r$ of order $l$, with the first few reading $h_r^{(0)}(0) = 1$, $h_r^{(0)}(1) = (1-r)/2$, and $h_r^{(0)}(2) = (3-2r+3r^2)/8$. 
For convenience we set $h_r^{(0)}(l)=0$ for $l<0$.

In terms of $\nu$ and $h_r^{(0)}$ the loop equation (\ref{eq:pointedloopeq}) takes on the simple form
\begin{equation}\label{eq:harmonicprime}
\sum_{l=-\infty}^{\infty}h_r^{(0)}(l+k)\nu(l) = h_r^{(0)}(k)\quad \text{for}\quad k>0.
\end{equation}
Whenever this relation holds we will say that $h_r^{(0)}$ is \emph{$\nu$-harmonic} on the positive integers.

Clearly the map $\qseq\to \nu$, which associates a law for a random walk to an admissible weight sequence, is injective since the weights may be recovered through
\begin{equation}\label{eq:weightfromnu}
q_k = \left(\frac{\nu(-2)}{2}\right)^{\frac{k-2}{2}} \nu(k-2).
\end{equation}
The following result is proved in the appendix, Section \ref{sec:proofwalksequence}, and relies on an explicit evaluation of the conditions for admissibility in \cite{miermont_invariance_2006}.

\begin{proposition}\label{thm:walksequence}
Relation (\ref{eq:weightfromnu}) determines a bijection between admissible weight sequences $\qseq$ and random walks $(X_i)_{i\geq 0}$ with step probabilities $\nu$ for which there exists an $r\in(-1,1]$ such that $h_r^{(0)}$ is $\nu$-harmonic on $\Z_{>0}$ and
\begin{equation}\label{eq:admissiblecondition}
\sum_{l=0}^{\infty} h_r^{(1)}(l+1)\nu(l)  \leq 1,
\end{equation}
with $h_r^{(1)}(l):=\sum_{p=0}^{l-1}h_r^{(0)}(l) = [y^{-l-1}]\frac{1}{(y-1)^{3/2}\sqrt{y+r}}$.
\end{proposition}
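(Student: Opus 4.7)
The plan is to transfer the correspondence $\qseq \leftrightarrow \nu$ back and forth across Proposition \ref{thm:pointeddisk}, using the loop equation (\ref{eq:pointedloopeq}) to handle harmonicity and $\nu$-summability, and then to match the inequality (\ref{eq:admissiblecondition}) with Miermont's admissibility criterion from \cite{miermont_invariance_2006}, which is phrased in terms of the Bouttier-Di~Francesco-Guitter (BDG) mobiles.

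For the forward direction, suppose $\qseq$ is admissible. Proposition \ref{thm:pointeddisk} delivers $c_+ > 2$ and $c_- \in [-c_+,c_+)$, and I set $r := -c_-/c_+ \in (-1,1]$. The identity $W_\bullet^{(l)} = c_+^l h_r^{(0)}(l)$ follows from (\ref{eq:pointeddisk}) and the coefficient definition (\ref{eq:h0fundef}). Defining $\nu$ by (\ref{eq:nudef}), the fact that $\nu$ is a probability distribution comes from dividing (\ref{eq:pointedloopeq}) by $W_\bullet^{(l)}$ and sending $l \to \infty$, using $h_r^{(0)}(l+k)/h_r^{(0)}(l) \to 1$ (a consequence of the square-root singularity structure of $W_\bullet(z)$ at $z=c_+$). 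The $\nu$-harmonicity of $h_r^{(0)}$ on $\Z_{>0}$ is obtained by instead dividing the loop equation by $c_+^l$ and re-indexing, which is exactly the content of (\ref{eq:harmonicprime}).

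The inequality (\ref{eq:admissiblecondition}) is the main technical point. I would rewrite the sum $\sum_l h_r^{(1)}(l+1)\nu(l)$ as a contour integral, using $h_r^{(1)}(l+1) = [y^{-l-2}](y-1)^{-3/2}(y+r)^{-1/2}$, and compare the result with the derivative at the fixed point of the BDG generating-function map whose fixed points are $(\zp,\zd)$. The key claim is that, up to normalization by $c_+$, this sum coincides with the quantity that Miermont's criterion requires to be bounded by $1$, with equality characterizing criticality. The ratio $r$ should emerge from the comparison $\zd/\zp$ (or an analogous ratio), giving a dictionary between Proposition \ref{thm:pointeddisk}'s constants and the BDG parameters.

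For the reverse direction, given $\nu$ satisfying the stated conditions, I set $c_+ := \sqrt{2/\nu(-2)}$, $c_- := -r c_+$, define $\qseq$ by (\ref{eq:weightfromnu}), and read off candidate values $W^{(l')} := \tfrac{1}{2}\nu(-l'-2)c_+^{l'+2}$ for the unpointed disk functions. Non-negativity of $W^{(l')}$ is automatic from $\nu(k) \geq 0$. The harmonicity of $h_r^{(0)}$ then translates into the loop equation (\ref{eq:pointedloopeq}) for the candidate $W_\bullet^{(l)} := c_+^l h_r^{(0)}(l)$, and the chain of equivalences with Miermont's conditions yields $W_\bullet^{(2)}(\qseq) < \infty$, i.e.\ admissibility. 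Uniqueness of $r$ for given $\qseq$ is automatic (it is $-c_-/c_+$), and uniqueness of the preimage in the other direction is immediate from (\ref{eq:weightfromnu}), so the correspondence is bijective.

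The main obstacle is the dictionary step in the previous paragraph: unpacking Miermont's BDG-based criterion and showing that, once rewritten in terms of $\nu$, it reduces precisely to (\ref{eq:admissiblecondition}). This is essentially a bookkeeping task comparing two different combinatorial encodings of the same generating function, but it must be carried out carefully since the BDG bijection introduces separate generating series for black/white mobile vertices that must be matched to the single Laurent series $\sum_k\nu(k)y^k$ and its interaction with $(y-1)^{-1/2}(y+r)^{-1/2}$.
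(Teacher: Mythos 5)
Your overall route --- pass through Miermont's criterion, set $c_\pm = \zd \pm 2\sqrt{\zp}$ and $r=-c_-/c_+$, and translate the fixed-point equations and spectral-radius bound into $\nu$-harmonicity and (\ref{eq:admissiblecondition}) via contour-integral rewriting --- is the same as the paper's, and you correctly anticipate that the $\nu$-harmonicity of $h_r^{(0)}$ at $k=1,2$ will reproduce the two equations in (\ref{eq:zpzdeq}). However, you mischaracterize the crux as ``essentially a bookkeeping task.'' Miermont's admissibility condition is not that some scalar derivative is $\leq 1$; it is that the $3\times 3$ matrix $\mathfrak{M}_{\qseq}(\zp,\zd)$ in (\ref{eq:mmatrix}) has spectral radius $\leq 1$. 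Reducing this to the single inequality $\sum_{k\geq 0}\nu(k)h_r^{(1)}(k+1)\leq 1$ requires identifying the matrix entries as $A_0 := [z^{-1}]U'(z)/((z-\zd)^2-4\zp)^{3/2}$ and $A_1 := [z^{-1}](z-\zd)U'(z)/((z-\zd)^2-4\zp)^{3/2}$, and then proving that $\rho_{\qseq}\leq 1$ if and only if $A_1+2\sqrt{\zp}A_0\leq 1$. That equivalence is a genuine linear-algebra argument --- it uses $A_3 - 4\zp A_1 = 2\zp - 2$ (a consequence of (\ref{eq:zpzdeq})), a case split on whether $A_1\geq 1$, the observation $p(A_1)=(2\zp A_0)^2>0$, and an explicit factorization of $p(\lambda)-(2\zp A_0)^2$ showing its three roots are $A_1$ and $\pm\sqrt{1-\zp(1-A_1)}$, all $<1$. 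Only after this reduction does the computation $A_1+2\sqrt{\zp}A_0 = (\partial_y+\sqrt{\zp}\partial_x)f^\diamond(\zp,\zd) = \sum_{k\geq 0}\nu(k)h_r^{(1)}(k+1)$ close the loop. Since you flag this as ``the main obstacle'' but offer no plan for handling the spectral-radius structure, the proposal has a real gap precisely at the step the proposition turns on.

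A smaller omission: Miermont's Proposition 1 requires $\zd > 0$, so the bipartite case ($r=1$, $\zd=0$) is not covered and must be handled via the earlier result of Marckert--Miermont, where the matrix condition degenerates to the scalar bound $(\zp)^2\partial_x f^\bullet(\zp,0)\leq 1$. Your argument does not distinguish these cases, and a direct application of the matrix criterion when $\zd=0$ is not valid (the entries $\zp\partial_x f^\diamond/\zd$ are ill-defined). Both directions of the bijection need this split.
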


\begin{remark}
Actually we will see below in Proposition \ref{thm:drift} that the condition (\ref{eq:admissiblecondition}) is redundant, in the sense that it is satisfied by all $\nu$ for which $h_r^{(0)}$ is $\nu$-harmonic on $\Z_{>0}$ for some $r\in(-1,1]$. 
\end{remark}

Before continuing let us extend the definitions of $h_r^{(0)}$ and $h_r^{(1)}$ by introducing the family of functions $h_r^{(k)}(l):\Z\to\R$, $k\in\Z$, that will appear in various places below.
For $-1 \leq r \leq 1$ and $k\in\Z$ we set 
\begin{align}
h_r^{(k)}(l) &:= [y^{-l-1}]\frac{1}{(y-1)^{k+1/2}\sqrt{y+r}}\label{eq:hfundef}\\
&=\left(\frac{-r}{4}\right)^{l-k} \binom{2l-2k}{l-k}\, _2F_1\left(\frac{1}{2}+k,k-l;\frac{1}{2}+k-l;-\frac{1}{r}\right).\nonumber
\end{align}
Then $h_r^{(k)}(l) = 0$ for $l<k$, $h_r^{(k)}(k)=1$ and
\begin{equation}\label{eq:hfunrelations}
h_r^{(k+1)}(l) = \sum_{p=k}^{l-1}h_r^{(k)}(p), \quad h_r^{(k)}(l) = h_r^{(k+1)}(l+1) - h_r^{(k+1)}(l)\quad\text{for } k\geq 0.
\end{equation}
For future use we also record the asymptotics
\begin{equation}\label{eq:hfunasymp}
h_r^{(k)}(l)\sim \frac{l^{k-1/2}}{\Gamma(k+1/2)\sqrt{1+r}}\quad \text{as }l\to\infty\quad(r<1),
\end{equation}
which is valid for $r\in(-1,1)$ and $k\in\Z$.
In the case $r=1$, one has to be a bit careful since the asymptotics depend on the parity of $l$, but (\ref{eq:hfunasymp}) remains valid when the left-hand side is replaced by the average $(h_r^{(k)}(l)+h_r^{(k)}(l+1))/2$.
Finally, let us note that in the bipartite case, $h_1^{(0)}$ and $h_1^{(1)}$ have the simple expressions
\begin{equation}
h_1^{(0)}(2l) = 2^{-2l}\binom{2l}{l}, \quad h_1^{(1)}(2l) = 2l \cdot 2^{-2l}\binom{2l}{l},\quad(l\geq 0)
\end{equation}
while $h_1^{(0)}$ vanishes on the odd integers and $h_1^{(1)}(2l-1)=h^{(1)}_1(2l)$.

We call $\nu$ \emph{critical} when equality holds in (\ref{eq:admissiblecondition}) and \emph{regular critical} if in addition $\nu(k)$ falls off at least exponentially as $k\to\infty$, i.e., there exists a constant $C>1$ such that $\sum_{k=0}^{\infty} \nu(k) C^k < \infty$.
It is a simple check using the ingredients of the proof of Proposition \ref{thm:walksequence} that these are equivalent to the corresponding conditions on the weight sequence $\qseq$ as given in \cite{miermont_invariance_2006}.

Let us try to understand the condition (\ref{eq:admissiblecondition}) a bit better.
Notice that if $h_r^{(0)}$ is $\nu$-harmonic on $\Z_{>0}$, it follows from (\ref{eq:hfunrelations}) that the condition (\ref{eq:admissiblecondition}) is equivalent to
\begin{equation}\label{eq:admissiblecondition2}
\sum_{l=-\infty}^{\infty}h_r^{(1)}(l+k)\nu(l) - h_r^{(1)}(k) = \sum_{l=0}^{\infty}h_r^{(1)}(l+1)\nu(l) - h_r^{(1)}(1) \leq 0
\end{equation}
for all $k\geq 1$, i.e., $h_r^{(1)}$ is \emph{$\nu$-superharmonic} on $\Z_{>0}$.
Moreover, $\qseq$ is critical if and only if $h_r^{(1)}$ is $\nu$-harmonic on $\Z_{>0}$, since the latter implies the $\nu$-harmonicity of $h_r^{(0)}$. 
We therefore have

\begin{corollary}\label{thm:criticalwalk}
Relation (\ref{eq:weightfromnu}) determines a bijection between critical weight sequences $\qseq$ and step probabilities $\nu$ for which $h_r^{(1)}$ is $\nu$-harmonic on $\Z_{>0}$ for some $r\in(-1,1]$.
\end{corollary}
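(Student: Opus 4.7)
The plan is to obtain this corollary as an essentially algebraic consequence of Proposition \ref{thm:walksequence} together with the telescoping identities (\ref{eq:hfunrelations}) linking $h_r^{(0)}$ and $h_r^{(1)}$. Proposition \ref{thm:walksequence} already supplies a bijection between admissible $\qseq$ and step probabilities $\nu$ for which some $r\in(-1,1]$ makes $h_r^{(0)}$ $\nu$-harmonic on $\Z_{>0}$ and (\ref{eq:admissiblecondition}) holds. Criticality is defined as equality in (\ref{eq:admissiblecondition}), so what I need to show is that, for a common $r$, the combined conditions ``$h_r^{(0)}$ $\nu$-harmonic on $\Z_{>0}$ \emph{and} equality in (\ref{eq:admissiblecondition})'' are equivalent to the single condition ``$h_r^{(1)}$ $\nu$-harmonic on $\Z_{>0}$''.

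For the forward direction, I will start from a critical $\qseq$ with associated parameter $r$ from Proposition \ref{thm:walksequence}. The derivation of (\ref{eq:admissiblecondition2}), which uses $\nu$-harmonicity of $h_r^{(0)}$ together with the relation $h_r^{(0)}(l)=h_r^{(1)}(l+1)-h_r^{(1)}(l)$ from (\ref{eq:hfunrelations}), shows that the quantity $\sum_l h_r^{(1)}(l+k)\nu(l) - h_r^{(1)}(k)$ is independent of $k\ge 1$. Criticality is precisely the statement that this quantity vanishes at $k=1$; hence it vanishes for all $k\ge 1$, which is exactly the $\nu$-harmonicity of $h_r^{(1)}$ on $\Z_{>0}$.

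For the reverse direction, I will show that $\nu$-harmonicity of $h_r^{(1)}$ on $\Z_{>0}$ implies both $\nu$-harmonicity of $h_r^{(0)}$ on $\Z_{>0}$ and equality in (\ref{eq:admissiblecondition}), so that Proposition \ref{thm:walksequence} applies and delivers a critical weight sequence. The first is a direct telescoping: for every $k\ge 1$,
\begin{equation*}
\sum_l h_r^{(0)}(l+k)\nu(l) = \sum_l \bigl(h_r^{(1)}(l+k+1)-h_r^{(1)}(l+k)\bigr)\nu(l) = h_r^{(1)}(k+1)-h_r^{(1)}(k) = h_r^{(0)}(k).
\end{equation*}
The second follows by evaluating $\nu$-harmonicity of $h_r^{(1)}$ at $k=1$ and using $h_r^{(1)}(1)=h_r^{(0)}(0)=1$ from (\ref{eq:hfunrelations}).

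The only technical point to watch is the interchange of summation and the telescoping difference in the display above, but this is immediate from non-negativity of the $h_r^{(k)}$ (visible from the identification $c_+^{-l}W_\bullet^{(l)} = h_r^{(0)}(l)\ge 0$ from (\ref{eq:h0fundef}) and its partial sums) together with the finiteness of the tail sum already guaranteed by $h_r^{(1)}$-harmonicity at $k=1$. Since (\ref{eq:weightfromnu}) is the same map in both directions, injectivity and surjectivity of the induced correspondence between critical $\qseq$ and the relevant $\nu$ are inherited from Proposition \ref{thm:walksequence}, so I do not anticipate any genuine obstacle.
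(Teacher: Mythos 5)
Your argument is correct and follows essentially the same route as the paper: it leans on Proposition~\ref{thm:walksequence}, uses the telescoping identities~(\ref{eq:hfunrelations}) to show that $\sum_l h_r^{(1)}(l+k)\nu(l)-h_r^{(1)}(k)$ is independent of $k\ge 1$ when $h_r^{(0)}$ is $\nu$-harmonic (this is exactly~(\ref{eq:admissiblecondition2})), and identifies criticality with the vanishing of this constant, with the converse handled by the same telescoping. The paper compresses this into the two sentences preceding the corollary; you have simply written out the same steps, including the convergence/interchange justification which, as you note, is routine given non-negativity of $h_r^{(k)}$ and $\nu$.
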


\subsection{The perimeter process as a Doob transform}

As we will see now the functions $h_r^{(0)}$ and $h_r^{(1)}$ have simple interpretations in terms of the random walk $(X_i)_{i\geq 0}$.

\begin{lemma}\label{thm:hitzero}
If $h_r^{(0)}$ is $\nu$-harmonic on $\Z_{>0}$ then the random walk $(X_i)_i$ started at $l$ hits zero before hitting the negative integers with probability $h_r^{(0)}(l)$.
\end{lemma}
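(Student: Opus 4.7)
The plan is to interpret $h_r^{(0)}$ as the expected value of a martingale along the walk and then recover the desired hitting probability via optional stopping.

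Concretely, set $T:=\inf\{i\geq 0 : X_i\leq 0\}$ and consider $M_i:=h_r^{(0)}(X_{i\wedge T})$, with $X_0=l$. The process $(M_i)$ is a non-negative bounded martingale: the martingale property is an immediate consequence of the $\nu$-harmonicity of $h_r^{(0)}$ on $\Z_{>0}$ combined with the conventions $h_r^{(0)}(0)=1$ and $h_r^{(0)}(l)=0$ for $l<0$ (so that the process is simply frozen past $T$), and boundedness follows from the asymptotic (\ref{eq:hfunasymp}), which forces $h_r^{(0)}(l)=O(l^{-1/2})$ as $l\to\infty$. By martingale convergence $M_\infty:=\lim_i M_i$ exists almost surely, and bounded convergence yields
\begin{equation*}
h_r^{(0)}(l) = \expec_l[M_\infty] = \expec_l\bigl[h_r^{(0)}(X_T)\mathbf{1}_{\{T<\infty\}}\bigr] + \expec_l\bigl[M_\infty\mathbf{1}_{\{T=\infty\}}\bigr].
\end{equation*}
The first term on the right equals $\prob_l(X_T=0)$ because $h_r^{(0)}(X_T)=\mathbf{1}_{\{X_T=0\}}$ on $\{T<\infty\}$, and this is exactly the probability that the lemma asks us to compute.

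The remaining task, which I expect to be the main obstacle, is to show that $M_\infty=0$ almost surely on $\{T=\infty\}$. Since $h_r^{(0)}(l)\to 0$ as $l\to\infty$, it suffices to establish $X_i\to\infty$ a.s.\ on $\{T=\infty\}$. I would argue this via the strong Markov property: if $\liminf_i X_i = L<\infty$, then $(X_i)$ revisits the finite set $\{1,\dots,L\}$ infinitely often, but from any state in this set $\lceil L/2\rceil$ successive jumps of size $-2$ (each occurring with positive probability $\nu(-2)=2c_+^{-2}>0$) land the walk in $\Z_{\leq 0}$, so an i.i.d.\ trials / Borel--Cantelli argument forces $T<\infty$, contradicting the assumption and closing the proof.
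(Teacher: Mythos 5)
Your argument is correct, but it takes a genuinely different route than the paper's. Both proofs stop the martingale $h_r^{(0)}(X_{i\wedge T})$; the difference is where the stopping happens. The paper localizes: it introduces the exit times $\tau_k$ from $\{1,\dots,k-1\}$ (which are a.s.\ finite), applies optional stopping at $\tau_k$, and then kills the overshoot term by the bound $\sum_{p\geq k}\prob(X_{\tau_k}=p)\,h_r^{(0)}(p)\leq h_r^{(0)}(k)+h_r^{(0)}(k+1)\to 0$ (using that $h_r^{(0)}$ decreases along even and odd integers), finally identifying the hitting probability as $\sup_k\prob(X_{\tau_k}=0)$. You instead pass directly to the possibly infinite exit time $T$ from $\Z_{>0}$ and invoke martingale convergence with bounded convergence; this shifts the work to showing $M_\infty=0$ on $\{T=\infty\}$, for which you need the extra Borel--Cantelli argument that $X_i\to\infty$ on $\{T=\infty\}$. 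Both routes ultimately use $\nu(-2)=2c_+^{-2}>0$: you explicitly, the paper implicitly to ensure $\tau_k<\infty$ a.s.; and both use the decay of $h_r^{(0)}$ at infinity, but for different purposes (overshoot bound vs.\ $M_\infty=0$). The paper's version is slightly more economical, because the decay of $h_r^{(0)}$ does double duty and the case $T=\infty$ never needs to be discussed; yours is more transparent about what the martingale does at infinity. One small remark: the conventions $h_r^{(0)}(0)=1$ and $h_r^{(0)}(l)=0$ for $l<0$ are not what make $(M_i)$ a martingale (the stopped process is a martingale regardless of the values off $\Z_{>0}$); they are only needed to read off $M_\infty=\mathbf{1}_{\{X_T=0\}}$ on $\{T<\infty\}$.
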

\begin{proof}
For fixed $k>X_0=l$, let $\tau_k$ be the first time at which $(X_i)_i$ exits $\{1,2,\ldots k-1\}$. 
The sought-after probability can then be expressed as $\sup_k \prob(X_{\tau_k}=0)$.
Since $h_r^{(0)}$ is $\nu$-harmonic on $\Z_{>0}$, $h_r^{(0)}(X_i)$ defines a bounded martingale with respect to the random walk killed upon entry of $\Z_{\leq 0}$. 
By the optional stopping theorem 
\begin{equation*}
h_r^{(0)}(l) = \prob(X_{\tau_k}=0) + \sum_{p=k}^\infty \prob( X_{\tau_k} = p ) h_r^{(0)}(p).
\end{equation*}
The sum on the right-hand side is bounded by $h_r^{(0)}(k) + h_r^{(0)}(k+1)$ since $h_r^{(0)}$ is decreasing along the even and odd integers, and therefore it goes to zero as $k\to\infty$.
It follows that $\sup_k \prob(X_{\tau_k}=0) = h_r^{(0)}(l)$.
\end{proof}

Since the probability $h_r^{(0)}(l)$ is positive we can easily condition $(X_i)_i$ to hit zero before hitting the negative integers.
The result is again a Markov process, known as the \emph{Doob transform} (or $h$-transform) of $(X_i)_i$ with respect to $h_r^{(0)}$, which turns out to be identical in law to the perimeter process $(l_i)_i$.
Indeed, (\ref{eq:lengthprob}) can be rewritten precisely as
\begin{equation}\label{eq:htrans}
\prob(l_{i+1}=l+k|l_i=l) = \frac{h_r^{(0)}(l+k)}{h_r^{(0)}(l)}\nu(k),
\end{equation}
where $\nu$ is the law of the increments of $(X_i)_i$.

Following \cite{bertoin_conditioning_1994} we say a random walk $(X_i)_{i\geq 0}$ \emph{drifts to $\infty$} if $\prob(X_i \geq X_0\text{ for all }i) > 0$, and it \emph{drifts to $-\infty$} if $\prob(X_i \leq X_0\text{ for all }i) > 0$.
If it drifts neither to $\infty$ nor to $-\infty$, it is said to \emph{oscillate}.
Using that $h_r^{(1)}$ is essentially the \emph{renewal function} associated to $(X_i)_{i\geq 0}$, we can apply the ingredients of \cite{bertoin_conditioning_1994} to show the following.

\begin{proposition}\label{thm:drift}
If $h_r^{(0)}$ is $\nu$-harmonic on $\Z_{>0}$ then the condition (\ref{eq:admissiblecondition}) is always satisfied.
Furthermore, if $\qseq$ is admissible, the associated random walk $(X_i)_i$ does not drift to $\infty$ and it oscillates if and only if $\qseq$ is critical.
%
%
\end{proposition}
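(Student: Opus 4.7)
My plan is to identify $h_r^{(1)}$ with a positive multiple of the renewal function $V$ of the weak descending ladder-height process of $(X_i)_{i\geq 0}$, and then apply the theory of \cite{bertoin_conditioning_1994}. The argument proceeds in three stages.

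\textbf{Reduction to superharmonicity of $h_r^{(1)}$.} I will first define the defect
\[
\phi(k) := h_r^{(1)}(k) - \sum_{l\in\Z} h_r^{(1)}(l+k)\,\nu(l).
\]
The finite-difference identity $h_r^{(0)}(k)=h_r^{(1)}(k+1)-h_r^{(1)}(k)$ from (\ref{eq:hfunrelations}), combined with the $\nu$-harmonicity of $h_r^{(0)}$ on $\Z_{>0}$, yields $\phi(k+1)-\phi(k)=0$ for all $k\geq 1$, so $\phi$ is constant on $\Z_{\geq 1}$. Since $h_r^{(1)}(1)=1$, condition (\ref{eq:admissiblecondition}) reduces to the single inequality $\phi(1)\geq 0$, and criticality to $\phi\equiv 0$.

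\textbf{Renewal function identification.} Starting from (\ref{eq:usualdiskfun}) together with the definition (\ref{eq:nudef}) of $\nu$, a direct manipulation gives
\[
1-\hat\nu(z) = -M(c_+z)\sqrt{(1-z^{-1})(1+rz^{-1})},\qquad \hat\nu(z):=\sum_k \nu(k)z^k,
\]
which is a Wiener--Hopf-type factorization of $\nu$: both square-root factors expand as Laurent series in $z^{-1}$, and up to a multiplicative constant the factor $\sqrt{1+rz^{-1}}$ matches $1-\expec[z^{-H}]$, with $H$ the weak descending ladder height of $(X_i)_i$. Coefficient comparison against $\sum_l h_r^{(1)}(l)y^{-l-1}=(y-1)^{-3/2}(y+r)^{-1/2}$ then identifies $h_r^{(1)}$, up to a positive constant, with the renewal function $V$. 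By the Bertoin--Doney theory, $V$ is $\nu$-superharmonic on $\Z_{>0}$, and it is $\nu$-harmonic precisely when $(X_i)_i$ oscillates (strictly superharmonic iff it drifts to $-\infty$). This immediately gives $\phi\geq 0$ on $\Z_{\geq 1}$, hence (\ref{eq:admissiblecondition}), and $\phi\equiv 0$ iff the walk oscillates, which via Corollary \ref{thm:criticalwalk} is exactly criticality of $\qseq$.

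\textbf{No drift to $+\infty$.} Extending $h_r^{(1)}$ by $0$ to $\Z_{\leq 0}$ and setting $T^-:=\inf\{n\geq 1: X_n\leq 0\}$, the superharmonicity just established makes $(h_r^{(1)}(X_{n\wedge T^-}))_{n\geq 0}$ a non-negative supermartingale, which converges almost surely to a finite limit by Doob's theorem. If $(X_i)_i$ drifted to $+\infty$, then on the positive-probability event $\{T^-=\infty\}$ one would have $X_n\to\infty$ and, by the polynomial growth in (\ref{eq:hfunasymp}), $h_r^{(1)}(X_n)\to\infty$, a contradiction. Hence $(X_i)_i$ cannot drift to $+\infty$, completing the proof.

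The main obstacle will be the renewal function identification in the second stage: one has to carry out the Wiener--Hopf factorization rigorously, verifying that with the correct branch choices on the appropriate annulus the factor $\sqrt{1+rz^{-1}}$ genuinely matches the descending ladder-height generating function of $(X_i)_i$, so that the renewal function recovered by coefficient extraction is precisely $V$. Once this identification is secured, the remaining assertions follow formally from classical random walk theory.
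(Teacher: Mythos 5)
Your overall strategy — identifying $h_r^{(1)}$ with the renewal function of the descending ladder process and invoking Bertoin--Doney — is exactly the paper's, but the route you take to the identification contains an error, and you have overlooked a much more direct one that is already available.

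The error is in the Wiener--Hopf stage. Starting from $\sum_l h_r^{(0)}(l)\,y^{-l-1}=\bigl((y-1)(y+r)\bigr)^{-1/2}$, the descending ladder-height factor is the \emph{full} square root $\sqrt{(1-z^{-1})(1+rz^{-1})}$, not just $\sqrt{1+rz^{-1}}$. Indeed, writing $s=z^{-1}$, the renewal measure $u(l)=h_r^{(0)}(l)$ has generating function $\sum_l u(l)s^l = \bigl((1-s)(1+rs)\bigr)^{-1/2}$, so $1-\hat\chi_-(s)=\sqrt{(1-s)(1+rs)}$ and the renewal function satisfies $\sum_k V(k)s^k = (1-s)^{-1}\bigl((1-s)(1+rs)\bigr)^{-1/2}=(1-s)^{-3/2}(1+rs)^{-1/2}$, matching $V(k)=h_r^{(1)}(k+1)$. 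With your claimed factor $\sqrt{1+rz^{-1}}$ the renewal function GF would be $(1-s)^{-1}(1+rs)^{-1/2}$, which is the wrong object. You flag the factorization as the main obstacle, but it is not merely a matter of branch choices: the claim as written is false.

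More importantly, you do not need Wiener--Hopf at all. Lemma~\ref{thm:hitzero} already says that $h_r^{(0)}(l)$ is the probability that the walk started at $l$ hits $0$ before $\Z_{<0}$, which is, by definition, the mass of the strict ascending ladder-height process of $(X_0-X_i)_i$ at the point $l$. Summing over $l\le k$ gives $V(k)=h_r^{(1)}(k+1)$ immediately, with no analytic factorization required; this is exactly the paper's argument. Your first stage (the defect $\phi(k)$ being constant) just re-derives the equivalence already noted between (\ref{eq:admissiblecondition}) and (\ref{eq:admissiblecondition2}). Your third stage (Doob supermartingale convergence plus the growth of $h_r^{(1)}$ from (\ref{eq:hfunasymp})) is a correct and perfectly serviceable alternative to the paper's observation that a walk drifting to $+\infty$ has a bounded renewal function; both are one-line appeals to standard facts. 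So: patch the factorization or, better, invoke Lemma~\ref{thm:hitzero} directly, and the rest goes through.
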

\begin{proof}
Following \cite{bertoin_conditioning_1994} let $(H_i,T_i)_{i\geq 0}$ be the \emph{strict ascending ladder point process} of $(X_0-X_i)_{i\geq 0}$ defined by $T_0=0$ and 
\begin{equation*}
H_i = X_0-X_{T_i},\quad T_{i+1}=\inf\{j>T_i : X_0-X_j > H_i\},
\end{equation*}
and we set $H_i=\infty$ if $T_i = \infty$.
Then the \emph{renewal function} is given by $V(k) := \sum_{i=0}^\infty \prob(H_i \leq k)$.
Using Lemma \ref{thm:hitzero} we may identify $h_r^{(0)}(l) = \sum_{i=0}^{\infty} \prob(H_i = l)$.
Therefore $V(k) = \sum_{l=0}^{k} h_r^{(0)}(l) = h_r^{(1)}(k+1)$.

According to \cite{bertoin_conditioning_1994}, Section 2, the renewal function of any random walk is $\nu$-superharmonic on the non-negative integers.
Hence, $h_r^{(1)}$ is $\nu$-superharmonic on the positive integers, which as discussed before is equivalent to (\ref{eq:admissiblecondition}) when $h_r^{(0)}$ is $\nu$-harmonic on $\Z_{>0}$.
The second statement follows from the fact, see again \cite{bertoin_conditioning_1994}, Section 2, that a random walk drifting to $\infty$ has a bounded renewal function, while $h_r^{(1)}$ is unbounded.
Moreover, the renewal function of a random walk is $\nu$-harmonic on $\Z_{\geq 0}$ if and only if it does not drift to $-\infty$, which combined with Corollary \ref{thm:criticalwalk} gives the final statement.
\end{proof}

In the following we will consider cases when $\qseq$ is critical (but not necessarily regular critical). 
Since $h_r^{(1)}$ is $\nu$-harmonic on $\Z_{>0}$ and vanishes on $\Z_{\leq 0}$ the Doob transform of the random walk $(X_k)_{k\geq 0}$ w.r.t. $h_r^{(1)}$ corresponds exactly to conditioning $X_k$ to stay positive (see \cite{bertoin_conditioning_1994} for precise statements).
It is natural to expect that this transform describes the perimeter process of the infinite Boltzmann planar map.
In section \ref{sec:ibpm} we will see that this is indeed the case.

\section{The volume process}\label{sec:volume}

Recall that the volume process $(V_i)_{i\geq0}$ associated to a planar map $\map$ counts the number of fully explored vertices after $i$ peeling steps, i.e. $V_i = |\vertices(\emap(\map,F_i))| - |F_i|$, and that it only increases when the frontier is pruned.
If $\map$ is a $\qseq$-Boltzmann planar map and $l\geq0$, then conditional on the change $l_{i+1}-l_i = -l-2$ in the perimeter during the $i$'th peeling step, the law of the change $V_{i+1}-V_i$ in the volume is simply given by the law of the number of vertices $|\vertices(\map')|$ in a (unpointed) $\qseq$-Boltzmann planar map $\map'\in\maps^{(l)}$ with root face degree $l$.
Remember that when $l=0$ the latter is the one-vertex map and therefore $V_{i+1}-V_i=1$.
In order to study this law let us introduce the analogues $W^{(l,V)}$ and $W_{\bullet}^{(l,V)}$ of the unpointed and pointed disk functions (\ref{eq:diskfunctions}) where the sum is restricted to be over maps with exactly $V$ vertices, with the convention that $W^{(0,V)} = W_{\bullet}^{(0,V)} = \delta_{V,1}$.
Then
\begin{equation*}
\prob( V_{i+1}=V_i+v | l_{i+1}=l_i+l ) = \frac{W^{(l,v)}}{W^{(l)}}.
\end{equation*}

We denote the corresponding generating functions by
\begin{equation}\label{eq:diskfunctionvertexweight}
W^{(l)}_g := \sum_{V=1}^{\infty} W^{(l,V)} g^V = \sum_{\map\in\bmaps{l}} g^{|\vertices(\map)|} \prod_{f\in\faces(\map)\setminus\{\rootface\}} q_{\deg(f)}
\end{equation}
and similarly for the pointed version, which necessarily have radius of convergence in $g$ larger or equal to one.
Using Euler's formula we find that
\begin{equation*}
|\vertices(\map)| = 1 + l/2 + \sum_{f\in\faces(\map)\setminus\{\rootface\}} (\deg(f)-2)/2,
\end{equation*}
and therefore we may identify $W^{(l)}_g(\qseq) = g^{1+l/2} W^{(l)}(\qseq_g)$ where $\qseq_g$ is the weight sequence determined by $(q_g)_{k} := g^{(k-2)/2}q_k$.
The same relation holds between the pointed versions of the disk function, i.e. $W^{(l)}_{\bullet,g}(\qseq) = g^{1+l/2} W^{(l)}_\bullet(\qseq_g)$.

When $\qseq$ is admissible and $0<g<1$, then $\qseq_g$ is also admissible, and we denote by $c_{\pm}(g)$ the associated constants from Proposition \ref{thm:pointeddisk}, and $r(g) = -c_-(g)/c_+(g)$.
No general explicit expression is known for these functions, but it is proved in the appendix, Section \ref{sec:proofvolumeexpansion}, that the behavior in the vicinity of $g=1$ is universal in the case of regular critical weight sequences $\qseq$.
A similar derivation appears in \cite{ambjorn_quantum_1997}, section 4.2.5.

\begin{lemma}\label{thm:volumeexpansion}
If $\qseq$ is regular critical, $0<g<1$ and $c_\pm(g)$ is defined as above then
\begin{equation}
\lim_{g\uparrow 1} \frac{1-c_+(g)/c_+(1)}{\sqrt{1-g}} = \sqrt{\frac{16}{3(1+r)c_+^2\lenconst_\nu}},
\end{equation}
where
\begin{equation}\label{eq:lenconst}
\lenconst_\nu := \sum_{k=1}^\infty \nu(k) h_r^{(2)}(k+1).
\end{equation}
If $\qseq$ is non-bipartite we furthermore have that $\lim_{g\uparrow 1} (1-c_-(g)/c_-(1))/\sqrt{1-g} = 0$.
\end{lemma}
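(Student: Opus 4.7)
The plan is to extract two implicit equations for $c_\pm(g)$ from the loop equation (\ref{eq:pointedloopeq}) at $l=1$ and $l=2$ applied to $\qseq_g$, substituting the explicit form $W_\bullet(z;\qseq_g) = ((z-c_+(g))(z-c_-(g)))^{-1/2}$ of Proposition \ref{thm:pointeddisk}. After the change of variables $\tilde c(g) := \sqrt g\, c_+(g)$ and $r(g) := -c_-(g)/c_+(g)$, the $l=1$ equation (which has no $W^{(\cdot)}$ contribution) becomes the $g$-independent relation $F_2(\tilde c(g),r(g))=0$ with
\[
F_2(\tilde c, r) := \sum_{m\geq 0} q_{m+1}\, \tilde c^m\, h_r^{(0)}(m) - \tilde c(1-r)/2,
\]
whereas the $l=2$ equation, which carries a $2W^{(0)}W_\bullet^{(0)}=2$ contribution, picks up an explicit factor of $g$ to become $F_3(\tilde c(g),r(g);g)=0$ with
\[
F_3(\tilde c, r; g) := \sum_{m\geq 0} q_m\, \tilde c^m\, h_r^{(0)}(m) - \tilde c^2\,(3-2r+3r^2)/8 + 2g.
\]

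The crucial step is to show that at the critical point $(c_+,r)$ the Jacobian of $(F_2,F_3)$ in $(\tilde c,r)$ is singular, with null direction proportional to $(c_+,-r)$. Differentiating the generating-function identity $\sum_l h_r^{(0)}(l)\,x^l=((1-x)(1+rx))^{-1/2}$ in $x$ and $r$ yields
\[
l\, h_r^{(0)}(l) - r\, \partial_r h_r^{(0)}(l) = \tfrac{1}{2}\, h_r^{(1)}(l).
\]
Combined with the reformulation of Corollary \ref{thm:criticalwalk} as $\sum_{m\geq 0} q_{m+1}\,c_+^m\, h_r^{(1)}(m) = c_+$ (via $\nu(l)=q_{l+2}c_+^l$), this identity gives $c_+\,\partial_{\tilde c}F_2 = r\,\partial_rF_2$ at $(c_+,r)$. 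A parallel computation, invoking also $F_2=0$ and the telescoping $h_r^{(1)}(l+1)=h_r^{(1)}(l)+h_r^{(0)}(l)$ from (\ref{eq:hfunrelations}), yields $c_+\,\partial_{\tilde c}F_3 = r\,\partial_rF_3$, so that $\nabla F_2$ and $\nabla F_3$ are both parallel to $(r,c_+)$ and the null direction is $(c_+,-r)$.

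Because of this degeneracy, locally solving $F_2=0$ for $r$ as a function of $\tilde c$ near $(c_+,r)$ and substituting into $F_3=0$ eliminates the linear term in $(\tilde c-c_+)$, forcing a quadratic relation
\[
2(g-1) = -\tfrac{\kappa}{2}\,(\tilde c-c_+)^2 + o((\tilde c-c_+)^2)
\]
with $\kappa>0$. Inverting and using $\tilde c = \sqrt g\,c_+$ yields $(1-c_+(g)/c_+)/\sqrt{1-g}\to 2/(c_+\sqrt\kappa)$. The main obstacle is computing $\kappa$ explicitly and identifying it with $3(1+r)\lenconst_\nu/4$. This amounts to expanding $F_2,F_3$ to second order along the null direction $(c_+,-r)$, eliminating the quadratic correction to $r(\tilde c)$ via $F_2=0$, and applying the identity above a second time together with (\ref{eq:hfunrelations}) to collapse the resulting double sums into a single sum $\sum_k \nu(k)\,h_r^{(2)}(k+1) = \lenconst_\nu$; the universal factor $(1+r)$ arises from the normalization of the $h_r^{(k)}$ family displayed in (\ref{eq:hfunasymp}).

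For the non-bipartite claim, the identified null direction has a direct consequence. Since $r\tilde c = -\sqrt g\,c_-$, the direction $(c_+,-r)$ is tangent to the level set $\{\sqrt g\,c_- = \text{const}\}$, so along the motion of $(\tilde c(g),r(g))$ one has $r(g)\tilde c(g) - r c_+ = O((\tilde c-c_+)^2) = O(1-g)$. Hence $\sqrt g\,c_-(g) = c_-(1) + O(1-g)$, i.e.\ $c_-(g) = c_-(1) + O(1-g) = c_-(1) + o(\sqrt{1-g})$, which gives $(1-c_-(g)/c_-(1))/\sqrt{1-g}\to 0$.
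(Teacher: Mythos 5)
Your strategy is a genuine alternative to the paper's: where the paper works directly with the Bouttier--Di Francesco--Guitter functions $f^\diamond,f^\bullet$ and Miermont's criterion $\mathbf f(x,y)=(0,1-g)$, you instead substitute the universal form of $W_\bullet$ into the loop equation (\ref{eq:pointedloopeq}) at $l=1,2$ for $\qseq_g$ and work in the variables $(\tilde c,r)$. The two formulations are equivalent (the paper's equations (\ref{eq:harm1})--(\ref{eq:harm2}) are precisely your $F_2=0$ and $F_3=0$), and your change of variables from $(x,y)$ to $(\tilde c,r)$ is the same up to a nonsingular Jacobian, so your null direction $(c_+,-r)$ indeed corresponds to the paper's $(\sqrt{\zp},1)$. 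The identity $l\,h_r^{(0)}(l)-r\,\partial_r h_r^{(0)}(l)=\tfrac12 h_r^{(1)}(l)$ checks out, and your verification that $c_+\partial_{\tilde c}F_i=r\,\partial_r F_i$ at criticality for $i=2,3$ is correct (using $\nu$-harmonicity of $h_r^{(1)}$ at $k=1$ and $k=2$ respectively). The argument for the $c_-(g)$ estimate, observing that $r\tilde c=-\sqrt g\,c_-$ is constant to first order along the null direction, is clean and correct.

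However, there is a genuine gap, which you flag yourself: you do not carry out the computation that identifies the quadratic coefficient $\kappa$ with $\tfrac34(1+r)\lenconst_\nu$. This is the technical core of the lemma. Your sketch ("expand $F_2,F_3$ to second order along $(c_+,-r)$, eliminate the quadratic correction to $r(\tilde c)$ via $F_2=0$, apply the identity again together with (\ref{eq:hfunrelations}) to collapse to $\sum_k\nu(k)h_r^{(2)}(k+1)$") is plausible but is precisely what must be executed. The needed quantity is $\kappa=\frac{1}{c_+^2}\bigl[(c_+,-r)H_A(c_+,-r)^T-\frac{\partial_r A}{\partial_r F_2}(c_+,-r)H_{F_2}(c_+,-r)^T\bigr]$, where $A=F_3-2g$, and both terms involve mixed sums of $q_m c_+^m$ against $\partial_r^2 h_r^{(0)}$, $\partial_r h_r^{(0)}$ and $m^2 h_r^{(0)}$ that must be massaged into $\lenconst_\nu$ using the identity twice and the telescoping in (\ref{eq:hfunrelations}). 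The paper achieves the analogous step in one line via the generating-function computation (\ref{eq:ddfdiamond}), $(\partial_y+\sqrt x\,\partial_x)^2 f^\diamond|_{(\zp,\zd)}=\tfrac{3}{c_+}\lenconst_\nu$; an equivalent closed computation is missing here. (Also, your remark that the factor $(1+r)$ "arises from the normalization ... displayed in (\ref{eq:hfunasymp})" is not persuasive: (\ref{eq:hfunasymp}) is a large-$l$ asymptotic, while the $(1+r)$ should emerge algebraically from the $\partial_r$-derivatives and the parametrization of the null direction, as it does in the paper via $\sqrt{\zp}=(1+r)c_+/4$.) A secondary, smaller gap is that you do not explicitly justify that the branch of solutions to $F_2=0,F_3=0$ that you follow as $g\uparrow 1$ is the physically relevant one; the paper handles this via the admissibility condition (\ref{eq:admcondexpansion}) from Proposition \ref{thm:walksequence} to select the correct sign in (\ref{eq:xysol}).
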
 

For general critical weight sequences $\qseq$ it is not straightforward\footnote{See e.g. \cite{borot_recursive_2012}, Section 3.4, of \cite{borot_more_2012}, Section 6, for discussions of the asymptotic behavior in the non-regular case that $q_k\sim k^{\alpha-1}c_+^{-k}$ for $1/2 \leq \alpha \leq 3/2$.} to determine the behavior as $g\to 1$, but at least we have the following result that we will need in the next section. See section \ref{sec:proofrovdisk} for the proof.

\begin{lemma}\label{thm:rovdisk}
If $\qseq$ is critical, $l\geq1$ and $W_g^{(l)}(\qseq)$ is not identically zero, then $W_g^{(l)}(\qseq)$ has unit radius of convergence.
\end{lemma}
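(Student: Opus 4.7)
Write $R$ for the radius of convergence of $W_g^{(l)}(\qseq)=\sum_V W^{(l,V)}g^V$; the goal is $R=1$, which I split into $R\geq 1$ and $R\leq 1$.

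\emph{The easy bound $R\geq 1$.} Criticality of $\qseq$ implies admissibility, so Proposition \ref{thm:pointeddisk} gives $W_\bullet^{(l)}(\qseq)<\infty$. Since the pointed disk function weighs each unpointed map by its vertex count ($\geq 1$), we have $W^{(l)}(\qseq)\leq W_\bullet^{(l)}(\qseq)<\infty$. Non-negativity of the coefficients $W^{(l,V)}$ then gives convergence of the series at $g=1$, hence on $[0,1]$.

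\emph{The hard bound $R\leq 1$.} Suppose for contradiction $R>1$ and fix $g\in(1,R)$. Term-by-term differentiation inside the disk of convergence gives
\[
W_{\bullet,g}^{(l)}(\qseq)=g\,\tfrac{d}{dg}W_g^{(l)}(\qseq)=\sum_V V\,W^{(l,V)}g^V<\infty,
\]
and combined with $W_{\bullet,g}^{(l)}(\qseq)=g^{1+l/2}W_\bullet^{(l)}(\qseq_g)$, where $(q_g)_k:=g^{(k-2)/2}q_k$, this yields $W_\bullet^{(l)}(\qseq_g)<\infty$. The pointed loop equation \eqref{eq:pointedloopeq} applied to $\qseq_g$ contains the non-negative term $2W^{(0)}(\qseq_g)W_\bullet^{(l-2)}(\qseq_g)=2W_\bullet^{(l-2)}(\qseq_g)$, so finiteness descends by two; in the bipartite case the non-vanishing hypothesis forces $l$ to be even and we reach $W_\bullet^{(2)}(\qseq_g)<\infty$ directly, while in the non-bipartite case a parity-toggling step via a positive odd weight does the same. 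Hence $\qseq_g$ is admissible, and by Proposition \ref{thm:walksequence} there is $r_g\in(-1,1]$ with $\sum_{k\geq 0}h_{r_g}^{(1)}(k+1)\nu_g(k)\leq 1$, where $\nu_g$ is the walk associated to $\qseq_g$.

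The key identity, immediate from \eqref{eq:nudef} and $(q_g)_{k+2}=g^{k/2}q_{k+2}$, is
\[
\nu_g(k)=\nu(k)\,\lambda^k\quad(k\geq -1),\qquad \lambda:=\sqrt{g}\,c_+(g)/c_+.
\]
Comparing the large-$l$ asymptotics of $W_\bullet^{(l)}(\qseq)$ and $W_{\bullet,g}^{(l)}(\qseq)$ via the coefficient-wise inequality $W_{\bullet,g}^{(l)}(\qseq)\geq W_\bullet^{(l)}(\qseq)$ (valid for $g\geq 1$) forces $\lambda\geq 1$. Inserting $\nu_g(k)=\nu(k)\lambda^k$ into the admissibility inequality and comparing with the critical identity $\sum_{k\geq 0}h_r^{(1)}(k+1)\nu(k)=1$ supplied by Corollary \ref{thm:criticalwalk} then produces, via a monotonicity argument exploiting $\lambda\geq 1$, the reversed strict inequality, the desired contradiction.

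\emph{Main obstacle.} The delicate step is the final monotonicity, because $c_+(g)$, $r_g$ and the negative half of $\nu_g$ all move jointly with $g$, so one cannot simply plug $\nu_g$ into the admissibility criterion with $r$ replaced by $r_g$; a careful comparison between the harmonic functions $h_r^{(1)}$ and $h_{r_g}^{(1)}$ is needed. A cleaner alternative that bypasses this altogether is to invoke the Bouttier--Di Francesco--Guitter bijection, encoding pointed $\qseq$-Boltzmann maps as labeled mobiles whose underlying two-type Bienaym\'e--Galton--Watson offspring distribution is critical precisely when $\qseq$ is critical; standard branching-process estimates then give polynomial tails $W^{(l,V)}\asymp V^{-\alpha}$, yielding $R\leq 1$ directly.
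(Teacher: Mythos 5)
Your opening reduction is sound: $R\geq 1$ follows from $W^{(l)}(\qseq)\leq W_\bullet^{(l)}(\qseq)<\infty$, and your term-by-term differentiation inside the radius of convergence, followed by the loop-equation descent in $l$, correctly shows that $R>1$ would force $\qseq_{g}$ to be admissible for $g\in(1,R)$. This intermediate conclusion is also what the paper's proof implicitly relies on, though the paper leaves it terse. The identity $\nu_g(k)=\nu(k)\lambda^k$ for $k\geq -1$ with $\lambda=\sqrt{g}\,c_+(g)/c_+$ is correct, and the coefficient-wise comparison $W_{\bullet,g}^{(l)}\geq W_\bullet^{(l)}$ does give $\lambda\geq 1$.

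However, your final step is a genuine gap, and you say so yourself: you want to play the admissibility inequality $\sum_{k\geq 0} h_{r_g}^{(1)}(k+1)\nu(k)\lambda^k\leq 1$ off against the criticality identity $\sum_{k\geq 0} h_r^{(1)}(k+1)\nu(k)=1$, using $\lambda>1$ to produce a contradiction. But since $r_g$ drifts with $g$, you cannot directly compare $h_{r_g}^{(1)}$ with $h_r^{(1)}$ term by term without a quantitative bound on how much $r_g$ can move, and you haven't supplied one. If $r_g<r$ then $h_{r_g}^{(1)}(l)>h_r^{(1)}(l)$ and the argument could survive, but if $r_g>r$ the inequalities point the wrong way and the conclusion does not follow for free. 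Nor does the sketched branching-process alternative repair this: the claim $W^{(l,V)}\asymp V^{-\alpha}$ is not immediate for general (non-regular) critical weight sequences, where the mobile's multi-type offspring laws may have heavy tails and local-limit control is delicate; in fact the unconditioned expected vertex count $W_\bullet^{(l)}/W^{(l)}$ is finite for every admissible $\qseq$, so one cannot conclude $R\leq 1$ just from ``infinite mean total progeny.''

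The paper avoids this obstacle entirely by splitting into cases. For \emph{regular} critical $\qseq$, Lemma~\ref{thm:volumeexpansion} exhibits the explicit square-root singularity $c_+(g)-c_+\sim\mathrm{const}\cdot\sqrt{1-g}$ at $g=1$, so $W_g^{(l)}$ cannot be analytic past $g=1$. For \emph{non-regular} critical $\qseq$, the paper shows that $W_{g_0}^{(l)}<\infty$ for some $g_0>1$ would make $\qseq$ \emph{regular}: using the rescaling identity $f^\diamond_{\qseq_g}(x,y)=g^{-1/2}f^\diamond_\qseq(gx,\sqrt{g}y)$ and the fact that $\zp(\qseq_g),\zd(\qseq_g)$ are non-decreasing in $g$, one deduces $f^\diamond_\qseq\bigl((1+\epsilon)\zp(\qseq),(1+\epsilon)\zd(\qseq)\bigr)<\infty$ for some $\epsilon>0$, which is exactly Miermont's criterion for regularity --- a contradiction. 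So the trick is not to derive a contradiction directly from the criticality equation, but to show that $\qseq_{g_0}$ admissible for $g_0>1$ \emph{promotes} $\qseq$ to regular, and then invoke the already-handled regular case. You should consider adopting that route, or, if you want to keep your $\nu$-walk formulation, you would need to control the motion of $r_g$ as $g$ increases (e.g.\ show $r_g\leq r$, or show the comparison only for $g$ arbitrarily close to $1$, which still suffices because $R>1$ lets you take $g$ as close to $1$ as you like).
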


\section{The infinite $\qseq$-Boltzmann planar map}\label{sec:ibpm}

Depending on the admissible weight sequence $\qseq$, the $\qseq$-Boltzmann planar map $\map\in\maps^{(l)}$ with root face degree $l$ may have combinatorial restrictions on the number of its vertices $|\vertices(\map)|$.
The set of possible vertex numbers is denoted by $I_{\qseq,l}$, i.e.,
\begin{equation*}
I_{\qseq,l} := \{V\geq 1 : \prob(|\vertices(\map)|=V)>0\}.
\end{equation*}
In general, it follows from Euler's formula that $I_{\qseq,l}$ consists of all but finite of the integers in the lattice $c + d\, \Z_{\geq 0}$ for some $c\in\Z$ and
\begin{equation*}
d:= \gcd(\{k\geq 1 : q_{2k+2}>0\text{, or }k\text{ odd and }q_{k+2}>0\}).
\end{equation*}

The starting point of this section is the following recent result.
\begin{theorem}[Local limit of $\qseq$-Boltzmann planar maps, Stephenson \cite{stephenson_local_2014}]\label{thm:stephenson}
Let $\qseq$ be a critical weight sequence, and let $\map_V$, $V\in I_{\qseq,2}$, be rooted pointed $\qseq$-Boltzmann planar maps conditioned to have $V$ vertices.
Then there exists a random rooted infinite planar map $\map_\infty$ such that $\map_V$ converges in distribution in the local topology to $\map_\infty$ as $V\to\infty$ along $I_{\qseq,2}$. 
\end{theorem}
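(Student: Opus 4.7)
The plan is to reduce the statement, via the Bouttier-Di Francesco-Guitter (BDG) bijection, to a local limit theorem for conditioned Galton-Watson trees. First I would set up the BDG correspondence between pointed $\qseq$-Boltzmann planar maps and labelled mobiles --- plane trees with a bipartite colouring (white vertices encoding non-pointed vertices of the map and black vertices encoding faces) carrying integer labels on the white vertices that track graph distances to the marked vertex. Under this bijection, the Boltzmann weight $\prod_f q_{\deg f}$ translates into a multiplicative weight on black vertices, so that the pointed Boltzmann measure conditioned on $|\vertices(\map)|=V$ becomes the law of a two-type Galton-Watson tree conditioned on having $V-1$ white vertices, decorated with i.i.d.\ label increments along the edges. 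Criticality of $\qseq$ in our sense corresponds exactly to criticality of this Galton-Watson tree.

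The second step is to apply the local limit theorem for critical Galton-Watson trees conditioned on their size. In the regular critical case one invokes Kesten's classical theorem, producing an infinite two-type tree with a unique spine along which the offspring law is size-biased and off of which independent critical subtrees are grafted. For non-regular critical $\qseq$, where the black-vertex offspring distribution is heavy-tailed, one uses the extensions due to Jonsson-Stefansson and Abraham-Delmas that cover the possible condensation phenomenon (a single vertex of macroscopic degree). Since the label increments along the mobile are i.i.d.\ and independent of the shape, they extend consistently to the infinite tree, producing a well-defined random infinite labelled mobile $T_\infty$.

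Third, one defines $\map_\infty$ as the image of $T_\infty$ under the extended inverse BDG construction. To conclude that $\map_V\to\map_\infty$ in the local topology, one must check that the inverse bijection is local: the combinatorial ball of radius $r$ around the root in the reconstructed map is determined with high probability by a finite portion of the mobile. This locality plus the local convergence of the conditioned mobiles to $T_\infty$, combined with a tightness argument for the ball sizes, then yields the desired convergence.

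The main obstacle is the locality of the inverse BDG map. Recovering adjacencies in the planar map from the labelled mobile requires tracing along the contour to locate the next white vertex carrying a specified label, and such excursions can in principle reach far along the spine, particularly on the infinite tree where the spine provides a highway for long-range propagation. Controlling the probability of long excursions demands sharp estimates on the label process --- essentially on the increments related to the renewal function $h_r^{(1)}$, whose probabilistic meaning is given by Lemma \ref{thm:hitzero} and Proposition \ref{thm:drift} --- together with good asymptotics for the vertex-counting partition function, provided in the regular case by Lemma \ref{thm:volumeexpansion} and in the non-regular case requiring a more delicate singularity analysis of the one-cut form (\ref{eq:usualdiskfun}).
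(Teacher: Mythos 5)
The paper does not supply its own proof of this statement: it is quoted as an external result of Stephenson \cite{stephenson_local_2014} and used as a black box. What the paper does say about the proof --- that the $\qseq$-IBPM is coded by an infinite multi-type Galton--Watson tree via the Bouttier--Di~Francesco--Guitter bijection, and that one-endedness and finiteness of face degrees come from the uniqueness of the infinite spine and the absence of infinite-degree vertices in that tree --- matches your outline in its broad strokes (BDG mobile $\to$ conditioned multi-type GW tree $\to$ Kesten-type local limit $\to$ inverse BDG). The obstacle you flag, the locality of the inverse BDG construction on an infinite labelled mobile, is indeed the genuine technical crux.

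One point worth sharpening: since $\qseq$ is assumed \emph{critical}, the associated multi-type GW tree is critical, and conditioning a critical tree on its size yields a Kesten-type spine limit, not a condensation limit --- condensation is a phenomenon of conditioned \emph{subcritical} trees. So invoking Jonsson--Stef\'ansson-style condensation results is somewhat off target here; what one needs for non-regular critical $\qseq$ is rather the extension of Kesten's theorem to critical offspring laws with infinite variance (and its multi-type analogue, which is where Stephenson's actual work lies). The heavy tails will make themselves felt through the spine degree distribution and through the hull/perimeter estimates in the locality argument, but they should not change the Kesten-type form of the limit tree. Also a minor bookkeeping remark: in the BDG coding of a pointed map with $V$ vertices, the mobile has $V-1$ labelled (white) vertices only because the pointed vertex itself does not appear in the mobile; you state this correctly, but it is worth being explicit that the conditioning passes to the total number of white vertices.
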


We would like to study lazy peeling processes on the infinite planar map $\map_\infty$.
Recall from Section \ref{sec:mapsandpeeling} that in order for such process to be well-defined one needs all faces to have finite degree and $\map_\infty$ to be one-ended.
Luckily both are satisfied by the $\qseq$-IBPM.
The proof of Theorem \ref{thm:stephenson} in \cite{stephenson_local_2014} relies on the fact that the $\qseq$-IBPM can be coded by an infinite multi-type Galton-Watson tree.
The absence of infinite degree faces in $\map_\infty$ follows from the absence of infinite degree vertices in the latter tree.
On the other hand, the one-endedness, which states that the complement of any finite submap has exactly one infinite component, follows from the fact that the tree has a unique infinite spine.

Similar convergence results were obtained in \cite{stephenson_local_2014} for the case where the conditions are put on the number of edges or faces instead of vertices, but only in the case of regular critical weight sequences.
In order to study the peeling process of the general infinite $\qseq$-Boltzmann planar maps, we therefore need to understand the peeling of $\qseq$-Boltzmann planar map $\map_V$ with fixed number $V$ of vertices.


\begin{theorem}\label{thm:peelingibpm}
The perimeter process $(l_i)_{i\geq 0}$ associated to any lazy peeling process of the infinite $\qseq$-Boltzmann planar map $\map_\infty$ determined by the critical weight sequence $\qseq$ is given by the Doob transform of the associated random walk from Corollary \ref{thm:criticalwalk} with respect to $h_r^{(1)}$.
\end{theorem}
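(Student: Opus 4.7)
The plan is to combine Stephenson's local convergence result (Theorem \ref{thm:stephenson}) with the Markov property of the peeling process on finite $\qseq$-Boltzmann maps conditioned on a fixed vertex count, and then to pin down the resulting limiting transition kernel by a uniqueness argument for positive $\nu$-harmonic functions. To begin with, I would observe that the lazy peeling process is strongly \emph{local}: for any peel algorithm and any fixed finite explored map $\emap'$, the event $\{\emap_{i+1}=\emap'\}$ depends on the ambient planar map only through a finite neighborhood of the current peel edge. Theorem \ref{thm:stephenson} therefore yields the convergence in distribution of the sequence of explored maps $(\emap_i^{(V)})_{i\ge 0}$ on $\map_V$ to a Markov process $(\emap_i^{(\infty)})_{i\ge 0}$ on $\map_\infty$, and in particular the convergence of the joint perimeter-and-volume process $(l_i,V_i)_{i\ge 0}$.

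Next, I would compute the transition kernel on $\map_V$ explicitly and pass to the limit $V\to\infty$. Applying the Markov property of pointed $\qseq$-Boltzmann maps together with the conditioning $|\vertices(\map_V)|=V$, conditional on the explored map $\emap_i$ the unexplored map $\umap(\map_V,F_i)$ is a pointed $\qseq$-Boltzmann map of perimeter $l_i$ constrained to have exactly $V-V_i$ vertices. The one-step probabilities are thus the vertex-refined analogues of (\ref{eq:posprob})--(\ref{eq:negprob}), with $W_\bullet^{(l)}$ replaced by $W_\bullet^{(l,V-V_i)}$ and $W^{(l)}$ replaced by $W^{(l,v')}$. Because the left-hand sides converge as $V\to\infty$ by the preceding paragraph and the prefactors $q_{k+2}$ and $2W^{(l',v')}$ do not depend on $V$, the ratios $W_\bullet^{(l+k,V)}/W_\bullet^{(l,V)}$ themselves converge to a positive limit, which we write as $c_+^k\,Z(l+k)/Z(l)$ for some positive function $Z:\Z_{>0}\to\R_{>0}$. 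Recalling the definition (\ref{eq:nudef}) of $\nu$, the limiting transition kernel on $\map_\infty$ takes the form
\[
\prob(l_{i+1}=l+k\mid l_i=l)\;=\;\nu(k)\,\frac{Z(l+k)}{Z(l)},\qquad l\ge 1,
\]
with the convention $Z(l)=0$ for $l\le 0$, consistent with the fact that the perimeter stays strictly positive on the infinite map.

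The identification $Z\propto h_r^{(1)}$ then follows from a uniqueness argument. Since the displayed probabilities sum to one for every $l\ge 1$, the function $Z$ is $\nu$-harmonic on $\Z_{>0}$ in the sense of (\ref{eq:harmonicprime}). Criticality of $\qseq$ ensures via Proposition \ref{thm:drift} that the random walk $(X_i)_i$ oscillates; for oscillating random walks, the classical fluctuation theory developed in \cite{bertoin_conditioning_1994} implies that, up to a positive multiplicative constant, the renewal function is the unique positive function on $\Z_{\ge 0}$ vanishing at the origin that is $\nu$-harmonic on $\Z_{>0}$. The proof of Proposition \ref{thm:drift} identified this renewal function precisely with $h_r^{(1)}$, so $Z=C\,h_r^{(1)}$ for some $C>0$; the constant cancels in the ratio and the kernel reduces to the claimed $h_r^{(1)}$-Doob transform of $(X_i)_i$.

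The main obstacle I anticipate is justifying the existence and positivity of the limits $\lim_{V\to\infty}W_\bullet^{(l+k,V)}/W_\bullet^{(l,V)}$ simultaneously for all relevant $l,k$. A direct singularity analysis of the bivariate generating function $\sum_V W_\bullet^{(l,V)}g^V = g^{1+l/2}c_+(g)^l h_{r(g)}^{(0)}(l)$ near $g=1$ would be tractable in the regular critical case via Lemma \ref{thm:volumeexpansion}, but delicate in the non-regular critical case where Lemma \ref{thm:rovdisk} is the only fine control available. The route outlined above sidesteps this entirely, inheriting existence and positivity of the ratios from local convergence and thereby reducing the identification of $Z$ to a purely soft uniqueness statement that avoids any explicit asymptotic computation of $W_\bullet^{(l,V)}$.
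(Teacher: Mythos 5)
Your overall strategy is the same as the paper's: use Stephenson's local limit to transfer the Markovian transition kernel from the finite model to $\map_\infty$, and then identify the limiting biasing function by the uniqueness of positive $\nu$-harmonic functions vanishing on $\Z_{\le 0}$ (the paper cites Doney \cite{doney_martin_1998} rather than \cite{bertoin_conditioning_1994} for this, but the content is the same). The real issue is the step you describe as "sidestepping entirely" the asymptotics of $W_\bullet^{(l,V)}$; this is precisely where the gap is.

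Local convergence gives you convergence of $\prob_V(\emap_{i+1}\mid\emap_i)$ for each fixed pair of explored maps, hence of ratios of the form $W_\bullet^{(l+k,\,V-V_i-v)}/W_\bullet^{(l,\,V-V_i)}$ where the volume argument is shifted by a step-dependent amount $v$ (with $v\ge 1$ for every pruning step $k\le -2$). It does \emph{not} automatically give $\lim_V W_\bullet^{(l+k,V)}/W_\bullet^{(l,V)}$, nor that the limit factorizes as $c_+^k\,Z(l+k)/Z(l)$ independently of the volume shift. To pass from one to the other you must show that $\lim_V W_\bullet^{(l,V-v)}/W_\bullet^{(l,V)}=1$ (or, as the paper does, work with the monotone tail $W_\bullet^{(l,\ge V)}$ and show the corresponding ratio is $1$; this also cures the periodicity problem of conditioning on $=V$, where $W_\bullet^{(l,V)}$ may vanish along an arithmetic progression). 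Your "soft" argument gives only that this limit, call it $\rho^v$ with $\rho\ge 1$, exists. If $\rho>1$, then the marginal perimeter kernel would involve $\sum_v W^{(-k-2,v)}\rho^v$ in place of $W^{(-k-2)}=\sum_v W^{(-k-2,v)}$, so "probabilities sum to one" would \emph{not} translate into $\nu$-harmonicity of $Z$, and your uniqueness argument would have nothing to latch onto. The paper closes this hole by combining the monotonicity of $V\mapsto W_\bullet^{(l,\ge V)}$ with Lemma~\ref{thm:rovdisk} (unit radius of convergence of $W_g^{(l)}$), which rules out $\rho>1$; it then still needs one-endedness to ensure $\prob(V_{i+1}=\infty)=0$ so that the perimeter kernel really does sum to one. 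Both of these inputs are essential and are precisely what your proposal omits, so the "reduction to a purely soft uniqueness statement" does not go through as written.
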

\begin{proof}
Let $B_r(\map)$ be the geodesic ball of radius $r$ around the root in the pointed Boltzmann planar map $\map$, i.e. the submap of $\map$ consisting of all vertices, and edges between them, whose graph distance to the starting point of the root edge is at most $r$.
The $\qseq$-IBPM $\map_\infty$, whose existence follows from Theorem \ref{thm:stephenson}, is characterized by the fact that its geodesic balls $B_r(\map_\infty)$ agree in law with the balls $B_r(\map)$ of the map $\map$ when conditioned to have a large number of vertices, i.e.,
\begin{equation*}
\prob( B_r(\map_\infty) = \bmap ) = \lim_{V\to\infty} \prob( B_r(\map) = \bmap |\, |\vertices(\map)|=V).
\end{equation*}
The existence of these limits imply that we may instead condition the number of vertices in $\map$ to be at least $V$ since
\begin{equation*}
\lim_{V\to\infty} \prob( B_r(\map) = \bmap |\, |\vertices(\map)|\geq V) = \lim_{V\to\infty} \prob( B_r(\map) = \bmap |\, |\vertices(\map)|= V).
\end{equation*}
In the following we will denote the random map $\map$ conditioned to have at least $V$ vertices by $\map_{\geq V}$.
Similarly, we will write $W_{\bullet}^{(l,\geq V)} := \sum_{v=V}^{\infty}W_{\bullet}^{(l,v)}$ for the disk function with a lower bound $V$ on the number of vertices.

We will attempt to determine the law of the perimeter $(l_i(\map_\infty))_{i=0}^{n}$ by considering the $V\to\infty$ limit of the analogous process $(l_i(\map_{\geq V}))_{i=0}^{n}$ on $\map_{\geq V}$.
Unfortunately, due to the conditioning the latter is not a Markov process. 
However, we can easily turn it into a Markov process $(l_i,V_i)_{i=0}^{n}$ by considering the joint law of the perimeter $l_i$ and the volume $V_i$, i.e., the number of \emph{fully explored vertices} after $i$ peeling steps, see section \ref{sec:mapsandpeeling}.
Indeed, it is easy to see that the law of the unexplored map $\umap(\map_{\geq V},F_i)$ only depends on the length of the frontier and the number of fully explored vertices in the explored map $\emap(\map_{\geq V},F_i)$.
The transition probabilities follow from the loop equation
\begin{equation}\label{eq:wbulletV}
W_{\bullet}^{(l,\geq V)} =  \sum_{k=1}^{\infty} q_k W_{\bullet}^{(l+k-2,\geq V)}+2\sum_{l'=0}^{l-2}\sum_{v=1}^{\infty}W^{(l',v)}W_{\bullet}^{(l-l'-2,\geq V-v)},
\end{equation}
which is the straightforward generalization of (\ref{eq:pointedloopeq}).
We read off that
\begin{equation}\label{eq:lenvolprob}
\prob_{\map_{\geq V}}( l_{i+1} = l_i + k, V_{i+1} = V_i+v | \emap  ) =  \frac{W_{\bullet}^{(l_i+k,\geq V-V_i+v)}}{W_{\bullet}^{(l_i,\geq V-V_i)}}\begin{cases}
\delta_{v,0} q_{k+2} & k\geq -1\\
2 W_{v}^{(-k-2)}& -l_i\leq k\leq-2\\
0 & \text{otherwise.}
\end{cases}
\end{equation}

The limit as $V\to\infty$ of (\ref{eq:lenvolprob}) with $l_i,V_i,k,v$ fixed must exist and must equal the transition probability of the corresponding process on $\map_\infty$, since the interior of any frontier of length $l_i+k$ enclosing $V_i+v$ vertices is necessarily contained in a sufficiently large ball $B_{r}(\map_\infty)$ (take e.g. $r=V_i+l_i+v+k$).
In particular, setting $k=-2$, we find that the limit 
\begin{equation}\label{eq:wdotmin2}
\lim_{V\to\infty} W_\bullet^{(l-2,\geq V-1)}/W_{\bullet}^{(l,\geq V)} <\infty 
\end{equation}
exists for all $l\geq 3$. 
Similarly, setting $k \geq 1$ such that $q_{k+2} > 0$, gives that 
\begin{equation}\label{eq:wdotplusk}
\lim_{V\to\infty} W_\bullet^{(l+k,\geq V)}/W_{\bullet}^{(l,\geq V)} < \infty 
\end{equation}
exists for all $l\geq 1$.
Combining these limits we find that
\begin{equation}\label{eq:limwdotratio}
\lim_{V\to\infty} \frac{W_{\bullet}^{(l,\geq V-k)}}{W_{\bullet}^{(l,\geq V)}} = \lim_{V\to\infty} \frac{W_{\bullet}^{(l+k,\geq V)}}{W_{\bullet}^{(l,\geq V)}}\frac{W_{\bullet}^{(l+2k,\geq V)}}{W_{\bullet}^{(l+k,\geq V)}}\frac{W_{\bullet}^{(l+2k-2,\geq V-1)}}{W_{\bullet}^{(l+2k,\geq V)}}\cdots \frac{W_{\bullet}^{(l,\geq V-k)}}{W_{\bullet}^{(l+2,\geq V-k+1)}}
\end{equation}
converges for all $l\geq 1$ and takes value in $[1,\infty)$.
It is not hard to see that if this limit is larger than one then the generating function $W_g^{(l)}$ must have radius of convergence larger than one, in contradiction with Lemma \ref{thm:rovdisk}.
Therefore we conclude that (\ref{eq:limwdotratio}) has limit one for all $l\geq 1$ and all $k\geq 1$ such that $q_{k+2}>0$. 
Since $W_\bullet^{(l,\geq V)}$ is by construction non-increasing for increasing $V$, this implies that the same limit holds for any $k\in\Z$, i.e. ,
\begin{equation}\label{eq:wdotvolconv}
\lim_{V\to\infty} \frac{W_\bullet^{(l,\geq V+v)}}{W_\bullet^{(l,\geq V)}} = 1 \quad\text{for all }l\geq 1, v\in\Z.
\end{equation}

If $\qseq$ is non-bipartite (\ref{eq:wdotmin2}) and (\ref{eq:wdotplusk}) together with (\ref{eq:wdotvolconv}) imply that for all $l,l'\geq 1$ and $v\in\Z$,
\begin{equation*}
\lim_{V\to\infty} \frac{W_\bullet^{(l',\geq V+v)}}{W_{\bullet}^{(l,\geq V)}} = \frac{F(l')}{F(l)} c_+^{l'-l}, \quad\text{with }F(l):= \lim_{V\to\infty} \frac{W_\bullet^{(l,\geq V)}}{W_{\bullet}^{(1,\geq V)}} c_+^{1-l}.
\end{equation*}
In the bipartite case we have the same identity for even $l,l'\geq 2$ and $v\in\Z$, but one should define $F(l):=  \lim_{V\to\infty} W_\bullet^{(l,\geq V)}/W_{\bullet}^{(2,\geq V)} c_+^{2-l}$.
In either case we set $F(l)=0$ for $l\leq 0$. 
By first taking the $V\to\infty$ limit of (\ref{eq:lenvolprob}) and then summing over $v\in\Z_{\geq0}$ we find
\begin{equation}\label{eq:inftransprob}
\prob_{\map_\infty}(l_{i+1} = l+k, V_{i+1} < \infty | l_i =l,V_i=v) = \frac{F(l+k)}{F(l)}\nu(k)
\end{equation}
for all $l\geq 1$ (and $l$ even in the bipartite case) and all $k\in \Z$.
The one-endedness of $\map_\infty$ implies that $\prob_{\map_\infty}(V_{i+1}=\infty | l_i=l, V_i=v)=0$ and therefore (\ref{eq:inftransprob}) leads to
\begin{equation}\label{eq:inftransprob2}
\prob_{\map_\infty}(l_{i+1} = l+k | l_i =l) = \frac{F(l+k)}{F(l)}\nu(k).
\end{equation}
Since $\map_\infty$ has no infinite faces, these probabilities must sum to one, implying that $F$ is $\nu$-harmonic on $\Z_{>0}$.
It is known, see e.g. \cite{doney_martin_1998}, Theorem 1, that for a general aperiodic random walk with step probabilities $\nu$ there exists up to rescaling at most one non-negative function that is $\nu$-harmonic on $\Z_{>0}$ and vanishes on $\Z_{\leq 0}$.
Therefore we necessarily have $F = h_r^{(1)}$, since $h_r^{(1)}$ is also $\nu$-harmonic on $\Z_{>0}$, while $F(1)=h_r^{(1)}(1)=1$ for non-bipartite $\qseq$ and $F(2)=h_1^{(1)}(2)=1$ for bipartite $\qseq$.
\end{proof}

It follows easily from the ingredients of this proof, e.g. by deducing the laws of the peeling process from the large-$V$ limit of (\ref{eq:wbulletV}), that apart from the bias in the perimeter $l_i$ the laws of the explored maps after $i$ steps in the $\qseq$-IBPM $\map_\infty$ and in the pointed $\qseq$-Boltzmann planar map $\map$ agree (provided both arise from the same peeling algorithm).
Indeed, if $\emap_0$ is a possible explored map of $\map_\infty$ after $i$ steps, i.e. $\prob(\emap(\map_\infty,F_i)=\emap_0)>0$, then 
\begin{equation}\label{eq:emaprel}
\prob(\emap(\map_\infty,F_i)=\emap_0) = \frac{h^{(1)}_r(l)}{h^{(1)}_r(l_0)} \frac{h^{(0)}_r(l_0)}{h^{(0)}_r(l)} \prob(\emap(\map,F_i)=\emap_0),
\end{equation}
where $l$ is the outer face degree of $\emap_0$ and $l_0$ is the root face degree (which we usually take to be 2 for the $\qseq$-IBPM).
In particular, this means that conditionally on the perimeter process the description of the volume process in Section \ref{sec:volume} remains valid for the infinite $\qseq$-Boltzmann planar map without change.

Theorem \ref{thm:peelingibpm} also allows us to establish the following expected limits in an indirect way.
Recall that $W^{(l,V)} = \sum_{\map}w_\qseq(\map)$ is the disk function involving a sum over planar maps $\map$ with exactly $V$ vertices and root face degree $l$.

\begin{corollary} For $l\geq 1$ and $v\in\Z$, provided $W^{(2,V-v)} \neq 0$ and $W^{(l,V)}\neq 0$ for infinitely many $V>0$, along this subsequence the disk function satisfies the limit
\begin{equation}\label{eq:fixedvlim}
\frac{W^{(l,V-v)}}{W^{(2,V)}} \xrightarrow{V\to\infty} \frac{h_r^{(1)}(l)}{h_r^{(1)}(2)} c_+^{l-2}.
\end{equation}
\end{corollary}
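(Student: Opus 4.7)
The plan is to compute the probability $\prob(\emap(\map_\infty, F_i) = \emap_0)$ that the lazy peeling of the $\qseq$-IBPM produces a fixed explored map $\emap_0$ in two independent ways, and equate the resulting expressions to extract the claimed asymptotic.

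Fix any peel algorithm and let $\emap_0$ be an explored map appearing with positive probability after $i$ peel steps on a pointed Boltzmann planar map $\map \in \maps_\bullet^{(2)}$, with outer face degree $l$ and $n_0 := |\vertices(\emap_0)|$ vertices. Let $w(\emap_0)$ denote the constant such that $w_\qseq(\map) = w(\emap_0)\, w_\qseq(\umap_0)$ for any unexplored map $\umap_0 \in \maps_\bullet^{(l)}$, as in the proof of Proposition 1. Since the outer face of $\emap_0$ is simple (contributing $l$ distinct vertices) and glues to the root face of $\umap_0$, a direct vertex count gives $|\vertices(\map)| = n_0 + |\vertices(\umap_0)| - l$ independently of the number of distinct vertices on the root face boundary of $\umap_0$. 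Combined with the Markov property, this yields
\[
\prob\bigl(\emap(\map, F_i) = \emap_0 \bigm| |\vertices(\map)| = V\bigr) = \frac{w(\emap_0)\, W_\bullet^{(l,\, V - n_0 + l)}}{W_\bullet^{(2, V)}}.
\]
By Stephenson's local limit theorem (Theorem \ref{thm:stephenson}), applied to the event $\{\emap(\map, F_i) = \emap_0\}$ (which depends on $\map$ only through a ball of finite radius around the root), the left-hand side converges as $V \to \infty$ along $I_{\qseq, 2}$ to $\prob(\emap(\map_\infty, F_i) = \emap_0)$.

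On the other hand, the relation (\ref{eq:emaprel}) established in the proof of Theorem \ref{thm:peelingibpm}, combined with $\prob(\emap(\map, F_i) = \emap_0) = w(\emap_0)\, W_\bullet^{(l)}/W_\bullet^{(2)}$ and the closed form $W_\bullet^{(l)} = c_+^l h_r^{(0)}(l)$ from Proposition \ref{thm:pointeddisk}, gives
\[
\prob\bigl(\emap(\map_\infty, F_i) = \emap_0\bigr) = w(\emap_0)\, c_+^{l-2}\, \frac{h_r^{(1)}(l)}{h_r^{(1)}(2)}.
\]
Equating the two expressions and canceling the common factor $w(\emap_0) > 0$ yields $\lim_V W_\bullet^{(l,\, V - n_0 + l)}/W_\bullet^{(2, V)} = c_+^{l-2} h_r^{(1)}(l)/h_r^{(1)}(2)$; converting pointed to unpointed via $W_\bullet^{(l, V)} = V\, W^{(l, V)}$ and using $V/(V - v) \to 1$ produces the corollary with $v := n_0 - l$.

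The main obstacle is covering every $v \in \Z$ in the support subsequence from a single family of $\emap_0$'s. For $v \geq 0$ admitted by the $\qseq$-lattice, one constructs $\emap_0$ with $n_0 = l + v$ by additional peel steps that introduce interior vertices---either by exploring faces of higher degree, or via pruning steps whose finite discarded components are filled in by Boltzmann disks. For $v < 0$, I would invoke the symmetric argument: starting from a pointed Boltzmann planar map with root face degree $l$ and peeling to an $\emap_0$ with outer face degree $2$ and $n_0 = 2 + |v|$ produces the analogous identity with the roles of $l$ and $2$ interchanged, and a re-indexing of $V$ recovers the corollary for $v \leq 0$. This last step requires the natural extension of Stephenson's theorem to pointed Boltzmann maps of general root face degree, which follows from the stated case by essentially the same proof.
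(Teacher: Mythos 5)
Your proof takes the same core approach as the paper: fix a possible explored map $\emap_0$ with outer face degree $l$ and $v$ fully explored vertices, compute $\prob(\emap(\map_\infty,F_i)=\emap_0)$ in two ways (once via Stephenson's local limit together with the explicit probability $w_\qseq(\emap_0)\,W_\bullet^{(l,V-v)}/W_\bullet^{(2,V)}$ under conditioning on $V$ vertices, once via the relation (\ref{eq:emaprel})), equate, and cancel $w_\qseq(\emap_0)$. Your vertex bookkeeping $|\vertices(\map)|=n_0+|\vertices(\umap_0)|-l$ and the conversion $W_\bullet^{(l,V)}=V\,W^{(l,V)}$ (with $V/(V-v)\to 1$) are both correct and match the paper implicitly. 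This correctly handles $v$ sufficiently large, exactly as in the paper.

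Where you diverge is in covering the remaining range of $v$. The paper indicates (and leaves to the reader) an elementary telescoping of ratios: once $W^{(l,V-v)}/W^{(2,V)}\to h_r^{(1)}(l)c_+^{l-2}/h_r^{(1)}(2)$ is known for all $l$ and $v\geq v_0(l)$, then in particular $W^{(2,V-v)}/W^{(2,V)}\to 1$ for $v$ large, which by a quotient of two such known limits gives $W^{(2,V+m)}/W^{(2,V)}\to 1$ for all $m$, and multiplying by the established $l$-limit with a large shift reaches all $v\in\Z$. This requires nothing beyond the already-established limits. By contrast, you propose to re-run the entire argument with the roles of root face degree $2$ and $l$ interchanged, which forces you to appeal to a version of Stephenson's local limit theorem for $\qseq$-Boltzmann maps with general root face degree. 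That extension is plausible but is not stated in \cite{stephenson_local_2014} as cited, and you do not prove it; as written this leaves a real gap in the $v<0$ case that the paper's ratio argument avoids entirely. If you replace that final step with the ratio telescoping, the proof is complete and essentially identical to the paper's.
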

\begin{proof}
It is not too hard to see that provided $v$ is sufficiently large one can find a possible explored map $\emap_0$ of $\map_\infty$, in the sense as above, with outer face degree $l$ and $v$ fully explored vertices, i.e. vertices that are not incident to the outer face.
According to Theorem \ref{thm:stephenson}, $\prob(\emap(\map_\infty,F_i)=\emap_0) = \lim_{V\to\infty} \prob(\emap(\map_V,F_i)=\emap_0)$ along $V\in I_{\qseq,2}$, where $\map_V$ is a $\qseq$-Boltzmann planar map with root face degree two conditioned to have $V$ vertices.
Assuming a deterministic peeling algorithm, we can explicitly evaluate
\begin{equation*}
\prob(\emap(\map_V,F_i)=\emap_0) = w_\qseq(\emap_0)\frac{W_\bullet^{(l,V-v)}}{W_\bullet^{(2,V)}},
\end{equation*}
where $w_\qseq(\emap_0) = \prod_f q_{\deg(f)}$ with the product running over all faces except the root face and outer face.
On the other hand, the right-hand side of (\ref{eq:emaprel}) equals
\begin{equation*}
\frac{h^{(1)}_r(l)}{h^{(1)}_r(2)} \frac{h^{(0)}_r(2)}{h^{(0)}_r(l)} w_\qseq(\emap_0)\frac{W_\bullet^{(l)}}{W_\bullet^{(2)}} = \frac{h^{(1)}_r(l)}{h^{(1)}_r(2)} w_\qseq(\emap_0) c_+^{l-2}.
\end{equation*}
Combining these facts we establish (\ref{eq:fixedvlim}) for sufficiently large $v$. 
By considering suitable ratios of the established limits one may obtain the limit in the full range of $v$, but we leave the details to the reader.
\end{proof}

\section{Scaling limit of the perimeter process}\label{sec:scaleperim}

Now that we have a simple characterization of the perimeter process $(l_i)_{i\geq 0}$ of an infinite $\qseq$-Boltzmann planar map, we can try to obtain its scaling limit for general weight sequences $\qseq$.
For this we need to study the positive and negative tails of the step probabilities $\nu$, which are related to each other through the following result.

\begin{lemma}\label{thm:nukernel}
Suppose $\nu$ is critical, i.e. $h_r^{(1)}$ is $\nu$-harmonic on $\Z_{>0}$, then the negative jump probabilities can be expressed linearly in the positive probabilities as
\begin{equation}\label{eq:negfrompos}
\nu(-k) = \sum_{m=1}^{\infty} \mathcal{R}_r(k,m) \nu(m),\quad \text{for }k\geq 1,
\end{equation}
where 
\begin{align}\label{eq:negposkernel}
\mathcal{R}_r(k,m) := \sum_{p=0}^{m-1}h_r^{(1)}(m-p)\left(h_r^{(-2)}(k+p-1)+r h_r^{(-2)}(k+p-2)\right)
\end{align}
\end{lemma}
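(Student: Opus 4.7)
The plan is to invert a strictly lower-triangular convolution that arises from the $\nu$-harmonicity of $h_r^{(1)}$ on $\Z_{>0}$. For each $k\geq 1$, splitting the identity $h_r^{(1)}(k)=\sum_l h_r^{(1)}(l+k)\nu(l)$ into non-negative and negative contributions and rearranging yields
\begin{equation*}
A_k:=h_r^{(1)}(k)(1-\nu(0))-\sum_{l\geq 1}h_r^{(1)}(k+l)\nu(l) = \sum_{j\geq 1}h_r^{(1)}(k-j)\nu(-j),
\end{equation*}
which is a lower-triangular system in the unknowns $\nu(-j)$ with unit sub-diagonal $h_r^{(1)}(1)=1$. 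I would invert this system by passing to formal Laurent series around infinity: with $\Phi^-(y):=\sum_{j\geq 1}\nu(-j)y^{-j}$ and $Q(y):=\sum_{j\geq 1}h_r^{(1)}(j)y^{-j}$, formula (\ref{eq:hfundef}) identifies $Q(y)=y/((y-1)^{3/2}\sqrt{y+r})$, and the triangular system above is equivalent to $\Phi^-(y)Q(y)=\sum_{k\geq 1}A_k y^{-k}$.

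Setting $H(y):=1/Q(y)=(y-1)^{3/2}\sqrt{y+r}/y$, a short calculation starting from $(y-1)^{3/2}/\sqrt{y+r}=\sum_n h_r^{(-2)}(n)y^{-n-1}$ and introducing the extra factor $(y+r)/y$ gives $H_n:=[y^{-n}]H(y)=h_r^{(-2)}(n-1)+r\, h_r^{(-2)}(n-2)$, with $H_n=0$ for $n\leq -2$. Extracting the coefficient of $y^{-j}$ in $\Phi^-(y)=H(y)\sum_{k\geq 1}A_k y^{-k}$ then produces the finite sum $\nu(-j)=\sum_{k=1}^{j+1}A_k H_{j-k}$. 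Substituting the two pieces of $A_k$ splits this into the term $(1-\nu(0))\sum_{k=1}^{j+1}h_r^{(1)}(k)H_{j-k}$, which equals $[y^{-j}](Q(y)H(y))=[y^{-j}](1)=0$ for every $j\geq 1$, so that the dependence on $\nu(0)$ drops out automatically.

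For the remaining contribution $-\sum_{l\geq 1}\nu(l)\sum_{k=1}^{j+1}h_r^{(1)}(k+l)H_{j-k}$, the substitution $m=k+l$ together with the identity $QH=1$ in the form $\sum_{m\geq 1}h_r^{(1)}(m)H_{j+l-m}=\delta_{j+l,0}=0$ reduces the inner sum to $-\sum_{p=0}^{l-1}h_r^{(1)}(l-p)H_{j+p}$ after the re-indexing $p=l-m$. Plugging in $H_{j+p}=h_r^{(-2)}(j+p-1)+r\, h_r^{(-2)}(j+p-2)$ then recognizes the coefficient of $\nu(l)$ as exactly $\mathcal{R}_r(j,l)$, yielding the stated formula. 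The main obstacle is purely bookkeeping: verifying the Laurent-series identification of $H_n$, checking that every infinite sum is finite at each fixed coefficient so the formal convolutions are legitimate, and tracking the two telescoping applications of $QH=1$. Conceptually the argument is a Wiener--Hopf-style inversion, and criticality -- i.e.\ the $\nu$-harmonicity of $h_r^{(1)}$ rather than only of $h_r^{(0)}$ -- is precisely what makes the $\nu(0)$-cancellation close.
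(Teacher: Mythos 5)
Your proof is correct and takes a genuinely different route from the one in the paper. The paper starts from the explicit formula $\hat{\nu}(y)=1+y^{-1}M(yc_+)\sqrt{(y-1)(y+r)}$ coming from the one-cut disk function, expands $M(yc_+)$ in terms of $h_r^{(0)}$ and $\sqrt{(y-1)(y+r)}$ in terms of $h_r^{(-1)}$, extracts the coefficient of $y^{-k+1}$, and then \emph{afterwards} eliminates $\nu(0)$ by substituting $\nu(0)=1-\sum_{m\ge1}h_r^{(1)}(m+1)\nu(m)$; the final formula with $h_r^{(-2)}$ emerges only after a final Abel resummation using $h_r^{(0)}=\Delta h_r^{(1)}$. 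You instead never invoke the disk-function machinery: you take the harmonicity relation itself as a strictly lower-triangular convolution system for $(\nu(-j))_{j\ge1}$ and invert it formally using $Q(y)H(y)=1$, where $Q(y)=y/((y-1)^{3/2}\sqrt{y+r})$ is the generating function of $h_r^{(1)}$ and $H(y)=1/Q(y)$ has coefficients $H_n=h_r^{(-2)}(n-1)+rh_r^{(-2)}(n-2)$. This is a Wiener--Hopf-type inversion that is more self-contained (it uses only the hypothesis that $h_r^{(1)}$ is $\nu$-harmonic on $\Z_{>0}$, not the one-cut form of $W(z)$), and it makes the disappearance of $\nu(0)$ automatic rather than requiring an explicit substitution: the $\nu(0)$-coefficient is $(1-\nu(0))[y^{-j}](QH)=0$. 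The two applications of $QH=1$ you describe check out, as does the bookkeeping: $H_n=0$ for $n\le-2$, $H_{-1}=1$, so all formal convolutions have finitely many nonzero terms at each coefficient, and the reindexing $p=l-m$ reproduces \eqref{eq:negposkernel} exactly. One small nitpick: you attribute the automatic $\nu(0)$-cancellation to criticality, but what criticality actually supplies is the starting relation $\Phi^-Q=\sum_k A_ky^{-k}$; given that relation, the cancellation is then a purely algebraic consequence of $QH=1$ and holds regardless.
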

\begin{proof}
From (\ref{eq:nudef}) and (\ref{eq:usualdiskfun}) it follows that the probability generating function $\hat{\nu}(y):=\sum_{k=-\infty}^{\infty}\nu(k)y^k$, which converges on an annulus $1\leq |y|\leq R$ for some $R \in [1,\infty]$, is given formally by
\begin{equation*}
\hat{\nu}(y) = 1+\frac{1}{y} M(y c_+)\sqrt{(y-1)(y+r)},
\end{equation*}
where $c_+ = \sqrt{2 / \nu(-2)}$.
Using (\ref{eq:mfunction}) this evaluates to
\begin{equation*}
\hat{\nu}(y) = 1+ \frac{1}{y}\left(-1+\sum_{m=0}^{\infty}\sum_{l=0}^{m} \nu(m) h_r^{(0)}(m-l) y^l\right) \sqrt{(y-1)(y+r)}. 
\end{equation*}
Writing 
\begin{equation*}
\sqrt{(y-1)(y+r)} = (y+r)\sum_{p=-1}^{\infty}h_r^{(-1)}(p)y^{-p-1} = \sum_{p=-2}^{\infty}\left(h_r^{(-1)}(p+1)+rh_r^{(-1)}(p)\right)y^{-p-1}
\end{equation*}
we get for $k\geq 1$,
\begin{align*}
\nu(-k) =& [y^{-k+1}]\left(-1+\sum_{m=0}^{\infty}\sum_{l=0}^{m} \nu(m) h_r^{(0)}(m-l) y^l\right)\sum_{p=-2}^{\infty}\left(h_r^{(-1)}(p+1)+rh_r^{(-1)}(p)\right)y^{-p-1}\nonumber\\
=&-h_r^{(-1)}(k-1)-rh_r^{(-1)}(k-2) \nonumber\\
&+ \sum_{m=0}^{\infty}\sum_{l=0}^{m}h_r^{(0)}(m-l)\left(h_r^{(-1)}(l+k-1)+rh_r^{(-1)}(l+k-2)\right)\nu(m).
\end{align*}
It follows from the $\nu$-harmonicity of $h_r^{(1)}$ that 
\begin{equation*}
\nu(0) = 1-\sum_{m=1}^{\infty}h_r^{(1)}(m+1)\nu(m).
\end{equation*} 
Hence we have (\ref{eq:negfrompos}) with 
\begin{align*}
\mathcal{R}_r(k,m) =& - h_r^{(1)}(m+1) \left(h_r^{(-1)}(l+k-1)+rh_r^{(-1)}(l+k-2)\right)\nonumber\\
& + \sum_{l=0}^{m}h_r^{(0)}(m-l)\left(h_r^{(-1)}(l+k-1)+rh_r^{(-1)}(l+k-2)\right)
\end{align*}
and (\ref{eq:negposkernel}) follows from writing $h_r^{(0)}(m-l)=h_r^{(1)}(m-l+1)-h_r^{(1)}(m-l)$ and recombining the terms.
\end{proof}

In the bipartite case $r=1$, the kernel $\mathcal{R}_r(k,m)$ can further be simplified to
\begin{equation*}
\mathcal{R}_1(2k,2m) = \frac{m\, h_1^{(1)}(2k)h_1^{(1)}(2m+2)}{2k(2k-1)(m+k)} = 4^{-m-k}\frac{m(2m+1)}{(m+k)(2k-1)}\binom{2k}{k}\binom{2m}{m}.
\end{equation*}

From (\ref{eq:negposkernel}) we can easily determine the large-$k$ behaviour of $\mathcal{R}_r(k,m)$ for fixed $m$,
\begin{equation*}
\lim_{k\to\infty}\frac{\mathcal{R}_r(k,m)}{h_r^{(-2)}(k)} = (1+r)\sum_{p=0}^{m-1}h_r^{(1)}(m-p) = (1+r)h_r^{(2)}(m+1),
\end{equation*}
where in the bipartite case $k$ and $m$ are restricted to be even.
Recall from (\ref{eq:lenconst}) the definition $\lenconst_\nu = \sum_{k=1}^\infty \nu(k) h_r^{(2)}(k+1)$, which is not necessarily finite.
If it is finite, for which regular criticality of $\nu$ is a sufficient condition, then Lemma \ref{thm:nukernel} together with (\ref{eq:hfunasymp}) implies that for $k\to\infty$ (and $r<1$)
\begin{equation}\label{eq:negnuasymp}
\lim_{k\to\infty}\nu(-k)\, k^{5/2} = \lenconst_\nu (1+r) \lim_{k\to\infty}h_r^{(-2)}(k)\,k^{5/2} = \frac{3\lenconst_\nu\sqrt{1+r}}{4\sqrt{\pi}}.
\end{equation}
If $r=1$, $\nu(-k)$ vanishes for odd $k$ but the limit still holds when $\nu(-k)$ is replaced by $(\nu(-k)+\nu(-k-1))/2$.

\begin{lemma}\label{thm:charfun}
If $h_r^{(1)}$ is $\nu$-harmonic on $\Z_{>0}$ and $\lenconst_\nu  < \infty$, then the characteristic function $\varphi_{\nu}(\theta)$ close to $\theta=0$ satisfies
\begin{equation}\label{eq:charfunexp}
\varphi_{\nu}(\theta):= \sum_{k=-\infty}^{\infty} \nu(k) e^{ik \theta} =  1 - \sqrt{\frac{1+r}{2}} \lenconst_\nu |\theta|^{1/2}(|\theta|-i\theta) + \mathcal{O}(|\theta|^{5/2}),
\end{equation}
which implies in particular that $\nu$ is centered.
\end{lemma}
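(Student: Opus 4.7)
The approach is to start from the closed-form expression
\begin{equation*}
\hat\nu(y) = 1 + \frac{1}{y}\,M(yc_+)\sqrt{(y-1)(y+r)}
\end{equation*}
for the probability generating function of $\nu$ that appears inside the proof of Lemma~\ref{thm:nukernel}, set $y=e^{i\theta}$, and Taylor-expand the three factors at $y=1$.

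The algebraic core of the argument is to identify the first two terms of $M(yc_+)$ at $y=1$. Unfolding the definition~(\ref{eq:mfunction}) of $M$ and using $q_k c_+^{k-2}=\nu(k-2)$ from~(\ref{eq:nudef}) together with $W_\bullet^{(j)}=c_+^j h_r^{(0)}(j)$, which is immediate from Proposition~\ref{thm:pointeddisk}, a double-sum reindexing combined with the telescoping identities $h_r^{(1)}(m+1)=\sum_{j=0}^{m}h_r^{(0)}(j)$ and $h_r^{(2)}(m+1)=\sum_{j=0}^{m-1}(m-j)h_r^{(0)}(j)$ yields
\begin{equation*}
M(c_+) = -1 + \sum_{m\geq 0}\nu(m)\,h_r^{(1)}(m+1), \qquad c_+ M'(c_+) = \sum_{m\geq 1}\nu(m)\,h_r^{(2)}(m+1) = \lenconst_\nu.
\end{equation*}
Evaluating the $\nu$-harmonicity of $h_r^{(1)}$ on $\Z_{>0}$ (the criticality assumption) at $l=1$ gives $\sum_{m\geq 0}\nu(m)\,h_r^{(1)}(m+1) = h_r^{(1)}(1) = 1$, hence $M(c_+)=0$; the second identity is simply the defining sum of $\lenconst_\nu$, finite by hypothesis.

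Combining $M(yc_+)=\lenconst_\nu(y-1)+O((y-1)^2)$ with $\sqrt{(y-1)(y+r)}=\sqrt{1+r}\,(y-1)^{1/2}\bigl(1+O(y-1)\bigr)$ produces
\begin{equation*}
\hat\nu(y) - 1 = \lenconst_\nu\sqrt{1+r}\,(y-1)^{3/2} + O\bigl((y-1)^{5/2}\bigr).
\end{equation*}
Setting $y=e^{i\theta}$, so $y-1=i\theta+O(\theta^2)$, fixing the branch of the square root by continuity of $\sqrt{(y-1)(y+r)}$ along the unit circle, and using the elementary identity
\begin{equation*}
(i\theta)^{3/2} = -\tfrac{1}{\sqrt{2}}\,|\theta|^{1/2}(|\theta|-i\theta),
\end{equation*}
valid for both signs of $\theta$, gives the claimed expansion~(\ref{eq:charfunexp}). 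Centeredness then follows a posteriori: the negative-tail asymptotic~(\ref{eq:negnuasymp}) together with the bound $k=O(h_r^{(2)}(k+1))$ on the positive side show $\expec|X_1|<\infty$, so $\varphi_\nu$ is differentiable at $0$ with derivative $i\expec[X_1]$, and the absence of a linear term forces $\expec[X_1]=0$.

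The main technical obstacle is controlling the $O((y-1)^{5/2})$ remainder, which requires $M$ to be regular at $c_+$ beyond mere differentiability. For regular critical $\qseq$ the series defining $U'$ and $M$ extends analytically past $c_+$ and the bound is automatic; in the general critical case one needs a finer accounting of the second-order Taylor remainder of $M$, or else one may fall back on the weaker statement with $o(|\theta|^{3/2})$ in place of $O(|\theta|^{5/2})$, which is still sufficient for the centeredness conclusion and the scaling-limit applications of the following sections.
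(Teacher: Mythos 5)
Your proposal starts from the same closed-form expression $\hat\nu(y)=1+\frac{1}{y}M(yc_+)\sqrt{(y-1)(y+r)}$ as the paper, but then diverges in method: you Taylor-expand $M$ at $c_+$ as a function, extracting $M(c_+)=0$ from the $l=1$ harmonicity relation and $c_+M'(c_+)=\lenconst_\nu$ by a reindexing argument, whereas the paper first rewrites $M(c_+e^{i\theta})=\sum_m\nu(m)\sum_{l=0}^m h_r^{(0)}(m-l)(e^{il\theta}-1)$ and performs a termwise expansion of the full product, justified by the uniform bound $\bigl|(e^{il\theta}-1)\sqrt{(e^{i\theta}-1)(e^{i\theta}+r)}\bigr|\leq l\sqrt{1+r}\,|\theta|^{3/2}$ and dominated convergence against the summable weights $\nu(m)h_r^{(0)}(m-l)$ (whose total mass is exactly $\lenconst_\nu$). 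Both routes yield the same leading term, and your computation of $M(c_+)$, $c_+M'(c_+)$ is correct.

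What the paper's approach buys is precisely what you identify as the ``main technical obstacle'': by bounding each term before expanding, the paper never needs $M$ to be analytic or twice differentiable at $c_+$, only the finiteness of $\lenconst_\nu$. Your direct Taylor expansion of $M$ to second order implicitly requires extra regularity of $M$ near $c_+$ (automatic in the regular critical case, not in general), and you are right to flag this. However, the same caveat in fact applies to the paper's own stated remainder: the per-term $\mathcal{O}(|\theta|^{5/2})$ error carries an $l$-dependent prefactor, and summing against $\nu(m)h_r^{(0)}(m-l)$ with only $\lenconst_\nu<\infty$ yields $o(|\theta|^{3/2})$ by dominated convergence, not a uniform $\mathcal{O}(|\theta|^{5/2})$. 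Your observation that $o(|\theta|^{3/2})$ suffices for the centeredness conclusion and for Proposition \ref{thm:lengthscaling} is exactly the right way to close this gap, and it shows your reading of where the estimate is actually used is correct. One small further remark on your last step: the paper deduces finiteness of first moments from (\ref{eq:negnuasymp}) alone, rather than splitting into tails; your bound $k=\mathcal{O}(h_r^{(2)}(k+1))$ for the positive tail is fine but redundant given that $\lenconst_\nu<\infty$ already controls $\sum_k k\nu(k)$ via $k\leq h_r^{(2)}(k+1)\cdot\mathrm{const}$.
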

\begin{proof}
From $\lenconst_\nu  < \infty$ and (\ref{eq:negnuasymp}) it follows that $\nu$ has finite first moments and therefore the characteristic function $\varphi_{\nu}(\theta)$ is continuously differentiable.
Using (\ref{eq:nudef}) and (\ref{eq:usualdiskfun}) we can express $\varphi_{\nu}(\theta)$ as
\begin{equation}\label{eq:charfunexpr}
\varphi_{\nu}(\theta) = 1 + e^{-i\theta}M(c_+ e^{i\theta})\sqrt{(e^{i\theta}-1)(e^{i\theta}+r)}.
\end{equation}
From (\ref{eq:mfunction}) it follows that
\begin{equation*}
M(c_+e^{i\theta}) = -1 + \sum_{m=0}^\infty \nu(m) \sum_{l=0}^m h_r^{(0)}(m-l) e^{i l\theta} = \sum_{m=0}^\infty \nu(m) \sum_{l=0}^m h_r^{(0)}(m-l) \left(e^{i l\theta}-1\right).
\end{equation*}
One may check that for any $\theta\in\R$ and $l \geq 0$ 
\begin{equation*}
\left|(e^{il\theta}-1) \sqrt{(e^{i\theta}-1)(e^{i\theta}+r)}\right| \leq l\sqrt{1+r}\, |\theta|^{3/2}.
\end{equation*}
Therefore 
\begin{equation*}
|\varphi_{\nu}(\theta)-1| \leq \sqrt{1+r}|\theta|^{3/2}\sum_{m=0}^{\infty}\nu(m) \sum_{l=0}^{m}l\,h_r^{(0)}(m-l) = \sqrt{1+r}|\theta|^{3/2} \lenconst_\nu
\end{equation*}
and we may perform the small-$\theta$ expansion termwise.
Equation (\ref{eq:charfunexp}) then follows from 
\begin{equation*}
e^{-i\theta}(e^{il\theta}-1) \sqrt{(e^{i\theta}-1)(e^{i\theta}+r)} = -\sqrt{\frac{1+r}{2}}|\theta|^{1/2}(|\theta|-i\theta) + \mathcal{O}(|\theta|^{5/2}).
\end{equation*}
In particular $\varphi_\nu'(0) = 0$ and therefore $\sum_{k=-\infty}^\infty k\nu(k) = 0$.
\end{proof}
Notice that the last statement of Lemma \ref{thm:charfun} that $\nu$ is centered can also be inferred from Proposition \ref{thm:drift}, since an oscillating random walk $(X_i)_i$ with finite first moments necessarily has vanishing drift. 

Let $S_{3/2}$ be the $3/2$-stable Levy process with no positive jumps, normalized such that for all $\theta\in\R$
\begin{equation}
\expec \exp(i \theta S_{3/2}(t)) = \exp\left[-t|\theta|^{1/2}(|\theta|-i\theta)/\sqrt{2}\right],
\end{equation}
and $S_{3/2}^+$ the corresponding process conditioned to be positive (see e.g. \cite{chaumont_conditionings_1996}).
The following result is the analogue of Proposition 5 in \cite{curien_scaling_2014}.

\begin{proposition}\label{thm:lengthscaling}
Let $(X_k)_{k\geq 0}$ be a random walk with step probabilities $\nu(k)$ and $r\in(-1,1]$, such that $h_r^{(1)}$ is $\nu$-harmonic on $\Z_{>0}$ and let $(l_k)_{k\geq 0}$ be the corresponding Doob transform.
If $\lenconst_\nu < \infty$ then we have the following convergence in distribution in the sense of Skorokhod of $(X_k)_{k\geq 0}$ and $(l_k)_{k\geq 0}$ to $S_{3/2}$ and $S_{3/2}^+$,
\begin{equation}
\left( \frac{X_{\lfloor n t\rfloor}}{\left(\sqrt{1+r}\lenconst_\nu n\right)^{\frac{2}{3}}} \right)_{t\geq 0} \xrightarrow[n\to\infty]{(\rmd)} S_{3/2} (t),\quad\quad
\left( \frac{l_{\lfloor n t\rfloor}}{\left(\sqrt{1+r}\lenconst_\nu n\right)^{\frac{2}{3}}} \right)_{t\geq 0} \xrightarrow[n\to\infty]{(\rmd)} S^+_{3/2} (t).
\end{equation}
\end{proposition}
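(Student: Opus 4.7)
The plan is to reduce both statements to standard invariance principles, with the characteristic function expansion from Lemma \ref{thm:charfun} as the essential analytic input. For the first convergence, set $a_n := (\sqrt{1+r}\lenconst_\nu n)^{2/3}$. Since $\nu$ is centered and its characteristic function has the small-$\theta$ behaviour of Lemma \ref{thm:charfun}, a direct computation using $a_n^{3/2} = \sqrt{1+r}\lenconst_\nu n$ gives
$$\varphi_\nu(\theta/a_n)^{\lfloor nt\rfloor} \xrightarrow[n\to\infty]{} \exp\bigl(-t|\theta|^{1/2}(|\theta|-i\theta)/\sqrt{2}\bigr)$$
for every fixed $\theta\in\R$ and $t\geq 0$. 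The right-hand side is precisely the characteristic function of $S_{3/2}(t)$, so L\'evy's theorem yields convergence of one-dimensional marginals. The classical functional limit theorem for triangular arrays of i.i.d.\ centred variables in the domain of attraction of a stable law then upgrades this to convergence in $\mathbb{D}([0,\infty),\R)$ equipped with the Skorokhod $J_1$ topology, giving the first claim.

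For the second convergence, the essential observation---already recorded in the proof of Proposition \ref{thm:drift}---is that $h_r^{(1)}(k+1)$ is the renewal function of the strict ascending ladder heights of $(X_0-X_i)_{i\geq0}$, that is, of the strict descending ladder heights of the walk $(X_i)$ itself. By the Bertoin--Doney construction \cite{bertoin_conditioning_1994}, the Doob $h$-transform of a random walk by this particular renewal function coincides in law with the walk conditioned in the Doob sense to stay strictly positive for all time; in view of formula (\ref{eq:htrans}), the perimeter process $(l_i)_{i\geq0}$ is therefore exactly $(X_i)_{i\geq0}$ started from $l_0$ and conditioned to stay positive. Since $(X_i)$ oscillates (Proposition \ref{thm:drift}) and lies in the domain of attraction of $S_{3/2}$ with normalization $a_n$, Caravenna's invariance principle \cite{caravenna_invariance_2008} for random walks conditioned to stay positive directly provides the stated convergence $l_{\lfloor nt\rfloor}/a_n \to S_{3/2}^+(t)$ in the Skorokhod sense, with the fixed starting value $l_0$ absorbed in the limit because $l_0/a_n \to 0$.

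The main technical point requiring care is the bipartite case $r=1$, where $\nu$ is supported on a strict sublattice and the walk is periodic, so one must check the hypotheses of \cite{caravenna_invariance_2008} in their periodic formulation---exactly as is done in \cite{curien_scaling_2014} for the UIPT, where the analogous periodicity arises. A secondary, purely bookkeeping issue is to verify that the normalization constant $\lenconst_\nu$ appearing in (\ref{eq:lenconst}) matches the one produced by Lemma \ref{thm:charfun}; this is immediate from the latter's derivation. Modulo these routine points, the combination of Lemma \ref{thm:charfun}, Proposition \ref{thm:drift}, and the two cited invariance principles is enough to conclude.
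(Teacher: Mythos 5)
Your proof is correct and follows essentially the same route as the paper: both apply Lemma \ref{thm:charfun} to show pointwise convergence of the characteristic functions $\varphi_\nu(\theta/a_n)^{\lfloor nt\rfloor}$, identify the limit as the law of $S_{3/2}$, and then invoke the Caravenna--Chaumont invariance principle for random walks conditioned to stay positive for the second convergence. Your added remarks on the renewal-function interpretation of $h_r^{(1)}$ and on the periodic (bipartite) case simply make explicit what the paper delegates to \cite{bertoin_conditioning_1994} and the reference to \cite{curien_scaling_2014}, Proposition 5.
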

\begin{proof}
The characteristic function for the distribution of $n^{-2/3}X_{\lfloor n t\rfloor}$ is given by
\begin{equation*}
\theta \to \left(\varphi_\nu(n^{-2/3}\theta)\right)^{\lfloor n t\rfloor}
\end{equation*}
with $\varphi_\nu$ as in Lemma \ref{thm:charfun}.
It converges pointwise as $n\to\infty$ to
\begin{equation*}
\left(\varphi_\nu(n^{-2/3}\theta)\right)^{\lfloor n t\rfloor} \xrightarrow[n\to\infty]{} \exp\left[-t\sqrt{\frac{1+r}{2}} \lenconst_\nu |\theta|^{1/2}(|\theta|-i\theta)\right],
\end{equation*}
which implies the stated convergence in distribution of the random walk $(X_k)_{k\geq 0}$.
Since $(l_k)_{k\geq 0}$ is obtained from $(X_k)_{k\geq 0}$ by conditioning on staying positive its convergence to $S^+_{3/2}$ follows from the invariance principle in \cite{caravenna_invariance_2008} (see also \cite{curien_scaling_2014}, Proposition 5, for the analogous statement). 
\end{proof}

\subsection{Intermezzo: heavy-tailed case}\label{sec:heavytailed}
In the rest of the paper we will only consider cases in which $\lenconst_\nu < \infty$, but for the sake of completeness let us comment on what happens when $\lenconst_\nu = \infty$.
The latter implies that $\sum_{k=1}^\infty k^{3/2} \nu(k)=\infty$, while $\sum_{k=1}^\infty k^{1/2} \nu(k)<\infty$ by (\ref{eq:admissiblecondition}).
Therefore we may define
\begin{equation*}
\alpha = \inf\left\{s\in\R:\sum_{k=1}^{\infty} k^{s} \nu(k) = \infty \right\}
\end{equation*}
which necessarily takes values in $[1/2,3/2]$.
For simplicity let us assume $r<1$ and
\begin{equation*}
\lim_{k\to\infty} k^{\alpha}\sum_{m=k}^{\infty} \nu(m)= p_+
\end{equation*}
with $p_+$ finite and positive. 
Then
\begin{align}
\lim_{k\to\infty} k^{\alpha+1}\nu(-k) &= \lim_{k\to\infty} \sum_{m=1}^{\infty} k^{\alpha+1} \left(\mathcal{R}_r(k,m)-\mathcal{R}_r(k,m-1)\right)\sum_{l=m}^{\infty}\nu(l) \nonumber\\
&= \lim_{k\to\infty} \int_0^\infty \rmd x k^2 \left(\mathcal{R}_r(k,\lfloor x k \rfloor)-\mathcal{R}_r(k,\lfloor x k \rfloor-1)\right) k^{\alpha}\sum_{l=\lfloor x k \rfloor}^{\infty}\nu(l).\label{eq:negnuasymp2}
\end{align}
One may check using the asymptotics (\ref{eq:hfunasymp}) that for $r<1$ we have 
\begin{equation*}
\lim_{k\to\infty} k^2\left(\mathcal{R}_r(k,\lfloor x k \rfloor)- \mathcal{R}_r(k,\lfloor x k \rfloor-1)\right) = \frac{\sqrt{x}}{2\pi} \frac{x+3}{(x+1)^2},
\end{equation*}
which suggests that (\ref{eq:negnuasymp2}) should equal
\begin{equation*}
p_+\int_0^\infty \rmd x \frac{\sqrt{x}}{2\pi}\frac{x+3}{(x+1)^2} x^{-\alpha} = \frac{\alpha}{\cos(\pi(\alpha-1))} p_+.
\end{equation*}
Hence, both tails of $\nu$ fall off with the same power $-\alpha-1$ and the positive tail is lighter by a factor of $\cos(\pi(\alpha-1))$ compared to the negative tail.\footnote{In a slightly different form this relation was observed in \cite{borot_recursive_2012,borot_more_2012} in the setting of planar maps coupled to an $O(n)$ loop model with $0<n<1$. The \emph{gasket} of a random map with loops, which is obtained by removing the interiors of all loops, turns out to be distributed as a $\qseq$-Boltzmann planar map for a particular weight sequence satisfying $q_k \sim k^{-\alpha-1}c_+^{-k}$, where $1/2<\alpha<3/2$ is one of the two solutions to $n=2\cos(\pi(\alpha-1))$. Notice that the ratio of the tails of $\nu$ is precisely $n/2$.}
We therefore expect, but do not prove, that $(n^{-1/\alpha}X_{\lfloor n t\rfloor})_{t\geq 0}$ converges in distribution as $n\to\infty$ (up to normalization) to an $\alpha$-stable Levy process $S_{\alpha,\beta}$ with skewness parameter 
\begin{equation}\label{eq:skewness}
\beta = -\cot^2(\pi\alpha/2),
\end{equation}
which is characterized by\footnote{Notice that the normalization of the process $S_{3/2}$ in Proposition \ref{thm:lengthscaling} differs slightly from $S_{\alpha,\beta}$ with $\alpha=3/2$ and $\beta=-1$: one has $S_{3/2}(\sqrt{2}t)=S_{3/2,-1}(t)$.} 
\begin{align}
\expec \exp(i \theta S_{\alpha,\beta}(t)) &= \exp\left[-t|\theta|^{\alpha}(1-i \beta \tan(\pi\alpha/2) \mathrm{sgn}(\theta))\right] \nonumber\\
&= \exp\left[-t|\theta|^{\alpha}(1-i \cot(\pi\alpha/2) \mathrm{sgn}(\theta))\right].
\end{align}
Another way to arrive at (\ref{eq:skewness}) is by combining Zolotarev's formula \cite{zolotarev_one-dimensional_1986} for the \emph{positivity parameter}
\begin{equation*}
\rho := \prob( S_{\alpha,\beta}(1) \geq 0 ) = \frac{1}{2} +\frac{1}{\pi\alpha}\arctan\left(\beta\tan\frac{\pi\alpha}{2}\right)
\end{equation*}
with the fact (see, e.g., \cite{chaumont_conditionings_1996,caravenna_invariance_2008}) that the renewal function $x\to x^{\alpha(1-\rho)}$ for $S_{\alpha,\beta}$ should be the scaling limit of the discrete renewal function $h_r^{(1)}$, implying that $\alpha(1-\rho) = 1/2$.

Notice that the value $\alpha=1$ is special since it yields a symmetric limiting process, namely the Cauchy process $S_{1,0}$.
One may wonder whether a random walk with critical step probabilities $\nu$ exists which is already symmetric at the discrete level.
It turns out that for each $r\in(-1,1]$ there exists a one parameter family of such random walks.
To see this recall the expression (\ref{eq:charfunexpr}) for the characteristic function $\varphi_\nu(\theta)$.
Symmetry of $\nu$ is equivalent to $\varphi_\nu(\theta)$ being real for all $\theta$, i.e.
\begin{equation}\label{eq:realcond}
M(c_+ e^{i\theta})e^{-i\theta}\sqrt{(e^{i\theta}-1)(e^{i\theta}+r)} \in \R.
\end{equation}
Notice that $e^{-i\theta}\sqrt{(e^{i\theta}-1)(e^{i\theta}+r)}$ only contains non-positive powers of $e^{i\theta}$, while $M(c_+ e^{i\theta})$ contains only non-negative powers of $e^{i\theta}$.
Therefore the only way (\ref{eq:realcond}) can be real is when they are complex conjugates up to a real constant.
Hence
\begin{equation*}
\varphi_\nu(\theta) = 1 - a \left|\sqrt{(e^{i\theta}-1)(e^{i\theta}+r)}\right|^2 = 1-2a \sqrt{1+r^2+2r \cos(\theta)}\left|\sin(\theta/2)\right|.
\end{equation*}
This corresponds to a (critical) step probability distribution $\nu$ if and only if
\begin{equation*}
\int_0^{2\pi} \varphi_\nu(\theta) \in [0,1),
\end{equation*}
which amounts to
\begin{equation*}
0 < a \leq \begin{cases}
\pi/4 & \text{for }r=0,1\\
\frac{\pi}{2(r+1)+\frac{(r-1)^2}{\sqrt{r}}\mathrm{arctanh}\left(\frac{2\sqrt{r}}{r+1}\right)} & \text{for }0<r< 1 \\
\frac{\pi}{2(r+1)+\frac{(r-1)^2}{\sqrt{-r}}\mathrm{arctan}\left(\frac{2\sqrt{-r}}{r+1}\right)} & \text{for }-1<r<0
\end{cases}.
\end{equation*}
For example, in the case $r=1$ and $a=\pi/4$ we get
\begin{equation*}
\nu(k)=\begin{cases}\frac{1}{k^2-1} & \text{for }k\text{ even and }k\neq 0   \\
	0 & \text{otherwise}
	\end{cases},
\end{equation*}
which is the random walk associated to the weight sequence with $q_{2k} = 6^{1-k}/((2k-2)^2-1)$, $k>1$, and the other weights zero. 

\section{Scaling limit of the volume process}\label{sec:scalevol}
Let $\qseq$ be regular critical and $\map_l\in\maps^{(l)}$, $l\geq 1$, a $\qseq$-Boltzmann planar map with root face degree $l$.
The expected number of vertices in $\map_l$ is easily determined to be
\begin{equation*}
\expec|\mathcal{V}(\map_l)| = \frac{W_\bullet^{(l)}}{W^{(l)}} = \frac{h_r^{(0)}(l)\nu(-2)}{\nu(-l-2)}.
\end{equation*}
Therefore, using (\ref{eq:negnuasymp}),
\begin{equation}\label{eq:volexpeclim}
\mathcal{B}_\nu:=\lim_{l\to\infty} l^{-2} \expec|\mathcal{V}(\map_l)| = \frac{\nu(-2)}{(1+r)\lenconst_\nu} \lim_{l\to\infty} \frac{h_r^{(0)}(l)}{l^2h_r^{(-2)}(l)} = \frac{4\nu(-2)}{3(1+r)\lenconst_\nu}.
\end{equation}
With some extra work we can determine the convergence in distribution of $l^{-2}|\mathcal{V}(\map_l)|$ as follows.

\begin{proposition}\label{thm:volumedist}
Let $\xi$ be a positive random variable with density $e^{-1/(2\xi)} \xi^{-5/2} /\sqrt{2\pi}$. 
Then we have the weak convergence
\begin{equation}\label{eq:volumeconv}
\frac{l^{-2}|\mathcal{V}(\map_l)|}{\mathcal{B}_{\nu}} \xrightarrow[l\to\infty]{(d)} \xi.
\end{equation}
\end{proposition}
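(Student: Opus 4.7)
The strategy is to compute the Laplace transform of $|\vertices(\map_l)|$ by first handling the pointed version using the universal expression for the pointed disk function in Proposition \ref{thm:pointeddisk}, and then transferring back to the unpointed setting via a size-biasing plus integration argument. I will substitute $g = \exp(-\mu/l^2)$ throughout and send $l \to \infty$ with $\mu \geq 0$ fixed.

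From $W_{\bullet,g}^{(l)}(\qseq) = g^{1+l/2} W_\bullet^{(l)}(\qseq_g) = g^{1+l/2} h_{r(g)}^{(0)}(l)\, c_+(g)^l$, one has exactly
\begin{equation*}
\expec^\bullet\!\left[g^{V_l^\bullet}\right] = g^{1+l/2} \cdot \frac{h_{r(g)}^{(0)}(l)}{h_r^{(0)}(l)} \cdot \left(\frac{c_+(g)}{c_+}\right)^{\!l},
\end{equation*}
where $V_l^\bullet := |\vertices(\map_l^\bullet)|$ for a pointed $\qseq$-Boltzmann planar map $\map_l^\bullet \in \maps_\bullet^{(l)}$. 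As $l\to\infty$, the first factor tends to $1$, the middle factor tends to $1$ by (\ref{eq:hfunasymp}) (parity-averaged if $r=1$) combined with $r(g)\to r$, and Lemma \ref{thm:volumeexpansion} together with $c_+^2 = 2/\nu(-2)$ and $\mathcal{B}_\nu = 4\nu(-2)/(3(1+r)\lenconst_\nu)$ gives $1-c_+(g)/c_+ \sim \sqrt{2\mathcal{B}_\nu \mu}/l$, so that $(c_+(g)/c_+)^l \to e^{-\sqrt{2\mu\mathcal{B}_\nu}}$. Hence $\expec^\bullet[e^{-\mu V_l^\bullet/l^2}] \to e^{-\sqrt{2\mu\mathcal{B}_\nu}}$.

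The size-biasing relation $\expec[V_l f(V_l)] = (W_\bullet^{(l)}/W^{(l)})\,\expec^\bullet[f(V_l^\bullet)]$, applied with $f(v) = e^{-\mu v/l^2}$ and then divided by $l^2$, together with $W_\bullet^{(l)}/(l^2 W^{(l)}) \to \mathcal{B}_\nu$ from (\ref{eq:volexpeclim}), yields
\begin{equation*}
-\phi_l'(\mu) = \frac{\expec\!\left[V_l e^{-\mu V_l/l^2}\right]}{l^2} \;\longrightarrow\; \mathcal{B}_\nu\,e^{-\sqrt{2\mu\mathcal{B}_\nu}}
\end{equation*}
where $\phi_l(\mu) := \expec[e^{-\mu V_l/l^2}]$. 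The uniform bound $|\phi_l'(\mu)| \leq \expec[V_l]/l^2 = O(1)$ allows dominated convergence on each compact $\mu$-interval, and integrating from $\mu=0$ (using $\phi_l(0)=1$) gives
\begin{equation*}
\phi_l(\mu) \;\longrightarrow\; 1 - \int_0^\mu \mathcal{B}_\nu\,e^{-\sqrt{2t\mathcal{B}_\nu}}\,\rmd t \;=\; (1+\sqrt{2\mu\mathcal{B}_\nu})\,e^{-\sqrt{2\mu\mathcal{B}_\nu}}
\end{equation*}
via the substitution $u = \sqrt{2t\mathcal{B}_\nu}$.

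To conclude, one checks that $\lambda \mapsto (1+\sqrt{2\lambda})e^{-\sqrt{2\lambda}}$ is the Laplace transform of $\xi$ with the stated density, for instance via the Bessel integral $\int_0^\infty y^{1/2} e^{-y/2 - \lambda/y}\,\rmd y = 2(2\lambda)^{3/4} K_{3/2}(\sqrt{2\lambda})$ and $K_{3/2}(z) = \sqrt{\pi/(2z)}(1+1/z)e^{-z}$, or more simply by noting that its derivative in $\lambda$ is $-e^{-\sqrt{2\lambda}}$, the Laplace transform of the Lévy $1/2$-stable law. Setting $\mu = \lambda/\mathcal{B}_\nu$ and invoking the continuity theorem for Laplace transforms establishes (\ref{eq:volumeconv}). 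The most delicate point is the joint asymptotic in the pointed Laplace transform: Lemma \ref{thm:volumeexpansion} is precisely calibrated so that the powers of $c_+(g)/c_+$ combine with $\sqrt{1-g} \sim \sqrt{\mu}/l$ to produce the nontrivial exponential $e^{-\sqrt{2\mu\mathcal{B}_\nu}}$, while the ratio of $h$-functions contributes only at the next order.
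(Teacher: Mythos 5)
Your proposal is correct and takes essentially the same route as the paper: both reduce the problem to the ratio $W_{\bullet,g}^{(l)}/W_\bullet^{(l)}$ (which you read as the Laplace transform of the \emph{pointed} vertex count while the paper reads it as the \emph{size-biased} Laplace transform of the unpointed one — the same object), apply Lemma~\ref{thm:volumeexpansion} to get the $e^{-\sqrt{2\lambda}}$ limit, and then integrate to recover the unpointed Laplace transform. As a small bonus you correctly carry the $g^{1+l/2}$ prefactor, which the paper's displayed formula silently drops (harmlessly, since it tends to $1$ with $g=\exp(-\mu/l^2)$).
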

\begin{proof}
We will prove that the Laplace transform of the left-hand side of (\ref{eq:volumeconv}) converges pointwise to the Laplace transform of $\xi$, i.e.,
\begin{equation}\label{eq:volumelaplacelimit}
\lim_{l\to\infty}
\expec \exp\left(-\lambda l^{-2}|\mathcal{V}(\map_l)|/\mathcal{B}_\nu\right) = (1+\sqrt{2\lambda})e^{-\sqrt{2\lambda}}
\end{equation}
for all $\lambda > 0$.
In order to evaluate the expectation value on the left-hand side let us write $g=\exp(-\lambda l^{-2}/\mathcal{B}_\nu)<1$
and consider the disk function $W^{(l)}_g$ in (\ref{eq:diskfunctionvertexweight}).
Recall that we may identify $W^{(l)}_g(\qseq) = g^{1+l/2} W^{(l)}(\qseq_g)$ where $\qseq_g$ is the weight sequence determined by $(q_g)_{k} := g^{(k-2)/2}q_k$ and let $c_{\pm}(g)$ (and $r(g) = -c_-(g)/c_+(g)$) be the associated constants from Proposition \ref{thm:pointeddisk}.
Then one finds
\begin{equation*}
\frac{\expec|\mathcal{V}(\map_l)| g^{|\mathcal{V}(\map_l)|}}{\expec |\mathcal{V}(\map_l)|} = \frac{W_{\bullet,g}^{(l)}}{W_\bullet^{(l)}} = \left(\frac{c_+(g)}{c_+(1)}\right)^l \frac{h^{(0)}_{r(g)}(l)}{h^{(0)}_{r(1)}(l)}. 
\end{equation*}
From Lemma \ref{thm:volumeexpansion} it follows that $r(g)$ is continuous as $g\to 1$ and that
\begin{equation*}
\lim_{g\uparrow 1} \frac{1-c_+(g)/c_+(1)}{\sqrt{1-g}} = \sqrt{2\mathcal{B}_{\nu}}.
\end{equation*}
Setting $g=\exp(-\lambda l^{-2}/\mathcal{B}_\nu)$ and using the asymptotics (\ref{eq:hfunasymp}) we find 
\begin{equation*}
\lim_{l\to\infty} h^{(0)}_{r(g)}(l)/h^{(0)}_{r(1)}(l) = 1
\end{equation*}
and therefore
\begin{equation*}
\lim_{l\to\infty} \frac{\expec|\mathcal{V}(\map_l)| \exp(-\lambda l^{-2}|\mathcal{V}(\map_l)|/\mathcal{B}_\nu)}{\expec|\mathcal{V}(\map_l)|} = \lim_{l\to\infty} \left(1-\sqrt{2\lambda}/l\right)^l = e^{-\sqrt{2\lambda}}.
\end{equation*}
Since the left-hand side is bounded by one uniformly in $\lambda$ we may integrate both sides with respect to $\lambda$, which leads to (\ref{eq:volumelaplacelimit}).
\end{proof}

Given the process $S_{3/2}^+$ let us introduce an associated process $Z$ as in \cite{curien_scaling_2014,curien_hull_2014} by setting
\begin{equation*}
Z(t) = \sum_{t_i \leq t} \xi_i (\Delta S_{3/2}^+(t_i))^2,
\end{equation*}
where $(t_i)_{i\geq 0}$ is a measurable enumeration of the jumps of $S_{3/2}^+(t)$, $\Delta S_{3/2}^+(t_i)$ is the size of the jump at time $t_i$, and $(\xi_i)_{i\geq 0}$ are independent random variables with the law of $\xi$ in Proposition \ref{thm:volumedist}.

\begin{theorem}[Analogue of Curien--le Gall \cite{curien_scaling_2014}, Theorem 1]
Let $\qseq$ be a regular critical weight sequence. 
Then one has the following joint convergence in distribution in the sense of Skorokhod of the boundary length $(l_k)_{k\geq0}$ and the volume $(V_k)_{k\geq 0}$ of the peeling process of the infinite $\qseq$-Boltzmann planar map:
\begin{equation}
\left(\frac{l_{\lfloor n t\rfloor}}{\left(\sqrt{1+r}\lenconst_\nu n\right)^{2/3}}, \frac{V_{\lfloor n t\rfloor}}{\frac{8}{3c_+^2}\left(\frac{\lenconst_\nu}{1+r}\right)^{1/3} n^{4/3}}\right)_{t\geq 0} \xrightarrow[n\to\infty]{(\rmd)} \left(S^+_{3/2}(t),Z(t)\right)_{t\geq 0},
\end{equation}
\end{theorem}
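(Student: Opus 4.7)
The plan is to follow the strategy of Curien and Le Gall \cite{curien_scaling_2014} for the UIPT, adapted to the general regular critical setting by leveraging the structural results already established in this paper. The first coordinate is immediate: by Theorem \ref{thm:peelingibpm} the perimeter process of the $\qseq$-IBPM coincides in law with the Doob transform by $h_r^{(1)}$ of the random walk with step law $\nu$, so Proposition \ref{thm:lengthscaling} yields the Skorokhod convergence of the rescaled perimeter to $S^+_{3/2}$. By the remarks following Theorem \ref{thm:peelingibpm}, in particular (\ref{eq:emaprel}), conditionally on the perimeter trajectory $(l_k)_{k\geq 0}$ each pruning step with $l_{k+1}-l_k=-l-2$ contributes to the volume an independent amount distributed as $|\vertices(\map_l)|$, where $\map_l$ is an unpointed $\qseq$-Boltzmann planar map with root face degree $l$.

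For the joint convergence I would pass to a Skorokhod coupling on which the rescaled perimeter converges almost surely to $S^+_{3/2}$, so that the big jumps of the discrete perimeter (those of absolute size exceeding $\varepsilon n^{2/3}$) are in bijective correspondence with the big jumps of the limit on any compact time interval, with $-\Delta l_k \sim (\sqrt{1+r}\lenconst_\nu n)^{2/3}|\Delta S^+_{3/2}(t_i)| \to \infty$. At each such step, Proposition \ref{thm:volumedist} combined with the conditional independence gives
\begin{equation*}
\frac{V_{k+1}-V_k}{\mathcal{B}_\nu (\Delta l_k)^2} \xrightarrow{(d)} \xi_i,
\end{equation*}
with $(\xi_i)$ iid of the density from Proposition \ref{thm:volumedist}. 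Summing over the finitely many big jumps and inserting $\mathcal{B}_\nu = 8/(3c_+^2(1+r)\lenconst_\nu)$ from (\ref{eq:volexpeclim}) reproduces the announced rescaling constant $\frac{8}{3c_+^2}(\lenconst_\nu/(1+r))^{1/3}n^{4/3}$, and a continuous-mapping argument applied to the functional $X\mapsto \sum_{t_i\le t}\xi_i(\Delta X(t_i))^2$ (which is continuous on the set of Skorokhod paths with summable squared jumps, a set carrying $S^+_{3/2}$ almost surely) delivers convergence to $Z(t)$.

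The main obstacle, as in the triangulation case, is showing that small jumps contribute negligibly. Using the identity $\expec|\vertices(\map_l)| = h_r^{(0)}(l)\nu(-2)/\nu(-l-2)$ together with (\ref{eq:negnuasymp}) and (\ref{eq:hfunasymp}), one checks that this expectation is bounded by $C\,l^2$ uniformly in $l\geq 1$, so that
\begin{equation*}
\expec\Bigl[\sum_{k<\lfloor nt\rfloor,\;2\le|\Delta l_k|\le \varepsilon n^{2/3}}(V_{k+1}-V_k)\Bigr] \leq C\,\expec\Bigl[\sum_{k<\lfloor nt\rfloor,\;|\Delta l_k|\le\varepsilon n^{2/3}}|\Delta l_k|^2\Bigr].
\end{equation*}
After rescaling by $n^{4/3}$ and invoking the functional convergence of the perimeter, the right-hand side is asymptotically bounded by a constant times $\expec\!\sum_{t_i\le t,\,|\Delta S^+_{3/2}(t_i)|\le \varepsilon}(\Delta S^+_{3/2}(t_i))^2$, which tends to $0$ as $\varepsilon\downarrow 0$ by summability and integrability of the squared jumps of the $3/2$-stable process. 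Truncating the $\xi_i$ at a level $M$ to obtain finite second moments, applying Markov's inequality and then sending $M\to\infty$ upgrades this $L^1$ bound to tightness and shows the small-jump contribution is $o_P(n^{4/3})$, which combined with the previous paragraph yields the joint convergence in the Skorokhod sense.
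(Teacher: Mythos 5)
Your overall strategy matches the paper's, which refers directly to the argument of Curien--Le Gall: decompose $V_k=\sum_{i\le k}\mathbf{1}_{\{l_i<l_{i-1}-1\}}U_i$ with $U_i$ conditionally independent and distributed as $|\vertices(\map_{l_{i-1}-l_i-2})|$, apply Propositions \ref{thm:lengthscaling} and \ref{thm:volumedist} to handle the macroscopic jumps, and show the microscopic jumps are negligible. You correctly identify one of the two estimates the paper singles out, namely $\expec|\vertices(\map_l)|\le C\,l^2$, and you correctly verify that the scaling constants combine to $\frac{8}{3c_+^2}(\lenconst_\nu/(1+r))^{1/3}$.

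However, there is a genuine gap in your small-jump estimate. You bound the expected contribution of small jumps by $C\,\expec[\sum_{|\Delta l_k|\le\varepsilon n^{2/3}}|\Delta l_k|^2]$ and then claim that ``invoking the functional convergence of the perimeter'' this is asymptotically bounded by $\expec\sum_{|\Delta S^+_{3/2}(t_i)|\le\varepsilon}(\Delta S^+_{3/2}(t_i))^2$. This step does not follow: Skorokhod convergence (even coupled a.s.) controls individual large jumps but says nothing about the \emph{expectation} of a functional of all small jumps; passing to the limit inside an expectation requires a uniform integrability or direct moment bound, which is precisely what is missing. The paper supplies exactly this ingredient as its second stated bound: since $h_r^{(1)}$ is nondecreasing, the downward transition probabilities of the Doob-transformed perimeter process satisfy
\begin{equation*}
\prob(l_{k+1}=l-j\mid l_k=l)=\frac{h_r^{(1)}(l-j)}{h_r^{(1)}(l)}\nu(-j)\le\nu(-j)\le C'j^{-5/2},
\end{equation*}
by \eqref{eq:negnuasymp}. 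With this one computes directly, uniformly in $n$,
\begin{equation*}
\frac{1}{n^{4/3}}\expec\Bigl[\sum_{k<\lfloor nt\rfloor,\;2\le|\Delta l_k|\le\varepsilon n^{2/3}}|\Delta l_k|^2\Bigr]\le \frac{nt}{n^{4/3}}\sum_{j=2}^{\lfloor\varepsilon n^{2/3}\rfloor} C'j^{-1/2}\le C''\,t\,\varepsilon^{1/2},
\end{equation*}
which vanishes as $\varepsilon\downarrow 0$ without any appeal to the limit process. You should replace the ``functional convergence'' step with this elementary bound; the final truncation-of-$\xi_i$ remark is then unnecessary for the $L^1$ estimate, since Markov's inequality applied directly to the first-moment bound already gives the required $o_P(n^{4/3})$ control.
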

\begin{proof}
As in \cite{curien_scaling_2014} we may write
\begin{equation*}
V_k = \sum_{i=1}^{k} \mathbf{1}_{\{l_i < l_{i-1}-1\}} U_i,
\end{equation*}
where $U_i$ are independent random variables distributed like $\mathcal{V}(l_{i-1}-l_i-2)$.
The proof of \cite{curien_scaling_2014} goes through without modification, since it only depends on (the analogues of)  Propositions \ref{thm:lengthscaling} and \ref{thm:volumedist} and on the following two bounds.
It follows from (\ref{eq:volexpeclim}) and (\ref{eq:negnuasymp}) that there exist constants $C,C'>0$ such that $\expec\mathcal{V}(l) \leq C l^2$ for all $l\geq 0$ and $h_r^{(1)}(l-k)\nu(-k)/h_r^{(1)}(l) \leq \nu(-k) \leq C' k^{-5/2}$ for all $l,k\geq 1$.
\end{proof}

\section{Examples}\label{sec:examples}

The one-to-one relation between critical weight sequences and random walks with jump probabilities $\nu$ such that $h_r^{(1)}$ is $\nu$-harmonic on $\Z_{>0}$ for some $r$ provides a simple algorithm to determine the scaling constants for specific infinite $\qseq$-Boltzmann planar maps.
First one imposes desired conditions on the $\nu(k)$ for $k\geq -1$. 
Then the $\nu$-harmonicity of $h_r^{(1)}$ on $\{1,2,3\}$ gives three equations, of which two can be used to fix $r$ and $c_+$, while the remaining corresponds to the criticality condition.
The $\nu$-harmonicity of $h_r^{(1)}$ on $\Z_{\geq 4}$ then allows one to solve for all $\nu(k)$ with $k\leq -3$ (perhaps by using Lemma \ref{thm:nukernel}).
Notice that the latter encode exactly the (unpointed) disk functions $W^{(l)}$.

\subsection{$2p$-angulations}
Let $p\geq 2$ and set $\nu(k) = 0$ for all $-1 \leq k \neq 2p-2$, while $\nu(2p-2)>0$.
Since $2p$-angulations are bipartite, we necessarily have $r=1$. 
It follows from the $\nu$-harmonicity of $h_1^{(1)}$ on $\{1,3\}$ that
\begin{equation*}
\nu(2p-2) = \frac{1}{h_1^{(1)}(2p-1)} = \frac{2^{2p-1}}{p \binom{2p}{p}},\quad  \nu(-2)=h_1^{(1)}(3)-h_1^{(1)}(2p+1)\nu(2p-2) = \frac{p-1}{2p}
\end{equation*}
and therefore $c_+$ and the critical weight $q_{2p}$ are given by
\begin{equation*}
c_+ = \sqrt{\frac{2}{\nu(-2)}} = \sqrt{\frac{4p}{p-1}}\quad \text{and}\quad q_{2p} = c_+^{-2p+2}\nu(2p-2)=2\left(1-\frac{1}{p}\right)^{p-1}\frac{1}{p\binom{2p}{p}}.
\end{equation*}
After some algebra one finds for $l>0$,
\begin{equation*}
\nu(-2l) = \frac{p-1}{2l(2l-1)(p+l-1)}h_1^{(1)}(2l) = 4^{-l}\frac{p-1}{(p+l-1)(2l-1)}\binom{2l}{l}
\end{equation*}
and
\begin{equation*}
\lenconst_\nu = \sum_{k=1}^{\infty}h_1^{(2)}(k+1)\nu(k) = \frac{4}{3}(p-1). 
\end{equation*}
\subsection{$(2p+1)$-angulations} 
This time we set $\nu(k)=0$ for $-1 \leq k \neq 2p-1$, which leads to
\begin{equation*}
\nu(2p-1) = \frac{1}{h_r^{(1)}(2p)} \quad \text{and} \quad \nu(-2) = h_r^{(1)}(3) - h_r^{(1)}(2p+2)\nu(2p-1),
\end{equation*}
while $r$ is the unique root in $(-1,1)$ of the $2p$'th order polynomial
\begin{equation*}
h_r^{(1)}(2p+1) - \frac{1}{2}(3-r)h_r^{(1)}(2p)=0.
\end{equation*}
Only the case $p=1$ admits a simple closed-form solution, leading to the well-known critical values for (type-I) triangulations,
\begin{equation*}
r =2\sqrt{3}-3, \quad
q_3 = \frac{1}{\sqrt{12\sqrt{3}}} \quad\text{and}\quad c_+ = \sqrt{6+4\sqrt{3}}.
\end{equation*}
The corresponding scaling constant for the uniform infinite planar triangulation (UIPT) then becomes
\begin{equation*}
\lenconst_\nu = \frac{1}{2}\left(1+\frac{1}{\sqrt{3}}\right).
\end{equation*}
\subsection{Uniform planar maps with controlled number of edges and vertices}
As a final example let us consider the random planar map $\map\in\maps^{(l)}$ that is sampled according to the critical Boltzmann weight $\map\to a^{|\vertices(\map)|}b^{|\edges(\map)|}$ for suitable $a,b>0$.
Using Euler's formula it is easy to see that these are precisely the $\qseq$-Boltzmann planar maps for which the critical weight sequences $\qseq$ are geometric sequences.
By Corollary \ref{thm:criticalwalk} these are in bijection with step probabilities $\nu$ for which $h_r^{(1)}$ is $\nu$-harmonic on $\Z_{>0}$ and for which $(\nu(k))_{k\geq -1}$ is geometric. 
It is straightforward to construct all the $\nu$ satisfying the latter conditions.
Indeed, we may write $\nu(k) = \alpha \sigma^k$ for $k\geq -1$ with $\alpha>0$ and $0<\sigma<1$, and we consider the $\nu$-harmonicity of $h_r^{(1)}$ on $\{1,2,3\}$.
The first condition gives
\begin{equation*}
1= \sum_{k=0}^{\infty} h_r^{(1)}(k+1)\nu(k) = \frac{\alpha}{(1-\sigma)^{3/2}\sqrt{1+r\sigma}},
\end{equation*}
fixing $\alpha$ in terms of $r$ and $\sigma$, while the second is equivalent to $(3-r)\sigma = 2$.
Since we need $r\in(0,1]$ we should restrict to $\sigma \in (1/2,1)$ and take $\alpha = (1-\sigma)^{3/2}\sqrt{3\sigma-1}$.
Finally, the third conditions reads
\begin{equation*}
\frac{3}{8}(5-2r+r^2) = \sum_{k=-1}^{\infty}h_r^{(1)}(k+3)\nu(k) = \frac{1-\alpha}{\sigma^2} +\frac{2}{c_+^2},
\end{equation*}
which allows us to express $c_+$ in terms of $\sigma$.
It is convenient here to switch parameters from $\sigma$ to $\mathcal{H}$ defined by 
\begin{equation*}
\mathcal{H} := \sum_{k=0}^{\infty} (k+1)\nu(k) = \sqrt{\frac{3\sigma-1}{1-\sigma}},\quad\quad \sigma = \frac{\hopconst^2+1}{\hopconst^2+3},
\end{equation*} 
such that $\mathcal{H}\in(1,\infty)$.
Some simple algebra allows us to express the various constants as
\begin{equation*}
r=\frac{\hopconst^2-3}{\hopconst^2+1},\quad c_+ = \frac{2(\hopconst^2+1)}{(\hopconst-1)^{3/2}\sqrt{\hopconst+3}}, \quad \lenconst_\nu = \frac{1}{2}(\hopconst^2+1).
\end{equation*}
In particular we may parameterize all critical geometric weight sequences by
\begin{equation}\label{eq:geomq}
q_k = \frac{16 \mathcal{H}}{(\mathcal{H}+3)(\mathcal{H}-1)^3}\left(\frac{(\mathcal{H}-1)^{3/2}\sqrt{\mathcal{H}+3}}{2(\mathcal{H}^2+3)}\right)^k,
\end{equation}
which may be compared with \cite{fusy_three-point_2014}, Section 5.\footnote{The parameter $r$ introduced in \cite{fusy_three-point_2014}, Section 5.1, is related to $\hopconst$ by $r=3/\hopconst$.}

The Boltzmann planar maps associated to geometric weight sequences are special in that they are the only Boltzmann planar maps whose dual maps are also distributed as Boltzmann planar maps.
Indeed, the weight $a^{|\vertices(\map)|}b^{|\edges(\map)|}$ is proportional to $(a')^{|\faces(\map)|}(b')^{|\edges(\map)|}$ for suitably chosen $a'$ and $b'$.
This duality corresponds exactly to replacing $\frac{\mathcal{H}-1}{2} \to \frac{2}{\mathcal{H}-1}$ in the weight sequence (\ref{eq:geomq}).
In particular, the self-dual case $\mathcal{H}=3$ yields the weight sequence of uniform planar maps, for the which the Boltzmann weight depends only on the number of edges. 

\appendix
\section{Proofs using results and notation of Miermont}\label{sec:app}

The starting point for the proofs to follow is the following result appearing in \cite{miermont_invariance_2006}.

\begin{proposition}[Miermont \cite{miermont_invariance_2006}, Proposition 1]\label{thm:miermont}
Define
\begin{align}
f^\bullet (x,y) &:= \sum_{k=0}^{\infty}\sum_{k'=0}^{\infty} x^k y^{k'}\binom{2k+k'+1}{k+1}\binom{k+k'}{k} q_{2+2k+k'} \label{eq:fdotdef}\\
f^\diamond(x,y) &:= \sum_{k=0}^{\infty}\sum_{k'=0}^{\infty} x^k y^{k'}\binom{2k+k'}{k}\binom{k+k'}{k} q_{1+2k+k'}\label{eq:fdiamonddef}.
\end{align}
The non-bipartite sequence $\qseq$ is admissible if and only if there exist $\zp,\zd>0$ such that
\begin{equation}\label{eq:zpzdeq}
f^{\bullet}(\zp,\zd) = 1-\frac{1}{\zp},\quad f^{\diamond}(\zp,\zd) = \zd
\end{equation}
and the matrix
\begin{equation}\label{eq:mmatrix}
\mathfrak{M}_{\qseq}(\zp,\zd) := \begin{pmatrix}
0 & 0 & \zp-1 \\
\frac{\zp}{\zd} \partial_x f^\diamond(\zp,\zd) & \partial_yf^\diamond(\zp,\zd) & 0 \\
\frac{(\zp)^2}{\zp-1}\partial_x f^\bullet(\zp,\zd) & \frac{\zp\zd}{\zp-1}\partial_yf^\bullet(\zp,\zd) & 0
\end{pmatrix}
\end{equation}
has spectral radius $\rho_\qseq \leq 1$.
Moreover, if $\qseq$ is admissible the solution $\zp,\zd$ is unique and the partition function (\ref{eq:partitionfunction}) is given by $Z_{\bullet}(\qseq) := W_\bullet^{(2)}(\qseq)-1 = 2\zp+(\zd)^2-1$.
\end{proposition}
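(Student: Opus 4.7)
The plan is to follow Miermont's original route via the Bouttier--Di~Francesco--Guitter (BDG) bijection, so the map problem is translated into a problem about weighted two-type trees (mobiles). The BDG bijection gives a weight-preserving correspondence between pointed rooted planar maps and labeled mobiles with white and black vertices, where the local weights at a vertex are determined by its degree and type, and the face-degree weight $q_k$ of the map becomes a vertex-weight at a black vertex of degree determined by $k$. Concretely, a face of degree $2+2k+k'$ corresponds to a black vertex that, after slicing along its outgoing white neighbours, gives rise to the combinatorial factors $\binom{2k+k'+1}{k+1}\binom{k+k'}{k}$ in (\ref{eq:fdotdef}) (with the analogous factor for odd-degree faces producing (\ref{eq:fdiamonddef})). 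Thus $f^\bullet(x,y)$ and $f^\diamond(x,y)$ are the generating functions counting \emph{decorated} black vertices with $x$ marking one particular flavour of white offspring and $y$ marking the other flavour, and admissibility of $\qseq$ is equivalent to convergence of the weighted sum over all finite labeled mobiles rooted appropriately.

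Second, I would set up the generating functions $Z^\bullet(x,y)$ and $Z^\diamond(x,y)$ of (sub)mobiles rooted at a white corner, split by the two types of offspring allowed in BDG. Standard tree surgery at the root produces a system of fixed-point equations of the form $Z^\bullet = \varphi^\bullet(Z^\bullet,Z^\diamond)$, $Z^\diamond = \varphi^\diamond(Z^\bullet,Z^\diamond)$, which after collecting factors become precisely (\ref{eq:zpzdeq}) with $\zp=Z^\bullet+1$ and $\zd=Z^\diamond$ (the $\zp-1$ shift accounts for a marked unoriented edge incident to the root white vertex). Existence of such a nonnegative solution $(\zp,\zd)$ is therefore automatic once the series converge, and uniqueness follows from the fact that $(\varphi^\bullet,\varphi^\diamond)$ is coordinatewise increasing, so the minimal fixed point coincides with the actual generating function.

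Third, to characterise when these generating functions are finite, I would invoke the classical theory of multi-type Galton--Watson processes. The matrix $\mathfrak{M}_\qseq(\zp,\zd)$ in (\ref{eq:mmatrix}) is (after the BDG encoding) the mean offspring matrix of the size-biased or spine Galton--Watson tree, where the three rows/columns correspond to the three relevant vertex types (``root-edge'', ``white'', ``black''). Standard Perron--Frobenius arguments then give: if $\rho_\qseq < 1$ the tree is subcritical and everything is finite; if $\rho_\qseq = 1$ one is at the boundary and $Z_\bullet(\qseq)$ is still finite by direct estimation; if $\rho_\qseq > 1$ the size-biased generating function has infinite radius of convergence around the sought fixed point, forcing $Z_\bullet(\qseq)=\infty$. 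This gives the equivalence between admissibility and $\rho_\qseq \leq 1$. Finally, the explicit formula $W_\bullet^{(2)}(\qseq)-1 = 2\zp + (\zd)^2 - 1$ is read off from the BDG bijection applied to pointed rooted maps with root-face degree two: cutting along the root edge leaves two white corners (each contributing $\zp$, with the $-1$ subtracting the single-vertex degeneration) or a single black vertex with two legs (contributing $(\zd)^2$), accounting for the even/odd root-edge contribution.

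The main obstacle, and the part I would expect to require real work, is the careful treatment of the spectral-radius boundary case $\rho_\qseq = 1$ and the verification that the three-by-three structure of $\mathfrak{M}_\qseq$ (rather than the naive two-by-two) is the correct linearisation once the rooting conventions of BDG are respected. The combinatorial bookkeeping of markings (root corner, root edge, and the two-coloring of white-white versus white-black edges in the non-bipartite setting) is where miscounts would typically enter, and reconciling this with the asymmetric appearance of $\zp-1$ and $\frac{\zp}{\zp-1}$ in the entries of $\mathfrak{M}_\qseq$ is the delicate point that deserves the most attention.
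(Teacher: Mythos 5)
This proposition is imported verbatim from Miermont \cite{miermont_invariance_2006} as the starting input for the appendix; the paper itself provides no proof of it, so there is no in-paper argument to compare your sketch against. The paper's use of it is strictly as a black box: all the later appendix work (Propositions \ref{thm:pointeddisk}, \ref{thm:walksequence}, Lemma \ref{thm:volumeexpansion}) merely \emph{reformulates} the data $(\zp,\zd,\mathfrak{M}_\qseq)$ in terms of $(c_\pm,r,\nu)$, assuming Miermont's characterisation holds.

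As a reconstruction of Miermont's argument, your outline identifies the right ingredients (BDG bijection, fixed-point equations for the mobile generating functions, multi-type Galton--Watson criticality), but it stays at a level where the delicate points you yourself flag at the end are exactly where the content lies, and they are not dispatched. In particular: (i) the equivalence ``$\rho_\qseq\le 1$ iff admissible'' in the critical case $\rho_\qseq=1$ is not a corollary of Perron--Frobenius; one has to show that the multi-type tree generating functions remain finite \emph{at} criticality, which Miermont does via a careful analysis of the fixed-point system rather than via ``direct estimation''; (ii) the uniqueness of $(\zp,\zd)$ does not follow merely from monotonicity of the fixed-point map (monotone maps can have multiple fixed points); the correct argument again uses the spectral-radius bound; and (iii) the last paragraph's combinatorial accounting of the formula $Z_\bullet=2\zp+(\zd)^2-1$ is stated loosely enough that the offset constant is not actually derived --- note that the naive decomposition $Z_\bullet=2Z_\bullet^++Z_\bullet^0$ with $\zp=Z_\bullet^++1$, $(\zd)^2=Z_\bullet^0$ (which is what the paper itself uses inside the proof of Proposition \ref{thm:pointeddisk}) gives $2\zp+(\zd)^2-2$, so the normalisation conventions need to be pinned down explicitly to land on the stated $-1$. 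If you intend this to be a self-contained proof rather than a road map, each of these three points needs its own argument.
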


The case of bipartite weight sequences $\qseq$ is not quite covered by this result, but was solved earlier in \cite{marckert_invariance_2007}, Proposition 1.
There it was shown that a bipartite weight sequence is admissible if and only if $f^\bullet(\zp,0)=1-1/\zp$ has a solution such that $(\zp)^2 \partial_xf^\bullet(\zp,0) \leq 1$.
Therefore Proposition \ref{thm:miermont} can be made to cover arbitrary weight sequences by allowing $\zd\geq0$ and replacing the condition on $\mathfrak{M}_{\qseq}(\zp,\zd)$ by $(\zp)^2 \partial_xf^\bullet(\zp,0) \leq 1$ whenever $\zd=0$.

\subsection{Proof of Proposition \ref{thm:pointeddisk}}\label{sec:proofpointeddisk}
\begin{proof}
Let $\qseq$ be an admissible weight sequence, i.e., $Z_{\bullet}(\qseq) := W_\bullet^{(2)}(\qseq)-1 < \infty$.
For any $(\map,v)\in\maps_{\bullet}$ one may label the vertices of $\map$ by their graph distance to the marked vertex $v$.
The set of pointed rooted maps naturally partitions into $\maps_\bullet = \maps_\bullet^+ \bigcup\maps_\bullet^-\bigcup\maps_\bullet^0$ based on whether the labels increase, decrease or remain constant along the root edge of the map.
The partition function decomposes accordingly as 
\begin{equation*}
Z_{\bullet} = Z_{\bullet}^+ + Z_{\bullet}^- + Z_{\bullet}^0 = 2Z_{\bullet}^++Z_{\bullet}^0,
\end{equation*}
where we have defined
\begin{equation*}
Z_\bullet^{\epsilon} := \sum_{\map\in\maps_\bullet^\epsilon}\prod_{f\in\faces(\map)} q_{\deg(f)}
\end{equation*}
for $\epsilon\in\{-,0,+\}$.
Clearly $Z_{\bullet}<\infty$ then implies $Z_{\bullet}^{\epsilon}<\infty$.

According to the Bouttier-Di~Francesco-Guitter bijection \cite{bouttier_planar_2004} $\zp := Z^+_\bullet+1$ is the generating function for labeled mobiles rooted at a labeled vertex, while $z^{\diamond} := \sqrt{Z_\bullet^0}$ is the generating function for so-called half-mobiles.
We will see that it is straightforward to write down an expression for the pointed disk function $W_\bullet^{(l)}$ using these generating functions.
Similar expression can be found in \cite{bouttier_planar_2004}.

\begin{figure}[t]
\begin{center}
\includegraphics[width=\linewidth]{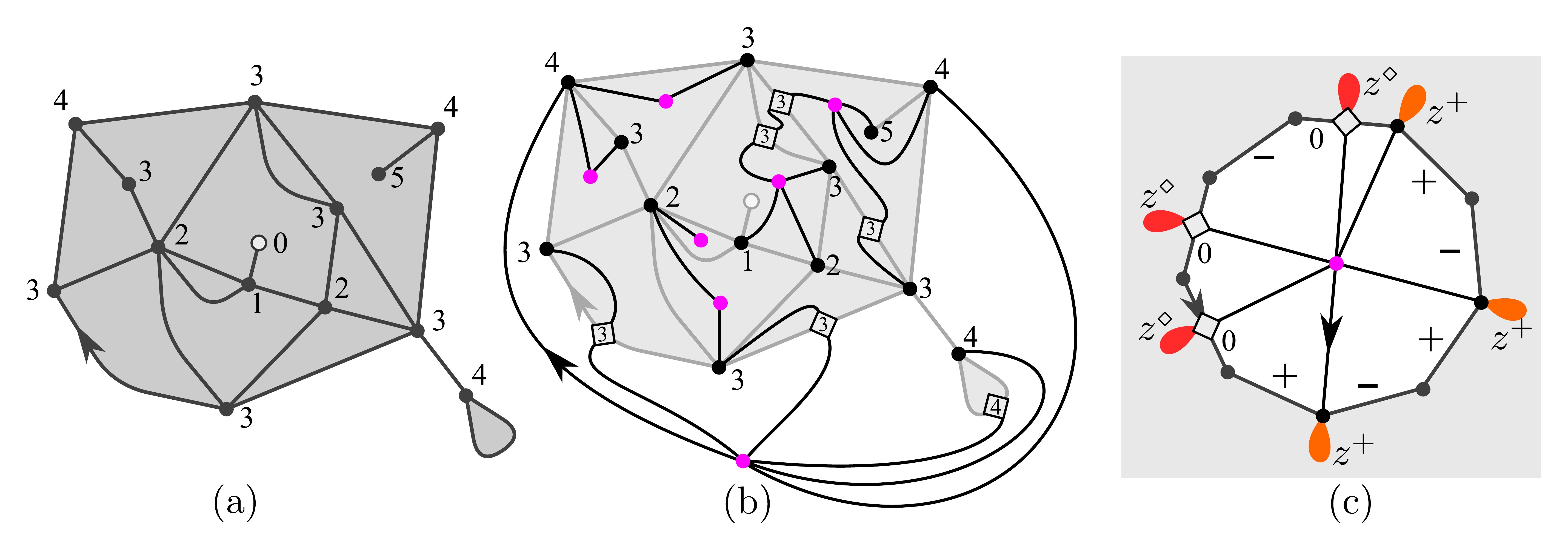}
\end{center}
\caption{The Bouttier-Di~Francesco-Guitter bijection gives a map between pointed rooted planar maps (figure a) and labeled mobiles (figure b), which are rooted planar trees with three types of labeled or unlabeled vertices. The numbers $\zp,\zd$ give the total weight of all possible submobiles appearing in a branch ending at either a black vertex or a diamond shaped vertex (figure c). The total weight of all mobiles sharing the same ``distance change sequence'' $(x_i)_{i=1}^l = (0,1,-1,1,-1,1,0,-1,0)$ is thus $(\zp)^3(\zd)^3$. } %
\label{fig:bdfg}
\end{figure}

Given $(\map,v)\in\maps_\bullet^{(l)}$, let $(e_i)_{i=1}^{l}$ be the sequence of oriented edges around the root face in counterclockwise order (or clockwise when the root face is taken to be the outer face as in figure \ref{fig:bdfg}a) starting with the root edge and define the sequence $(x_i)_{i=1}^{l}\in\{-1,0,1\}^l$ by setting $x_i$ equal to the change in the distance to $v$ along the edge $e_i$.
The labeled mobile corresponding to $(\map,v)$, rooted at the vertex corresponding to the root face of $\map$, decomposes into $k$ mobiles that are rooted at a labeled vertex and $k'$ half-mobiles, where $k$ and $k'$ are respectively the number of $1$'s and $0$'s in the sequence $(x_i)_{i=1}^{l}$ (see figure \ref{fig:bdfg}c).
Since $\sum_{i=1}^lx_i=0$, we necessarily have $k'+2k = l$.
It follows from the Bouttier-Di~Francesco-Guitter bijection that any such sequence $(x_i)_{i=1}^{l}$ satisfying the latter condition can occur.
Hence we may identify
\begin{equation*}
W_{\bullet}^{(l)} = \sum_{k=0}^{\lfloor l/2\rfloor} \frac{l!}{(k!)^2(l-2k)!} (\zp)^k (z^{\diamond})^{l-2k},
\end{equation*}
which is necessarily finite.
Its generating function is
\begin{align*}
W_{\bullet}(z) &= \sum_{l=0}^{\infty}W_{\bullet}^{(l)} z^{-l-1} = \sum_{k=0}^{\infty}\sum_{l=2k}^{\infty}\frac{1}{z}\binom{2k}{k}\binom{l}{2k}\left(\frac{\zp}{(z^{\diamond})^2}\right)^k\left(\frac{z^{\diamond}}{z}\right)^l\nonumber\\
&= \sum_{k=0}^{\infty} \binom{2k}{k} \frac{1}{z-z^{\diamond}} \left(\frac{\zp}{(z-z^{\diamond})^2}\right)^k = \frac{1}{\sqrt{(z-z^{\diamond})^2-4\zp}},
\end{align*}
where in the third equality we used the identity $\sum_{l=2k}^\infty \binom{l}{2k} y^{l-2k} = [x^{2k}](1-y-x)^{-1} =  (1-y)^{-2k-1}$ for $|y|<1$.
We see that $W_\bullet(z)$ is of the form (\ref{eq:pointeddisk}) with
\begin{equation*}
c_{\pm} = z^{\diamond} \pm 2\sqrt{\zp}.
\end{equation*}
The stated bounds on $c_\pm$ follow directly from the fact that by construction $\zd=0$ if and only if $q_{k}=0$ for all odd $k$, while $\zp=Z_\bullet^++1>1$.
\end{proof}

\subsection{Proof of Proposition \ref{thm:walksequence}}\label{sec:proofwalksequence}

\begin{proof}
Notice that we can rewrite (\ref{eq:fdotdef}) and (\ref{eq:fdiamonddef}) in terms of the generating function $U'(z)$ in (\ref{eq:potential}) as 
\begin{align}
f^\bullet (x,y) &= [z^{-1}] \frac{z-y}{2x}\frac{U'(z)}{\sqrt{(z-y)^2-4x}},\label{eq:genfdot}\\
f^\diamond(x,y) &= [z^{-1}] \frac{U'(z)}{\sqrt{(z-y)^2-4x}}.\label{eq:genfdiamond}
\end{align}
If we set $c_\pm = \zd \pm 2\sqrt{\zp}$ and $r=-c_-/c_+$, we may rewrite $f^\diamond(\zp,\zd)$ in terms of $c_+$, $r$ and $\nu(k)$ for $k\geq -1$ using (\ref{eq:nudef}),
\begin{align*}
 f^\diamond(\zp,\zd)& = [z^{-1}]\frac{U'(z)}{\sqrt{(z-c_+)(z+c_+r)}}= [y^{-1}]\frac{U'(c_+y)}{\sqrt{(y-1)(y+r)}} \nonumber\\
 &= [y^{-1}]\sum_{k=1}^{\infty} q_k (yc_+)^{k-1}\sum_{l=0}^{\infty} h_r^{(0)}(l) y^{-l-1}= c_+\sum_{k=-1}^{\infty} \nu(k) h_r^{(0)}(k+1).
\end{align*}
Since $\zd=c_+(1-r)/2=c_+h_r^{(0)}(1)$ the second equation in (\ref{eq:zpzdeq}) is equivalent to
\begin{equation}\label{eq:harm1}
\sum_{k=-1}^{\infty} \nu(k) h_r^{(0)}(k+1) = h_r^{(0)}(1).
\end{equation}
Similarly we may evaluate
\begin{align*}
f^\bullet(\zp,\zd) &= [z^{-1}]\frac{z-\zd}{2\zp}\frac{U'(z)}{\sqrt{(z-c_+)(z+r c_+)}} \nonumber\\
&= [y^{-1}] \frac{8}{(1+r)^2}\left(y-\frac{1-r}{2}\right)\frac{U'(c_+y)/c_+}{\sqrt{(y-1)(y+r)}}\\
&=\frac{8}{(1+r)^2}\sum_{k=-1}^{\infty}\nu(k)\left(h_r^{(0)}(k+2)-\frac{1-r}{2}h_r^{(0)}(k+1)\right).\nonumber
\end{align*}
In combination with (\ref{eq:harm1}) the first equation in (\ref{eq:zpzdeq}) is equivalent to
\begin{align*}
\sum_{k=-1}^{\infty}\nu(k)h_r^{(0)}(k+2) &= \frac{(1+r)^2}{8}\left(1-\frac{16}{(1+r)^2c_+^2}\right) + \frac{1-r}{2}h_r^{(0)}(1) \nonumber\\
&= \frac{1}{8}(3-2r+3r^2)-\frac{2}{c_+^2} = h_r^{(0)}(2) - \nu(-2),
\end{align*}
where we used $\nu(-2)=2/c_+^2$.
Since $h_r^{(0)}(0)=1$, this is equivalent to
\begin{equation}\label{eq:harm2}
\sum_{k=-2}^{\infty}\nu(k)h_r^{(0)}(k+2) = h_r^{(0)}(2).
\end{equation}

Let us introduce the notation
\begin{equation*}
A_k := [z^{-1}]\frac{(z-\zd)^kU'(z)}{\left((z-\zd)^2-4\zp\right)^{3/2}},
\end{equation*}
which are nonnegative numbers for $k\geq 0$.
Assuming $\qseq$ is non-bipartite, one may evaluate $\mathfrak{M}_{\qseq}(\zp,\zd)$ using (\ref{eq:genfdot}) and (\ref{eq:genfdiamond}) to
\begin{equation*}
\mathfrak{M}_{\qseq}(\zp,\zd) = \begin{pmatrix}
0 & 0 & \zp-1 \\
2 \frac{\zp}{\zd}A_0 & A_1 & 0 \\
\frac{1-\zp(1-A_1)}{\zp-1} & 2\frac{\zp\zd}{\zp-1}A_0 & 0
\end{pmatrix},
\end{equation*}
where in the calculation of $\partial_xf^\bullet(\zp,\zd)$ we used that (\ref{eq:zpzdeq}) implies that $A_3-4\zp A_1=2\zp-2$.
 
Let $p(\lambda)=\det(\mathfrak{M}_{\qseq}-\lambda I)$ be the characteristic polynomial of the matrix $\mathfrak{M}_{\qseq}(\zp,\zd)$.
We claim that the statement that the spectral radius $\rho_\qseq$ is less or equal to one, which means that $p(\lambda)$ has no roots greater than one, is equivalent to $A_1+2\sqrt{z^+}A_0 \leq 1$.
In the case that $A_1\geq 1$ both statements are false, since $p(A_1)=(2\zp A_0)^2 > 0$, so let us assume $A_1< 1$.
In that case $A_1+2\sqrt{z^+}A_0 \leq 1$ is equivalent to $p(1)=\zp(4\zp A_0^2-(1-A_1)^2)\leq 0$, which is a necessary condition for the spectral radius to be less or equal to one.
It is also sufficient because one can check that $p(\lambda)-(2\zp A_0)^2=0$ has three distinct roots smaller than one, namely $A_1$ and $\pm\sqrt{1-\zp(1-A_1)}$.

The condition $A_1+2\sqrt{z^+}A_0 \leq 1$ is also the right one in the bipartite case, since then $A_0=0$ while $(\zp)^2\partial_xf^\bullet\leq1$ is equivalent to $A_1\leq 1$.
We conclude that in general $\qseq$ is admissible if and only if for some solution $(\zp,\zd)$ to (\ref{eq:zpzdeq}), or equivalently for some solution $(c_+,r)$ to (\ref{eq:harm1}) and (\ref{eq:harm2}), the following condition is satisfied:
\begin{align*}
1 \geq& A_1+2\sqrt{\zp}A_0 = (\partial_y +\sqrt{\zp}\partial_x)f^\diamond(\zp,\zd) = [z^{-1}]\frac{(z-\zd+2\sqrt{\zp})U'(z)}{((z-\zd)^2-4\zp)^{3/2}}\nonumber\\
&= [z^{-1}] \frac{U'(z)}{(z-c_+)^{3/2}\sqrt{z-c_-}} = [y^{-1}]\frac{U'(y c_+)/c_+}{(y-1)^{3/2}\sqrt{y+r}}=\sum_{k=0}^{\infty}\nu(k)h_r^{(1)}(k+1).
\end{align*}
\end{proof}

\subsection{Proof of Lemma \ref{thm:volumeexpansion}}\label{sec:proofvolumeexpansion}

\begin{proof}
In the notation of the other proofs in this appendix let us write $c_{\pm} = c_{\pm}(g=1) = \zd \pm 2\sqrt{\zp}$. 
Define the function $\mathbf{f} : \R^2 \to \left(\R\bigcup\{\infty\}\right)^2$ in terms of the functions $f^\diamond$ and $f^\bullet$ in (\ref{eq:fdotdef}) and (\ref{eq:fdiamonddef}) associated to $\qseq$ by
\begin{equation*}
\mathbf{f}(x,y) = \left(1-y^{-1}f^{\diamond}(x,y), 1-x+xf^\bullet(x,y)\right).
\end{equation*}
Applying Proposition \ref{thm:miermont} to the weight sequence $\qseq_g$ with $0<g\leq 1$ we find that there exists a solution to $\mathbf{f}(x,y)=(0,1-g)$.
From Proposition \ref{thm:walksequence} one may deduce that there is a unique such solution such that
\begin{equation}\label{eq:admcondexpansion}
(\partial_y + \sqrt{x} \partial_x)f^{\diamond}(x,y) \leq 1
\end{equation}
and the values $c_{\pm}(g)$ are given in terms of this solution by
\begin{equation}\label{eq:cpmexpansion}
c_{\pm}(g) = \frac{1}{\sqrt{g}}(y \pm 2\sqrt{x}).
\end{equation}

It follows from the regular criticality of $\qseq$ that the function $\mathbf{f}$ is analytic in the neighborhood of $(\zp,\zd)$, where it takes value $\mathbf{f}(\zp,\zd)=(0,0)$.
So in order to find the solutions $c_{\pm}(g)$ for $g$ close to $1$ it suffices to Taylor expand $\mathbf{f}$ around $(\zp,\zd)$.
The first-order partial derivatives, which appear in the matrix $\mathfrak{M}_{\qseq}(\zp,\zd)$ in (\ref{eq:mmatrix}), were already computed in the proof of Proposition \ref{thm:walksequence}.

Let us treat the bipartite and non-bipartite bases separately, starting with the latter.
Writing $\Delta x := x - \zp$ and $\Delta y := y - \zd$ we find after some straightforward manipulation
\begin{equation*}
\mathbf{f}(\zp+\Delta x,\zd+\Delta y) = - 2 A_0 (\Delta x - \sqrt{\zp}\Delta y)\, \left( \frac{1}{\zd}, \sqrt{\zp} \right) + \mathcal{O}(\Delta x^2+\Delta y^2), 
\end{equation*}
where we used that the criticality of $\qseq$ implies that $A_1+2\sqrt{\zp}A_0=1$.
It follows that $\lim_{g\uparrow 1} \Delta x/\Delta y = \sqrt{\zp}$ and that $\rho(x,y):=(-\zd\sqrt{\zp},1)\cdot\mathbf{f}(x,y)$ is stationary at $\Delta x = \Delta y = 0$.

One may check explicitly using the generating function identities (\ref{eq:genfdot}) and (\ref{eq:genfdiamond}) that
\begin{align*}
\frac{1}{2\sqrt{\zp}} (\partial_y + \sqrt{\zp} \partial_x)^2 \rho(x,y)\Bigg|_{\substack{x=\zp \\y=\zd}}\!\!\!\!\! &=\Big[ \compactfrac{1}{2} (\partial_y + \sqrt{\zp} \partial_x)^2f^\diamond(x,y) + \compactfrac{1}{2} \sqrt{\zp}(\partial_y + \sqrt{\zp} \partial_x)^2f^\bullet(x,y) \nonumber\\
&\quad+ (\partial_y + \sqrt{\zp} \partial_x)f^\bullet(x,y)\Big]_{\substack{x=\zp \\y=\zd}}\\
&=(\partial_y + \sqrt{x} \partial_x)^2f^\diamond(x,y)\Big|_{\substack{x=\zp \\y=\zd}}\nonumber
\end{align*}
With the help of (\ref{eq:potential}), (\ref{eq:hfundef}), and (\ref{eq:lenconst}), the latter may be evaluated to
\begin{align}
(\partial_y + \sqrt{x} \partial_x)^2f^\diamond(x,y)\Big|_{\substack{x=\zp \\y=\zd}}&=3[z^{-1}] \frac{U'(z)}{(z-c_+)^{5/2}\sqrt{z-c_-}}=\frac{3}{c_+} [y^{-1}]\frac{U'(yc_+)/c_+}{(y-1)^{5/2}\sqrt{y+r}}\nonumber\\
&= \frac{3}{c_+}\sum_{k=1}^{\infty}h_r^{(2)}(k+1)\nu(k) = \frac{3}{c_+} \lenconst_\nu.\label{eq:ddfdiamond}
\end{align}
Therefore, using $\sqrt{\zp}=(1+r)c_+/4$, we find that to second order in $\Delta y$ we have
\begin{equation*}
(-\zd\sqrt{\zp},1)\cdot\mathbf{f}(\zp+\sqrt{\zp}\Delta y,\zd+\Delta y) = \frac{3}{4}(1+r)\lenconst_\nu \Delta y^2 + \mathcal{O}(\Delta y^3).
\end{equation*}
It follows that for $g$ close enough to $1$ there are two solutions to $\mathbf{f}(x,y)=(0,1-g)$, which are given to leading order by
\begin{equation}\label{eq:xysol}
(x,y) = (\zp,\zd) \pm \sqrt{\frac{1-g}{\frac{3}{4}(1+r)\lenconst_\nu}}(\sqrt{\zp},1) + \mathcal{O}(1-g).
\end{equation}
It remains to check that one of them satisfies (\ref{eq:admcondexpansion}). 
Using (\ref{eq:ddfdiamond}) we find
\begin{align*}
(\partial_y + \sqrt{x}\partial_x)f^\diamond(x,y) &= 1 + \Delta y \left.(\partial_y + \sqrt{x}\partial_x)^2f^\diamond(x,y)\right|_{\substack{x=\zp\\ y=\zd}} + \mathcal{O}(1-g)\nonumber\\
&= 1 \pm \sqrt{\frac{12 \lenconst_\nu(1-g)}{(1+r)c_+^2}}  + \mathcal{O}(1-g), 
\end{align*}
which in case of a minus sign is smaller or equal to $1$ for $g$ close enough to $1$.
Finally, combining (\ref{eq:cpmexpansion}) with the version of (\ref{eq:xysol}) with a minus sign gives
\begin{equation}\label{eq:cpmexpansion2}
c_{+}(g) = c_{+} + 2\sqrt{\frac{1-g}{\frac{3}{4}(1+r)\lenconst_\nu}} + \mathcal{O}(1-g),\quad\text{while}\quad c_-(g) = c_- + \mathcal{O}(1-g),
\end{equation}
which finishes the proof in the non-bipartite case.

If $\qseq$ is bipartite, then $y=0$ for any $0<g\leq 1$. 
Therefore we only need to consider the second component $\mathbf{f}_2(x,0) = 1-x+xf^{\bullet}(x,0)=1-g$ of the equation above.
When $\qseq$ is critical $\mathbf{f}_2(x,0)$ is stationary at $x=\zp$ and one may check that
\begin{equation*}
\mathbf{f}_2(\zp+\Delta x,0) = \frac{3}{2\zp} \lenconst_\nu \Delta x^2 + \mathcal{O}(\Delta x^3).
\end{equation*}
This implies that for $g$ close to $1$ there are two solutions $x = \zp \pm \sqrt{\frac{2}{3}(1-g)\zp/\lenconst} + \mathcal{O}(1-g)$, of which the smaller one satisfies $(\partial_y+\sqrt{x}\partial_x)f^\diamond(x,0) = \partial_x\mathbf{f}_2(x,0) + 1 \leq 1$.
It follows that $c_{+}(g)$ obeys the same formula (\ref{eq:cpmexpansion2}) as in the non-bipartite case, while $c_-(g)=-c_+(g)$.
\end{proof}

\subsection{Proof of Lemma \ref{thm:rovdisk}}\label{sec:proofrovdisk}

\begin{proof}
We already noticed that the radius of convergence was at least equal to one, so we only need to show that it cannot be larger.
The regular critical case is covered by Lemma \ref{thm:volumeexpansion}, since it shows that $c_+(g)$ is non-analytic at $g=1$ and therefore the same holds for $W_g^{(l)}$. 

Let us assume that $\qseq$ is admissible and that there exists a $g_0>1$ such that $W_{g_0}^{(l)}(\qseq)<\infty$.
We claim that $\qseq$ can then not be non-regular critical, which would complete the proof.
Indeed, it follows from the identity $f^\diamond_{\qseq_g}( x, y ) = \frac{1}{\sqrt{g}}f^{\diamond}_{\qseq}( g x, \sqrt{g} y)$ that 
\begin{equation*}
f^\diamond_{\qseq}(g_0 \zp(\qseq_g), \sqrt{g_0} \zd(\qseq_g)) = \sqrt{g_0} \zd(\qseq_{g_0}) < \infty.
\end{equation*}
Since they are generating functions in $g$ of non-negative quantities, $\zp(\qseq_g)$ and $\zd(\qseq_g)$ are necessarily non-decreasing functions of $g \in [0,g_0]$.
Therefore there exists an $\epsilon > 0$ such that $f^\diamond_{\qseq}((1+\epsilon)\zp(\qseq),(1+\epsilon) \zd(\qseq)) < \infty$, but this is precisely Miermont's criterion for $\qseq$ being regular.
\end{proof}

\subsection*{Acknowledgments}
The author thanks Nicolas Curien and Jan Ambj\o rn for useful discussions and the anonymous referees for valuable suggestions.
The author acknowledges support from the ERC-Advance grant 291092,
``Exploring the Quantum Universe'' (EQU).

\bibliographystyle{hsiam} 

\begin{thebibliography}{10}

\bibitem{ambjorn_multi-point_2014}
{\sc J.~Ambj{\o}rn and T.~Budd}, {\em Multi-point functions of weighted cubic
  maps}, Ann. Inst. H. Poincar\'e D (to appear),  (2014), \arxiv{1408.3040}.

\bibitem{ambjorn_quantum_1997}
{\sc J.~Ambj{\o}rn, B.~Durhuus, and T.~Jonsson}, {\em Quantum geometry : {A}
  {Statistical} field theory approach}, Cambridge {Monographs} on
  {Mathematical} {Physics}, Cambridge University Press, Cambridge, England,
  1997.

\bibitem{ambjorn_scaling_1995}
{\sc J.~Ambj{\o}rn and Y.~Watabiki}, {\em Scaling in quantum gravity}, Nucl.
  Phys. B, 445 (1995), pp.~129--142, \arxiv{hep-th/9501049}.

\bibitem{angel_growth_2002}
{\sc O.~Angel}, {\em Growth and {Percolation} on the {Uniform} {Infinite}
  {Planar} {Triangulation}},  (2002), \arxiv{math/0208123}.

\bibitem{angel_percolations_2013}
{\sc O.~Angel and N.~Curien}, {\em Percolations on random maps {I}:
  {Half}-plane models}, Ann. Inst. H. Poincar{\'e} Probab. Statist., 51 (2015),
  pp.~405--431, \arxiv{1301.5311}.

\bibitem{angel_uniform_2003}
{\sc O.~Angel and O.~Schramm}, {\em Uniform {Infinite} {Planar}
  {Triangulations}}, Commun. Math. Phys., 241 (2003), pp.~191--213,
  \arxiv{math/0207153}.

\bibitem{benjamini_simple_2013}
{\sc I.~Benjamini and N.~Curien}, {\em Simple random walk on the uniform
  infinite planar quadrangulation: subdiffusivity via pioneer points}, Geom.
  Funct. Anal., 23 (2013), pp.~501--531, \arxiv{1202.5454}.

\bibitem{bertoin_conditioning_1994}
{\sc J.~Bertoin and R.~A. Doney}, {\em On {Conditioning} a {Random} {Walk} to
  {Stay} {Nonnegative}}, Ann. Probab., 22 (1994), pp.~2152--2167.

\bibitem{bjornberg_recurrence_2014}
{\sc J.~E. Bj{\"o}rnberg and S.~{\"O}. Stef{\'a}nsson}, {\em Recurrence of
  bipartite planar maps}, Electron. J. Probab., 19 (2014), \arxiv{1311.0178}.

\bibitem{bjornberg_site_2015}
\leavevmode\vrule height 2pt depth -1.6pt width 23pt, {\em On {Site}
  {Percolation} in {Random} {Quadrangulations} of the {Half}-{Plane}}, J. Stat.
  Phys., 160 (2015), pp.~336--356, \arxiv{1412.6327}.

\bibitem{borot_more_2012}
{\sc G.~Borot, J.~Bouttier, and E.~Guitter}, {\em More on the {O}(n) model on
  random maps via nested loops: loops with bending energy}, J. Phys. A: Math.
  Theor., 45 (2012), p.~275206, \arxiv{1202.5521}.

\bibitem{borot_recursive_2012}
\leavevmode\vrule height 2pt depth -1.6pt width 23pt, {\em A recursive approach
  to the {O}(n) model on random maps via nested loops}, J. Phys. A: Math.
  Theor., 45 (2012), p.~045002, \arxiv{1106.0153}.

\bibitem{bouttier_planar_2004}
{\sc J.~Bouttier, P.~Di~Francesco, and E.~Guitter}, {\em Planar {Maps} as
  {Labeled} {Mobiles}}, Electron. J. Comb., 11 (2004), p.~R69,
  \arxiv{math/0405099}.

\bibitem{bouttier_planar_2011}
{\sc J.~Bouttier and E.~Guitter}, {\em Planar {Maps} and {Continued}
  {Fractions}}, Commun. Math. Phys., 309 (2011), \arxiv{1007.0419}.

\bibitem{caravenna_invariance_2008}
{\sc F.~Caravenna and L.~Chaumont}, {\em Invariance principles for random walks
  conditioned to stay positive}, Ann. Inst. H. Poincar{\'e} Probab. Statist.,
  44 (2008), pp.~170--190, \arxiv{math/0602306}.

\bibitem{chaumont_conditionings_1996}
{\sc L.~Chaumont}, {\em Conditionings and path decompositions for {L{\'e}vy}
  processes}, Stoch. proc. appl., 64 (1996), pp.~39--54.

\bibitem{cori_planar_1981}
{\sc R.~Cori and B.~Vauquelin}, {\em Planar maps are well labeled trees}, Can.
  J. Math., 33 (1981), pp.~1023--1042.

\bibitem{curien_glimpse_2014}
{\sc N.~Curien}, {\em A {Glimpse} of the {Conformal} {Structure} of {Random}
  {Planar} {Maps}}, Commun. Math. Phys., 333 (2014), pp.~1417--1463,
  \arxiv{1308.1807}.

\bibitem{curien_scaling_2014}
{\sc N.~Curien and J.-F. Le~Gall}, {\em Scaling limits for the peeling process
  on random maps}, Ann. Inst. H. Poincar\'e Probab. Statist. (to appear),
  (2014), \arxiv{1412.5509}.

\bibitem{curien_hull_2014}
{\sc N.~Curien and J.-F. Le~Gall}, {\em The hull process of the {Brownian}
  plane}, Probab. Theory Relat. Fields,  (2015), pp.~1--45, \arxiv{1409.4026}.

\bibitem{doney_martin_1998}
{\sc R.~A. Doney}, {\em The {Martin} boundary and ratio limit theorems for
  killed random walks}, J. London Math. Soc., 58 (1998), pp.~761--768.

\bibitem{fusy_three-point_2014}
{\sc {\'E}.~Fusy and E.~Guitter}, {\em The three-point function of general
  planar maps}, J. Stat. Mech., 2014 (2014), p.~P09012, \arxiv{1403.3514}.

\bibitem{le_gall_scaling_2011}
{\sc J.-F. Le~Gall and G.~Miermont}, {\em Scaling limits of random planar maps
  with large faces}, Ann. Probab., 39 (2011), pp.~1--69, \arxiv{0907.3262}.

\bibitem{marckert_invariance_2007}
{\sc J.-F. Marckert and G.~Miermont}, {\em Invariance principles for random
  bipartite planar maps}, Ann. Probab., 35 (2007), pp.~1642--1705,
  \arxiv{math/0504110}.

\bibitem{menard_percolation_2014}
{\sc L.~M{\'e}nard and P.~Nolin}, {\em Percolation on uniform infinite planar
  maps}, Electron. J. Probab., 19 (2014), \arxiv{1302.2851}.

\bibitem{miermont_invariance_2006}
{\sc G.~Miermont}, {\em An invariance principle for random planar maps}, DMTCS
  Proceedings,  (2006).

\bibitem{richier_universal_2014}
{\sc L.~Richier}, {\em Universal aspects of critical percolation on random
  half-planar maps}, Electronic Journal of Probability, 20 (2015), pp.~1--45,
  \arxiv{1412.7696}.

\bibitem{schaeffer_conjugaison_1998}
{\sc G.~Schaeffer}, {\em Conjugaison d'arbres et cartes combinatoires
  al{\'e}atoires}, PhD thesis, L'Universite Bordeaux I, Bordeaux, 1998.

\bibitem{stephenson_local_2014}
{\sc R.~Stephenson}, {\em Local convergence of large critical multi-type
  {Galton}-{Watson} trees and applications to random maps},  (2014),
  \arxiv{1412.6911}.

\bibitem{tutte_census_1963}
{\sc W.~T. Tutte}, {\em A census of planar maps}, Canad. J. Math., 15 (1963),
  pp.~249--271.

\bibitem{watabiki_construction_1995}
{\sc Y.~Watabiki}, {\em Construction of non-critical string field theory by
  transfer matrix formalism in dynamical triangulation}, Nucl. Phys. B, 441
  (1995), pp.~119--163, \arxiv{hep-th/9401096}.

\bibitem{zolotarev_one-dimensional_1986}
{\sc V.~M. Zolotarev}, {\em One-dimensional {Stable} {Distributions}}, vol.~65
  of Translations of {Mathematical} {Monographs}, American Math. Soc.,
  Providence, June 1986.

\end{thebibliography}

\end{document}